\documentclass{article}
\usepackage{arxiv}
\usepackage[utf8]{inputenc}
\usepackage[T1]{fontenc}
\usepackage[numbers,sort&compress]{natbib}

\usepackage{hyperref}
\usepackage{xurl}
\usepackage{doi}
\usepackage{amssymb}
\usepackage{amsthm}
\usepackage{float}
\usepackage{flafter}
\usepackage{caption}

\AtBeginDocument{%
  }

\renewcommand{\shorttitle}{TrajMark}
\renewcommand{\headeright}{}
\renewcommand{\undertitle}{}
\date{}
\hypersetup{
  colorlinks=true,
  linkcolor=blue,
  citecolor=blue,
  urlcolor=blue,
  pdftitle={TrajMark: Ownership Attribution and Segment-Level Tamper Localization for Coding-Agent Trajectories},
  pdfauthor={Bokang Zeng, Zheng Gao, Xiaoyu Li, Xiaoyan Feng, Jiaojiao Jiang},
  pdfkeywords={coding agents, behavioral watermarking, provenance, ownership attribution, tamper detection, trajectory localization}
}
\newtheorem{theorem}{Theorem}[section]
\theoremstyle{definition}
\newtheorem{definition}[theorem]{Definition}
\theoremstyle{plain}
\providecommand{\Description}[2][]{}

\usepackage{mathtools}
\usepackage{booktabs}
\usepackage{longtable}
\usepackage{multirow}
\usepackage{threeparttable}
\usepackage{array}
\usepackage{tabularx}
\usepackage{adjustbox}
\usepackage{graphicx}
\graphicspath{{figures/}}
\usepackage{placeins}
\usepackage{xcolor}
\usepackage{listings}
\usepackage[most]{tcolorbox}
\usepackage{xspace}
\usepackage{tikz}
\usetikzlibrary{arrows.meta,positioning,fit,calc}
\usepackage{enumitem}
\usepackage{microtype}
\usepackage{algorithm}
\usepackage{algpseudocode}

\algrenewcommand\algorithmicrequire{\textbf{Require:}}
\algrenewcommand\algorithmicensure{\textbf{Ensure:}}
\algrenewcommand\algorithmiccomment[1]{\hfill\(\triangleright\)~\textit{#1}}

\newcommand{\tool}{TrajMark\xspace}
\newcommand{\ownerid}{\ensuremath{\mathbf{w}}}
\newcommand{\GF}{\ensuremath{\mathrm{GF}(2)}}

\newcommand{\hmac}{\operatorname{HMAC}}

\newcommand{\agentmarku}{AgentMark-U\xspace}
\newcommand{\acthook}{ActHook-style\xspace}
\definecolor{trajdetailframe}{HTML}{6F748F}
\definecolor{trajdetailback}{HTML}{F7F7FA}
\newtcblisting{trajdetail}[1]{
  enhanced,
  breakable,
  listing only,
  listing engine=listings,
  sharp corners,
  colback=trajdetailback,
  colframe=trajdetailframe,
  colbacktitle=trajdetailframe,
  coltitle=white,
  fonttitle=\bfseries\small,
  title={#1},
  title after break={#1\ (continued)},
  boxrule=0.65pt,
  left=2.4mm,
  right=2.4mm,
  top=1.2mm,
  bottom=1.2mm,
  before skip=7pt,
  after skip=8pt,
  listing options={
    basicstyle=\ttfamily\footnotesize,
    columns=fullflexible,
    keepspaces=true,
    showstringspaces=false,
    breaklines=true,
    breakatwhitespace=false,
    emph={Purpose,Public,Common,Masked,Direct,Visible,Verifier,Inputs,Recovery,Exact,Trigger,Replay,Commitment,SYSTEM,PROMPT,SETTING,USER,TASK,ISSUE,AGENT,PROPOSAL,TrajMark,wrapper,RELEASED,ACTION,VISIBLE,OBSERVATION,NEXT},
    emphstyle=\bfseries\color{trajdetailframe},
    tabsize=2
  }
}
\newcommand{\OwnerSelectionProbability}{0.30}

\newcommand{\PrimaryExactRecoveryConditions}{27}
\newcommand{\PrimaryConditions}{27}

\begin{document}

\title{TrajMark: Ownership Attribution and Segment-Level Tamper Localization for Coding-Agent Trajectories}

\author{%
  \begin{minipage}[t]{0.29\textwidth}\centering
    \textbf{Bokang Zeng}\\[2pt]
    \normalfont\small School of Computer Science\\
    and Engineering\\
    University of New South Wales\\
    Sydney, Australia\\
    \scriptsize\texttt{bokang.zeng@student.unsw.edu.au}
  \end{minipage}
  \And
  \begin{minipage}[t]{0.29\textwidth}\centering
    \textbf{Zheng Gao}\\[2pt]
    \normalfont\small School of Computer Science\\
    and Engineering\\
    University of New South Wales\\
    Sydney, Australia\\
    \scriptsize\texttt{zheng.gao1@unsw.edu.au}
  \end{minipage}
  \And
  \begin{minipage}[t]{0.29\textwidth}\centering
    \textbf{Xiaoyu Li}\\[2pt]
    \normalfont\small School of Computer Science\\
    and Engineering\\
    University of New South Wales\\
    Sydney, Australia\\
    \scriptsize\texttt{xiaoyu.li2@unsw.edu.au}
  \end{minipage}
  \AND
  \begin{minipage}[t]{0.36\textwidth}\centering
    \textbf{Xiaoyan Feng}\\[2pt]
    \normalfont\small School of Information and\\
    Communication Technology\\
    Griffith University\\
    Brisbane, Australia\\
    \scriptsize\texttt{xiaoyan.feng@griffithuni.edu.au}
  \end{minipage}
  \And
  \begin{minipage}[t]{0.36\textwidth}\centering
    \textbf{Jiaojiao Jiang}\\[2pt]
    \normalfont\small School of Computer Science\\
    and Engineering\\
    University of New South Wales\\
    Sydney, Australia\\
    \scriptsize\texttt{jiaojiao.jiang@unsw.edu.au}
  \end{minipage}
}

\maketitle

\begin{abstract}
Watermarking the final patch produced by a coding agent provides provenance evidence for the submitted artifact, but does not authenticate the visible process that produced it. Recent behavioral watermarking methods extend provenance to agent trajectories by embedding ownership evidence into observable actions. However, these methods primarily provide a global detection or identifier-recovery signal, so a locally edited trajectory may retain sufficient ownership evidence without revealing which protected region has become inconsistent. To address this limitation, we propose \textbf{\tool{}},
a training-free, symmetric-key, visible-only trajectory watermarking framework
that separates robust ownership attribution from fragile local integrity verification. Our framework consists of two complementary layers: a sparse owner layer that encodes a six-bit deployment identifier by rewriting a keyed subset of naturally occurring READ actions into masked linear equations, and a localization layer that inserts linked Q12 ordinary, group, and terminal seals to commit to protected critical-action segments. This separation allows ownership evidence to accumulate robustly across trajectories, while local modifications perturb nearby keyed commitments and expose the affected protocol region. We further provide a design-level analysis of owner recoverability, integrity collision probability, structural overhead, and localization behavior. Across three coding-agent frameworks and three LLMs, \tool{} recovers the exact owner in all evaluated clean full-watermark batches. Under exhaustive eligible single-site attacks it detects 95.5\%–100\% of edits, and under random single-action corruption it localizes 95.8\% of modified sites to an accepted protocol region rather than to the individual action. Owner marking adds no trajectory actions; the integrity layer adds explicit read-only seals, and matched Pass@1 is 26.9\% versus 26.3\% for unwatermarked runs. Overall, \tool{} preserves robust batch-level ownership evidence while providing segment-level tamper detection and localization.

\end{abstract}

\keywords{coding agents, behavioral watermarking, provenance, ownership attribution,
tamper detection, trajectory localization}

\section{Introduction}

Coding agents~\cite{yang2024sweagent,wang2024openhands} are increasingly used for
software-engineering tasks such as automated bug fixing, test generation, and
repository maintenance. Unlike conventional code-generation systems, a coding-agent
run produces more than a final patch: it leaves a \emph{visible trajectory}, an ordered
record of observable actions including repository exploration, search, file reads,
edits, test executions, reproductions, and final submission. As such trajectories are
increasingly retained or released for auditing and provenance, two questions become
fundamental: \emph{who produced the trajectory}, and \emph{whether the released
trajectory has been modified after generation}. Reliable provenance therefore requires
protecting not only the final software artifact, but also the visible process that
produced it.

Existing watermarking mechanisms primarily protect generated content rather than the
underlying execution process. Token-level watermarks embed provenance signals into
generated text~\cite{kirchenbauer2023watermark,zhao2024provable}, code watermarks
modify source-code structure or lexical choices~\cite{collberg1999software}, and
cryptographic signatures can authenticate a submitted patch. These approaches provide
evidence about \emph{what} was produced, but not about \emph{how} it was produced.
The same patch may arise from substantially different debugging trajectories, while a
released action record may be selectively rewritten without changing the final patch.
Consequently, artifact-level provenance alone cannot authenticate the visible
development process.

Behavioral watermarking for agents takes an important step toward process-level
provenance by embedding ownership evidence directly into agent actions. AgentMark and
AGENTWM encode identifiers through behavioral or tool-execution
choices~\cite{huang2026agentmark,wang2026agentwm}; ActHook studies keyed hook behavior
in agent trajectories~\cite{meng2026acthook}; and SeqWM introduces sequential
transition patterns designed to retain watermark detection under trajectory
corruption~\cite{an2026seqwm}. These methods demonstrate that an agent's behavior,
rather than only its textual output, can serve as a watermark carrier. However, their
verification signals are primarily \emph{global}: they determine whether a trajectory
or collection of trajectories carries a watermark, or recover an embedded identifier.
A global ownership signal may therefore remain detectable even after part of the
released trajectory has been rewritten, without revealing which protected region has
become inconsistent.

This limitation exposes a more fundamental tension in trajectory provenance:
\emph{ownership and integrity require different evidence structures}. Ownership
evidence should be \emph{robust}: it should accumulate across naturally varying
trajectories and remain recoverable when individual actions are missing, altered, or
provide limited evidence. Tamper-localization evidence should instead be
\emph{fragile}: modifying a protected region should disturb a nearby commitment and
thereby reveal where the inconsistency occurred. Attempting to use a single global
watermark signal for both objectives conflates these opposing requirements.

Our key observation is therefore that trajectory provenance should not be represented
by a single homogeneous watermark channel. Instead, robust ownership attribution and
fragile local integrity should be treated as complementary but explicitly separated
protocol layers. Under such a design, ownership evidence can remain stable under
partial corruption, while local edits deliberately disrupt region-specific
commitments. This separation allows a verifier to retain confidence about
\emph{whose} trajectory collection it is while simultaneously identifying
\emph{where} an individual released trajectory has become inconsistent.

Motivated by this principle, we propose \tool{}, a training-free, visible-only
trajectory-watermarking framework for coding agents. \tool{} operates on a public
canonical projection of observable actions rather than assuming access to the complete
semantics or hidden state of an execution. Its \emph{owner layer} embeds a six-bit
deployment identifier into a keyed subset of naturally occurring READ actions. Letting
the deployment identifier lie in $\mathrm{GF}(2)^6$, each selected normalized
filename contributes a masked linear equation over $\mathrm{GF}(2)$ with a nonzero
coefficient row. The equation is encoded by rewriting only the READ subtype while
preserving its semantic family and file target. Because these carriers reuse actions
that the agent already intended to execute, owner-only mode introduces no additional
trajectory steps, while sparse equations can be aggregated across multiple trajectories
for batch-level identifier recovery.

In parallel, the \emph{integrity and localization layer} introduces deliberately
fragile commitments around critical portions of the visible trajectory. \tool{}
partitions the canonical action stream into protected segments and emits linked Q12
ordinary seals at eligible critical-segment boundaries. Pairs of ordinary seals are
protected by group commitments, and a terminal seal closes the remaining suffix and
observed seal sequence. Consequently, an eligible modification to a protected segment
% changes the inputs to one or more nearby keyed commitments, causing the corresponding
changes the inputs to one or more nearby keyed commitments, which can cause the corresponding
ordinary, group, or terminal replay check to become inconsistent. The two layers thus
serve complementary purposes: ownership evidence is sparse and redundant across
trajectories, whereas integrity evidence is intentionally sensitive to local structural
changes.

Verification requires only the published visible trajectory, a secret key, and public
canonicalization rules. It does not rely on model logits, hidden reasoning, private
wrapper metadata, authenticated side information, or access to the original execution
environment. The verifier reconstructs the ownership equations from visible READ actions and
independently replays the ordinary, group, and terminal seal protocol. It therefore
produces two distinct outputs: a batch-level deployment-identifier decision and
trajectory-level alarms identifying the protected segment, segment pair, or suffix
whose keyed replay has become inconsistent. Importantly, \tool{} authenticates this
public canonical projection rather than claiming to reconstruct deleted content or
certify the complete semantics of an execution.

We implement \tool{} across three independently engineered coding-agent frameworks and
evaluate it with three LLMs. Across all 27 evaluated full-watermark
agent--provider--seed groups, the visible-only decoder recovers the exact embedded
six-bit identifier. Under exhaustive eligible single-site attacks, \tool{} detects
95.5\%--100\% of edits across the nine agent--LLM conditions; for random one-action
payload corruption, accepted localization covers 95.8\% of eligible modified sites,
while matched Pass@1 reaches 26.9\% versus 26.3\% without watermarking. These results
demonstrate that robust ownership
attribution and fragile local integrity can coexist within the same visible
coding-agent trajectory without requiring model retraining or hidden verifier inputs.

We summarize our contributions as follows:
\begin{itemize}
    \item \textbf{Problem and design principle.}
    We formulate coding-agent trajectory provenance over a public canonical projection
    of released action records and identify the fundamental distinction between robust
    batch-level ownership attribution and fragile trajectory-level integrity.

    \item \textbf{Framework.}
    We propose \tool{}, a training-free, visible-only framework that combines sparse
    filename-indexed linear ownership equations over natural READ actions with linked
    Q12 ordinary, group, and terminal commitments. This two-layer design enables
    batch-level owner recovery together with protocol-level tamper localization.

    \item \textbf{Analysis and empirical evaluation.}
    We provide a design-level analysis of owner recoverability, integrity collision
    behavior, structural overhead, and localization, and conduct an extensive
    evaluation across three coding-agent frameworks and three LLMs, covering exact
    identifier recovery, random trajectory corruption, exhaustive single-site
    tampering, protocol-unit localization, structural cost, and repository-level task
    performance.
\end{itemize}

\section{Background and Motivation}
\label{sec:background}

\subsection{Coding-Agent Trajectories}
\label{sec:background-trajectories}

A coding agent repeatedly observes a repository state, asks a language model for the
next decision, executes a tool, and records the resulting action and observation.
Interfaces differ: SWE-agent exposes a text-action agent--computer interface
\cite{yang2024sweagent}; OpenHands represents interactions as typed events
\cite{wang2024openhands}; and OpenDev exposes structured tool calls. Nevertheless, their
visible behavior contains recurring semantic families:

\begin{itemize}
  \item \textbf{EXPLORE}: inspect directory structure or repository metadata;
  \item \textbf{SEARCH}: find names, symbols, imports, or matching content;
  \item \textbf{LOCATE}: read all or a bounded region of a file;
  \item \textbf{CRITICAL}: edit code, refactor, reproduce a defect, or run tests; and
  \item \textbf{TERMINAL}: submit or end the run.
\end{itemize}

Although these runtimes may record both actions and observations, \tool{} verifies
a public canonical projection of visible action records, including public linkage and
execution status; observation contents beyond these fields are not detector inputs.
The first three families are read-only and often admit
multiple surface realizations. \tool{} uses paired realizations that retain the READ
family, file target, and read-only character, although they may return different
observations and their utility impact must therefore be evaluated empirically.
Agent-proposed natural READs provide Q6 ownership carriers. Separately, framework
adapters instantiate additional read-only actions from the Q12 alphabet as integrity
seals; Section~\ref{sec:trajmark-carriers} defines both alphabets and their roles.

\subsection{Three Provenance Mechanisms}
\label{sec:background-provenance}

\paragraph{Artifact watermarks.}
Text and code watermarks modify a generated artifact and detect a statistical or
structural pattern in that artifact \cite{kirchenbauer2023watermark,
zhao2024provable,cox2007digital,collberg1999software}. They are complementary to
\tool{}: an artifact watermark carries provenance evidence about the final text or
code, whereas a trajectory watermark carries evidence over a protected canonical
representation of the action record.

\paragraph{Cryptographic audit logs.}
Hash chains, forward-secure logs, and signed commitments provide strong integrity when
a trusted logger can attach authenticators and downstream verifiers preserve and
validate them \cite{schneier1999secure}. They remain preferable in that setting, and
their authenticators may be retained in-band or alongside the log. \tool{} instead
studies the restricted observation contract formalized in Section~\ref{sec:problem}:
the released interface preserves a canonical action trajectory but exposes no
verifiable signature, hash-chain state, or authenticated wrapper metadata. \tool{}
therefore complements rather than replaces cryptographic audit logging and does not
provide the same freshness, replay, or non-repudiation guarantees.

\paragraph{Behavioral watermarks.}
AgentMark uses behavior distributions and conditional sampling to encode multi-bit
identifiers while preserving utility \cite{huang2026agentmark}. AGENTWM constructs
functionally equivalent execution paths \cite{wang2026agentwm}. ActHook studies
secretly activated hook behavior in trajectory data and later detects that behavior in
a trained agent \cite{meng2026acthook}. SeqWM uses sequential transition patterns to
retain detection under trajectory corruption \cite{an2026seqwm}. These methods protect
different objects and provide global detection or identifier-recovery evidence; they
do not provide an explicit protocol-level output identifying which protected segment,
seal pair, or suffix has become inconsistent.

\subsection{Design Rationale for Two Layers}
\label{sec:background-two-layers}

Ownership and localization favor different evidence structures. Ownership benefits
from sparse, redundant evidence aggregated across trajectories, whereas localization
requires region-specific commitments within an individual trajectory. Reusing one
globally aggregated signal and decision rule for both objectives creates a tension:
robustness favors tolerance to missing carriers, while local diagnosis requires
additional visible evidence tied to protocol-defined regions. \tool{} therefore makes
the two roles separately deployable. Owner-only mode inserts no additional visible
actions but rewrites selected natural READ subtypes; integrity mode adds overt
read-only seal actions at an explicit structural cost. Combined deployment retains
separate carrier roles and detector outputs, although owner rewrites remain part of the
canonical stream committed by the integrity layer. Because either intervention can
change observations and downstream behavior, utility is evaluated empirically.

\section{Problem Formulation and Threat Model}
\label{sec:problem}

We study provenance for a released coding-agent trajectory, where the verifier sees
the agent's public actions but not its hidden reasoning or runtime state.  The central
constraint is therefore observational: every carrier, role, and decision must be
recoverable from the released action stream.  We first fix the trust boundary and
online run, then define the canonical stream, verification tasks, and adversary.

\subsection{System Model and Trust Boundary}
\label{sec:system-model}

\paragraph{Parties and attribution scope.}
\tool{} is a symmetric-key mechanism operated inside a trusted verification domain.
A trusted embedding service runs the coding agent with a watermarking wrapper, and a
trusted verifier holds the same secret key \(K\).  The key authority assigns a
six-bit, deployment-scoped payload \(\ownerid\in\GF^6\); it is not a publicly
verifiable legal or organizational identity.  Any holder of \(K\) can
produce a valid mark for any label.  The construction consequently provides neither
public verifiability nor non-repudiation, and it does not cover a dishonest embedding
service, a dishonest verifier, or key compromise.  The claimed label
\(\mathbf w_c\), configuration \(\theta\), and verifier-selected batch membership are
trusted inputs fixed before decoding rather than inferred from the submitted stream.

\paragraph{Online generation.}
Let \(\mathcal A\) be a coding agent, \(\mathcal E\) its execution environment, \(x\)
a repository task, and \(\omega\) the run randomness.  Watermarking is part of the
agent loop:
\begin{equation}
  \tau_w \leftarrow
  \mathsf{Run}(\mathcal A,\mathcal E,x,\omega;
  \mathsf{Wrap}_{K,\ownerid,\theta}).
  \label{eq:online-run}
\end{equation}
At decision step \(t\), the agent proposes a request from the already marked history
\(h^w_{t-1}\).  The wrapper may rewrite an eligible READ before execution, update its
state after the visible outcome, and execute integrity seals before the next agent
decision.  A \emph{natural READ} is therefore an agent proposal observed immediately
before the current rewrite, not an action recovered from a completed counterfactual.
The No-WM run is a separate experimental control rather than an input to
Eq.~\eqref{eq:online-run}.

\subsection{Canonical Visible Observation}
\label{sec:visible-observation}

A released trajectory is an ordered sequence of visible records
\(\tau=(a_1,\ldots,a_n)\).  A public, versioned causal transducer joins only
publicly linkable records.  For example, it joins an OpenHands action with its
subsequent observation, then emits canonical events,
\begin{equation}
  \begin{aligned}
  s_\theta(\tau)&=\Pi_\theta(a_1,\ldots,a_n)=(e_1,\ldots,e_m),\\
  e_i&=(\mathsf{class}_i,\mathsf{subtype}_i,
   \mathsf{target}_i,\mathsf{status}_i,\mathsf{header}_i).
  \end{aligned}
  \label{eq:stream}
\end{equation}
where unavailable subtypes, targets, or headers use fixed null symbols and status is
\(\mathsf{success}\), \(\mathsf{failed}\), or \(\mathsf{unknown}\).  Every visible
action has a class, including \(\mathsf{OTHER}\) outside the protocol taxonomy.
The optional header retains public carrier fields required by a native profile;
OpenDev publishes a role, ordinal, and authentication tag in its tool-call text.

\begin{definition}[Visible-only observation]
\label{def:visible}
The verifier's trajectory-derived input is exactly \(s_\theta(\tau)\).  In addition,
it receives \(K\), \(\theta\), the claimed label, and the batch definition as trusted
inputs.  Candidate syntax and filename eligibility are determined from the released
prefix and \(\theta\).  Profile-specific role replay may also use \(K\) to recognize
owner carriers or authenticate public carrier headers.  Runtime checks such as file
existence are safety checks, not verifier-side eligibility conditions.
\end{definition}

The retained view contains class, subtype, canonical target, public status, any
profile-defined public carrier header, and order.
It excludes hidden reasoning, logits, private wrapper state, debug metadata, full
commands or outputs, and observation contents beyond public linkage and status.  The
complete field contract is listed in Appendix~\ref{app:protocol-details}.

\subsection{Public Replay and Protected Projections}
\label{sec:public-replay}

Generation knows which actions were proposed by the agent and which were inserted by
the wrapper; verification may not use those private labels.  For an edited release
\(\tilde\tau=\mathcal T(\tau_w)\), the verifier replays the declared placement rules
from the beginning:
\begin{equation}
  (\widehat N,\widehat L,D_{\mathrm{struct}})
  \leftarrow
  \mathsf{ProtocolReplay}_{K,\theta}(s_\theta(\tilde\tau)).
  \label{eq:role-parser}
\end{equation}
Here \(\widehat L\) assigns expected ordinary, group, and terminal roles at public
protocol positions, \(D_{\mathrm{struct}}\) records missing or invalid expected
carriers, and
\[
  \widehat N=\{i\notin\widehat L:
  \mathsf{Elig}_\theta(e_i,e_{<i})=1\}
\]
contains replay-assigned ownership positions.  Given the replay-assigned roles,
eligibility uses only the public prefix and \(\theta\); it does not recover
wrapper-private origins after editing.  Owner
decoding excludes every integrity role.

The two channels consume different projections of the same canonical stream.  The
owner projection \(\rho_{\mathrm{own}}\) retains eligible, replay-assigned non-seal
READ families, their Q6 subtypes, and normalized basenames.  The first eligible
occurrence of a basename in each trajectory supplies its carrier record.  For protected
segment \(S_k\), the integrity projection is
\begin{equation}
  \rho_{\mathrm{int}}(S_k)=
  \mathsf{counts}_{12}(S_k)
  \parallel
  \bigl((j,\mathsf{class}_j,\mathsf{tgt}_\theta(e_j))\bigr)_{
    \mathsf{class}_j\in\mathcal C_{\mathrm{crit}}},
  \label{eq:integrity-projection}
\end{equation}
where
\begin{equation*}
  \mathcal C_{\mathrm{crit}}=
  \{\mathsf{GENERATE\_FIX},\mathsf{REFACTOR},
    \mathsf{RUN\_TESTS},\mathsf{REPRODUCE}\}.
\end{equation*}
Here \(j\) is the zero-based ordinal among critical events in \(S_k\), and
\(\mathsf{tgt}_\theta(e_j)\) is the normalized basename of the event's explicit public
target, or the fixed null token when none is visible.
Critical-event order, class, and canonical target are retained.  Exploratory READ
filenames, READ order within a
segment, \(\mathsf{OTHER}\) contents, complete commands, and observation contents are
not protected by this projection.  Seal roles, subtypes, public status, and order are
checked separately during replay.

For later use, define the protected protocol view
\begin{equation}
  P_{\mathrm{int}}(\tau)=
  \bigl((\rho_{\mathrm{int}}(S_k))_k,
  ((\mathsf{role}_\ell,\mathsf{subtype}_\ell,
  \mathsf{status}_\ell))_\ell,
  D_{\mathrm{struct}}\bigr),
  \label{eq:protected-view}
\end{equation}
with seal order represented by sequence position.  Projection-preserving edits are
outside the integrity claim by definition.

\subsection{Verification Tasks and Adversarial Scope}
\label{sec:verification-tasks}

For a verifier-defined batch \(\widetilde{\mathcal B}\) claimed to share one key,
configuration, and deployment label, owner verification returns
\[
  \mathsf{OwnerVerify}_{K,\theta}
  (\widetilde{\mathcal B},\mathbf w_c)
  \rightarrow (\hat{\mathbf w},d_{\mathrm{owner}}),
\]
where
\(d_{\mathrm{owner}}\in\{\mathsf{ACCEPT},\mathsf{REJECT},\mathsf{ABSTAIN}\}\).
Acceptance means that sufficient retained batch evidence is consistent with the
registered key and claimed label.  A different unique label that clears the same rank
and evidence gates yields rejection; insufficient or ambiguous evidence yields
abstention.  Unique six-bit
recovery is not open-world attribution and does not attribute every member of the
batch individually.

For one trajectory, integrity replay returns
\[
  \mathsf{IntegrityReplay}_{K,\theta}(\tilde\tau)
  \rightarrow(d_{\mathrm{int}},\mathcal A_{\mathrm{int}}),
\]
where \(d_{\mathrm{int}}\in\{\mathsf{CLEAN},\mathsf{ALARM}\}\) and
\(\mathcal A_{\mathrm{int}}\) contains segment-, group-, suffix-, or structural
alarms.  A mismatched or missing expected carrier, or one whose status is disallowed by
\(\theta\), produces an alarm.  The reference protocol requires successful carrier
execution; the evaluated profiles' status and termination rules are specified in
Appendix~\ref{app:framework-details}.  An
alarm establishes inconsistency under \(K\) and \(\theta\); it does not distinguish
malicious editing from a wrong key or configuration, an unmarked input, or a benign
runtime failure.

\paragraph{Post-publication adversary.}
The editor knows the algorithms, claimed label, and public configuration and may
observe other marked trajectories, but it does not know \(K\), control the trusted
wrapper, alter trusted verifier inputs, or obtain adaptive localized feedback.  It may
insert, delete, or replace visible records and may target publicly recognizable
protocol positions.  The integrity probability statements in
Section~\ref{sec:analysis} concern a fixed, one-shot edit; adaptive
verification-oracle attacks are outside the model.

\paragraph{Evaluated corruption scopes.}
Ownership experiments apply non-key-aware insertion, deletion, and replacement to
homogeneous batches; integrity experiments edit protected payloads or visible carriers
relative to each trajectory's paired-clean replay baseline.
Section~\ref{sec:analysis} proves only fixed,
role-preserving single-segment substitutions.  Insert/delete/carrier cases, targeted
rank suppression, and adaptive claims are empirical or outside scope.

\paragraph{Explicit exclusions.}
An intact replay of a valid marked trajectory remains valid evidence for its embedded
label.  \tool{} does not establish freshness, execution uniqueness, or binding to an
external task, repository commit, patch, user, or session.  Whole-object replay,
same-label substitution, arbitrary mixed-label batches, cross-trajectory splicing,
arbitrary compound attacks on integrity or localization, complete evidence erasure,
and key or wrapper compromise are not covered.  Applications that need context binding
must authenticate that context separately.  Removing enough owner evidence may force
\(\mathsf{ABSTAIN}\); visible-only
verification cannot reconstruct deleted content.

\paragraph{Design objective.}
\tool{} seeks robust batch-level recovery when enough independent owner evidence
survives, fragile localization when the protected view changes, and abstention when
evidence is insufficient.  All three outcomes must be derived from the released stream,
\(K\), public configuration, and predeclared claim and batch definition.

\section{\tool{}: Dual-Channel Visible-Trajectory Watermarking}
\label{sec:trajmark}

\subsection{Method Overview}
\label{sec:trajmark-architecture}

Figure~\ref{fig:overview} shows the two-channel loop.  The robust owner channel maps
selected agent-proposed READs to linear equations over \(\GF^6\), preserving their
family and target and adding no action.  The fragile integrity channel closes public
segments with ordinary, group, and terminal Q12 seals.  From the released stream, the
verifier solves a batch label (``who'') and independently localizes inconsistent
protocol units (``where'').

\begin{figure}[H]
\centering
\includegraphics[width=\textwidth]{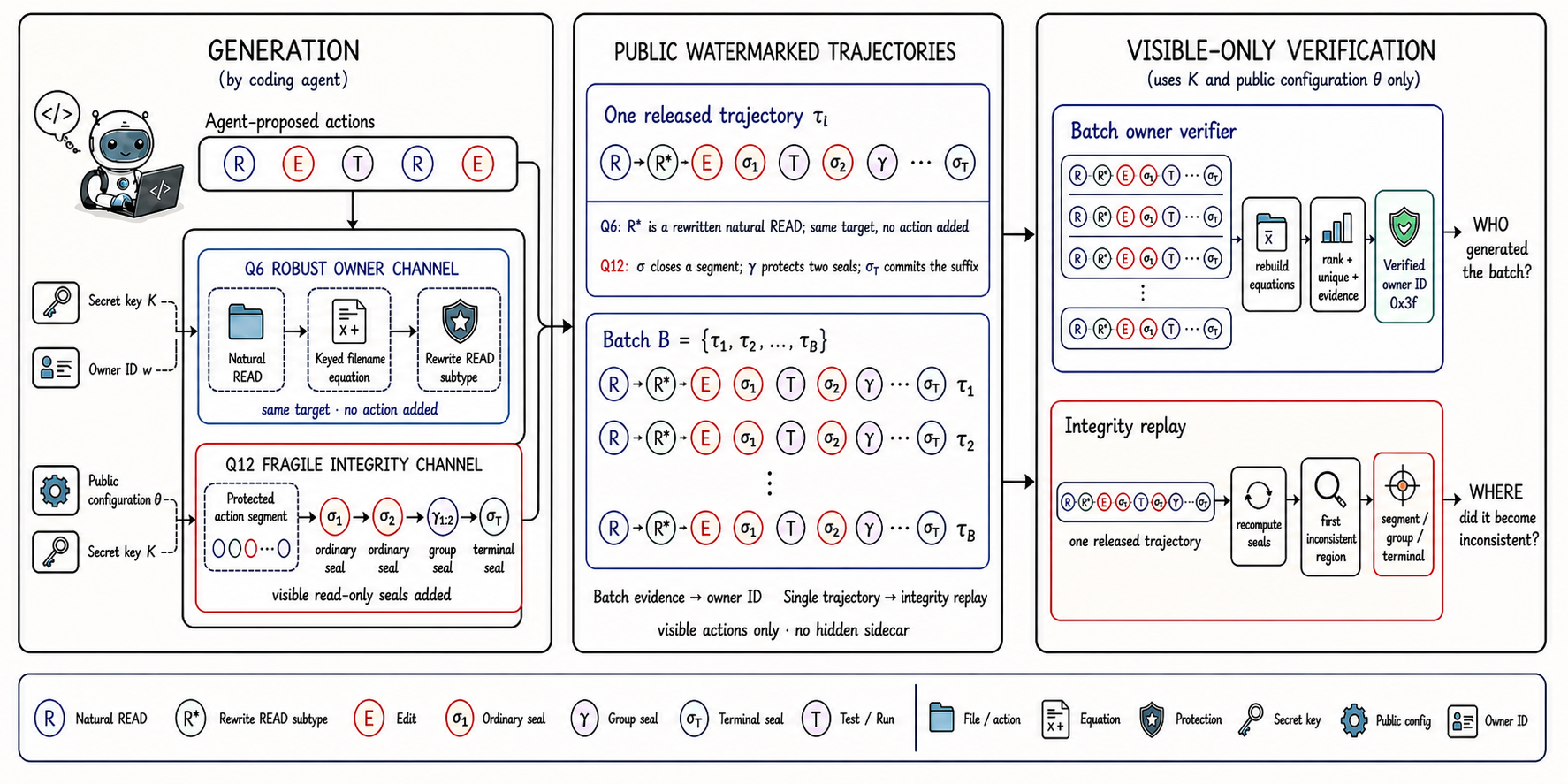}
\caption{\tool{} separates robust ownership evidence from fragile projected
integrity. The owner channel rewrites selected natural READs and adds no actions,
whereas the integrity channel emits ordinary, group, and terminal Q12 seals. The
released public trajectories carry both channels: batch evidence supports owner-ID
recovery, while one trajectory supports independent integrity replay.}
\Description{A two-channel pipeline in which natural actions pass through an owner
encoder and an integrity-seal encoder.  A batch decoder recovers a six-bit label, while
visible replay reports inconsistent segments, segment pairs, or suffixes.}
\label{fig:overview}
\end{figure}

We use \(\operatorname{Enc}_\theta\) as abstract notation for a public,
unambiguous profile encoding.  The evaluated adapters use two profile-specific
terminal serializers; Appendix~\ref{app:framework-details} explains their shared
fields and adapter-level differences.  Thus \(\parallel\) below denotes the encoding
selected by the declared framework profile.

\subsection{Public Carrier Space and Role Separation}
\label{sec:trajmark-carriers}

Natural READs assigned to \(N(\tau)\) may carry ownership equations; replay-assigned
seal positions in \(L(\tau)\) carry integrity tags and never contribute owner
equations.  Independent HMAC domain tags separate selection, row generation, masking,
orientation, ordinary seals, group seals, and the terminal seal.

The public Q6 alphabet provides one binary pair per READ family; Q12 exposes all twelve
subtypes to integrity seals.  The concrete mapping appears in
Appendix~\ref{app:protocol-details}; secrecy lies only in keyed choices and values.

\subsection{Robust Ownership Channel}
\label{sec:owner}

\subsubsection{Causal Filename Slots}
\label{sec:owner-slots}

An owner candidate is a natural READ that (i) has an explicit regular repository-file
target, (ii) belongs to EXPLORE, SEARCH, or LOCATE, and (iii) is the first eligible
READ of its normalized basename in that trajectory.  Write
\[
  s_f=\operatorname{norm}(\operatorname{basename}(f)).
\]
The filename-once rule prevents one loop over a file from dominating a trajectory and
allows both runtime and verifier to decide eligibility from the current public prefix.
The public configuration supplies a real-valued threshold \(\bar p_o\).  The key selects
a basename by
\begin{align}
u_f &= \frac{\hmac_K(\texttt{owner-select}\parallel s_f)_{64}}{2^{64}},\\
f\text{ is selected} &\Longleftrightarrow u_f<\bar p_o .
\label{eq:owner-select}
\end{align}
The implementation compares this normalized 64-bit draw with the configured threshold;
write \(p_o\) for the resulting selection probability.  The primary configuration uses
\(\bar p_o=\OwnerSelectionProbability\); profile-level realizations are summarized in
Appendix~\ref{app:framework-details}.

\subsubsection{Masked Linear-Equation Injection}
\label{sec:owner-equations}

For selected basename \(s_f\), rejection sampling maps a domain-separated HMAC output
uniformly to a nonzero row
\[
  \mathbf a_f\in\GF^6\setminus\{\mathbf0\}.
\]
The unmasked payload equation and keyed mask are
\begin{align}
b_f&=\langle\mathbf a_f,\ownerid\rangle\bmod2,
\label{eq:owner-equation}\\
m_f&=\hmac_K(\texttt{owner-mask}\parallel s_f)\bmod2,
\qquad c_f=b_f\oplus m_f.
\label{eq:owner-mask}
\end{align}
For READ family \(F\), let
\(\mathsf{Q6}_F[0],\mathsf{Q6}_F[1]\) be its public pair.  A further domain-separated
bit randomizes the public orientation:
\begin{equation}
o_{f,F}=\hmac_K(\texttt{owner-orient}\parallel F\parallel s_f)\bmod2,
\qquad j_f=c_f\oplus o_{f,F}.
\label{eq:owner-orientation}
\end{equation}
The wrapper executes \(\mathsf{Q6}_F[j_f]\) with the original target.  The same public
pair therefore has no fixed visible meaning across filenames.

Equations~\eqref{eq:owner-mask}--\eqref{eq:owner-orientation} specify the masked
profile used by all three evaluated adapters.  SWE-agent and OpenHands preserve the
natural READ family, while OpenDev's evaluated owner carrier uses the Locate pair.
The adapters use profile-specific canonical slots and pair contexts but expose the
same decoded equation \((\mathbf a_f,b_f)\); their visible realization differences are
summarized in Appendix~\ref{app:framework-details}.

% Repeated basenames add observations but not rank; distinct basenames may also collide
% on one of 63 rows.  The decoder therefore reports raw observations and unique rows.
Each basename defines a slot.  Coefficient collisions do not merge slots or
increase rank; only coefficient types are bounded by 63.

\subsubsection{Batch Decoding and Three-Way Decision}
\label{sec:owner-decoding}

For a selected visible carrier with observed Q6 index \(j_f\), the verifier reconstructs
\begin{equation}
  y_f=j_f\oplus o_{f,F}\oplus m_f
     =\langle\mathbf a_f,\ownerid\rangle
\label{eq:owner-recover}
\end{equation}
% on an unedited carrier.  Repeated observations of the same basename slot are
% aggregated by majority, and a tied slot is erased.  Let \(U\) be the number of retained
% slots and \(H(\mathbf x)\) the number of their equations satisfied by candidate
% \(\mathbf x\in\GF^6\).  The decoder enumerates all 64 candidates and chooses a unique
% maximizer when one exists.
on an unedited carrier.  Majority aggregation retains one equation per basename
slot; ties are erased.  Let \(U\) count retained slots and \(H(\mathbf x)\) count
their equations satisfied by \(\mathbf x\in\GF^6\).  The decoder scores all 64
candidates and selects a unique maximizer when one exists.

For the predeclared claim \(\mathbf w_c\), the descriptive evidence score is
\begin{equation}
z_{\mathrm{owner}}(\mathbf w_c)
=\frac{H(\mathbf w_c)-U/2}{\sqrt{U/4}}.
\label{eq:owner-z}
\end{equation}
Given threshold \(\tau_z\), first require that the retained matrix has rank six, the
Hamming-score maximizer \(\hat{\mathbf w}\) is unique, and
\(z_{\mathrm{owner}}(\hat{\mathbf w})\ge\tau_z\).  If any gate fails, the verifier
abstains; otherwise it accepts when \(\hat{\mathbf w}=\mathbf w_c\) and rejects when
\(\hat{\mathbf w}\ne\mathbf w_c\).  Rank six is a payload
capacity condition, not by itself an acceptance condition or a robustness guarantee;
Section~\ref{sec:analysis} makes both distinctions precise.

\subsection{Fragile Projected-Integrity Channel}
\label{sec:localization}

\subsubsection{Segments and Public State}
\label{sec:local-segments}

The reference integrity state machine processes replay-assigned non-seal events.
Its state contains the current segment buffer, the
ordinary index \(k\), the observed subtype of the previous ordinary seal, the ordinary
digests and subtypes, the group count, and a Boolean \(c_{\mathrm{prev}}\), initially
false.  After adding a non-seal event \(e\) to the current payload, set
\(c=\mathbf1[\mathsf{class}(e)\in\mathcal C_{\mathrm{crit}}]\).  The boundary rule is
\[
\mathsf{Boundary}_\theta(e,c_{\mathrm{prev}})=c\land\neg c_{\mathrm{prev}},
\qquad c_{\mathrm{prev}}\gets c.
\]
Thus a critical event at stream start or after a noncritical event closes a segment immediately;
subsequent consecutive critical events enter the next buffer without another immediate
seal.  Ordinary and group roles do not update \(c_{\mathrm{prev}}\).  After an ordinary
role, the buffer is cleared, and every second ordinary role schedules a group role
before the next natural event.  The reference schedule advances over attempted roles
and checks status separately.  The terminal role commits the remaining buffer, including
an empty suffix.  The evaluated profiles' admission, boundary-state, failure, and closure
rules are specified in Appendix~\ref{app:framework-details}.

READ events contribute their Q12 subtype counts to the buffer.  If the critical events
in a segment are numbered \(j=0,\ldots,r\) in their visible order, they contribute
tokens
\[
t_j=j\parallel \mathsf{class}_j\parallel
\mathsf{tgt}_\theta(e_j),
\]
using the public target rule in Section~\ref{sec:public-replay}.  The resulting segment
payload is
\begin{equation}
\mathsf{payload}(S_k)=
\mathsf{counts}_{12}(S_k)\parallel(t_0,\ldots,t_r).
\label{eq:segment-payload}
\end{equation}
This is exactly the projection in Eq.~\eqref{eq:integrity-projection}; changing data
outside that projection is not an integrity event claimed by \tool{}.

\subsubsection{Ordinary, Group, and Terminal Commitments}
\label{sec:ordinary-seal}

Let \(\tilde\sigma_{k-1}\) be the observed subtype of the preceding ordinary role and
set \(\tilde\sigma_0=\mathsf{GENESIS}\).  The \(k\)-th ordinary digest and visible
subtype are
\begin{align}
d_k &= \hmac_K(\texttt{ordinary}\parallel k\parallel
        \tilde\sigma_{k-1}\parallel\mathsf{payload}(S_k)),\\
\sigma_k &= \mathsf{Q12}[d_k\bmod12].
\label{eq:ordinary-seal}
\end{align}
Chaining the \emph{observed} predecessor confines an ordinary mismatch: replay of the
next segment starts from the symbol actually present in the stream rather than an
unseen expected symbol.

After ordinary roles \(2g-1\) and \(2g\), the group protector is
\begin{align}
g_g = \hmac_K(&\texttt{group}\parallel g\parallel
  d_{2g-1}\parallel d_{2g}\parallel
  \tilde\sigma_{2g-1}\parallel\tilde\sigma_{2g}),\\
\gamma_g &= \mathsf{Q12}[g_g\bmod12].
\label{eq:group-seal}
\end{align}
The group input contains the full ordinary digests, not only their visible residues,
so a payload change that collides modulo 12 can still create fresh group evidence.

Finally, let \(\mathbf d\), \(\tilde{\boldsymbol\sigma}\), and
\(\tilde{\boldsymbol\gamma}\) be the ordinary digests and the observed ordinary and
group subtype sequences.  With empty sequences encoded canonically and
\(\tilde\sigma_m=\mathsf{GENESIS}\) when \(m=0\), define
\[
C_T=H_\theta(\mathbf d\parallel\tilde{\boldsymbol\sigma}
      \parallel\tilde{\boldsymbol\gamma})
\]
and
\begin{equation}
\sigma_T=\mathsf{Q12}\!\left[
\hmac_K(\texttt{terminal}\parallel m\parallel g\parallel
\tilde\sigma_m\parallel C_T\parallel\mathsf{payload}(S_T))
\bmod12\right].
\label{eq:terminal-seal}
\end{equation}
Here \(H_\theta\) is a fixed public hash identified by the protocol version.  The
terminal closes the final suffix and, conditional on collision resistance, binds the
seal counts and visible seal sequence.

\subsubsection{Visible-Only Replay and Localization}
\label{sec:local-replay}

Reference replay reconstructs each segment and consumes the next visible record at every expected
role.  Missing, mismatched, unsuccessful, or trailing records raise structural alarms.
An ordinary mismatch reports one segment, a group mismatch its adjacent pair, and a
terminal mismatch a suffix or closure inconsistency.  These are protocol units rather
than byte-level locations.  The complete transition rules appear in
Appendix~\ref{app:protocol-details}.

\subsection{End-to-End Online Protocol}
\label{sec:detection}

The online wrapper alternates agent actions with any scheduled integrity roles; the
verifier first reconstructs public roles, then decodes owner equations and replays
integrity commitments.  The owner channel adds no action, whereas the integrity channel
is overt and incurs visible step cost.  Either channel may be enabled independently.
Reference pseudocode is given in Appendix~\ref{app:protocol-details};
Section~\ref{sec:implementation} distinguishes it from the evaluated framework profiles.

\newtheorem{assumption}[theorem]{Assumption}
\newtheorem{remark}[theorem]{Remark}

\definecolor{trajblueframe}{RGB}{68,104,132}
\definecolor{trajbluefill}{RGB}{239,247,251}
\definecolor{trajredframe}{RGB}{154,91,83}
\definecolor{trajredfill}{RGB}{252,243,241}
\newsavebox{\trajblueboxstore}
\newsavebox{\trajredboxstore}
\newenvironment{trajtheorembox}{%
  \par\medskip\noindent
  \begin{lrbox}{\trajblueboxstore}%
  \begin{minipage}{\dimexpr\linewidth-2\fboxsep-2\fboxrule\relax}%
}{%
  \end{minipage}\end{lrbox}%
  \noindent\fcolorbox{trajblueframe}{trajbluefill}{\usebox{\trajblueboxstore}}%
  \par\medskip
}
\newenvironment{trajremarkbox}{%
  \par\smallskip\noindent
  \begin{lrbox}{\trajredboxstore}%
  \begin{minipage}{\dimexpr\linewidth-2\fboxsep-2\fboxrule\relax}%
}{%
  \end{minipage}\end{lrbox}%
  \noindent\fcolorbox{trajredframe}{trajredfill}{\usebox{\trajredboxstore}}%
  \par\smallskip
}

\section{Theoretical Analysis}
\label{sec:analysis}

We now state the guarantees that explain the two-channel design.  The main text keeps
only the assumptions, headline results, and their interpretation; recurrences and
complete proofs are deferred to the appendices.
The owner result separates a key-independent frozen-pool model from decoding of an
observed batch, while the integrity result concerns one fixed role-preserving edit.

\begin{assumption}[Canonical replay]
\label{ass:canonical-replay}
Generation and verification use the same deterministic public transducer, role state
machine, and typed, length-prefixed, injective encoding \(\operatorname{Enc}_\theta\).
Boundary placement, null targets, role order, and successful-status requirements are
fixed by \(\theta\).
\end{assumption}

\begin{assumption}[Idealized keyed outputs]
\label{ass:ideal-prf}
Domain-separated \(\lambda\)-bit HMAC outputs on distinct fresh canonical inputs are
independent and uniform.  The tested claim, batch, and frozen opportunity pool are fixed
independently of the test key.  The integrity editor makes one attempt, without key or
verification-oracle access.  A real-PRF formulation adds its distinguishing advantage.
\end{assumption}

For \(\mathcal E(\tau)=(\rho_{\mathrm{own}}(\tau),P_{\mathrm{int}}(\tau))\), owner
verification depends only on the first projection and integrity replay only on the
second.  This functional separation does not imply statistical independence during
online generation.  In particular, capacity calculations use a key-independent frozen
pool, because an early keyed rewrite can change the later trajectory.

\subsection{Ownership Guarantees}
\label{sec:owner-budget}

Fix a frozen set \(\mathcal S\) of \(D_{\mathcal B}\) distinct eligible basenames.  If
basename \(s\) appears in \(c_s\) trajectories, let
\(C_{\mathcal B}=\sum_s c_s\).  Under Assumption~\ref{ass:ideal-prf}, the selected
basename count \(M\) and expected owner rewrite count \(R\) satisfy
\begin{align}
M&\sim\operatorname{Binomial}(D_{\mathcal B},p_o),
&\mathbb E[M]&=p_oD_{\mathcal B},
\label{eq:owner-selected-budget}\\
\mathbb E[R]&=p_oC_{\mathcal B}.
\label{eq:owner-rewrite-budget}
\end{align}
% Let \(Q_m(r,u)\) be the probability that \(m\) independent draws from the 63 nonzero
% rows of \(\GF^6\) have rank \(r\) and contain \(u\) distinct rows, and write
% \(P_m(6)=\sum_uQ_m(6,u)\).  Appendix~\ref{app:owner-capacity} gives the exact
% recurrence, useful rank specializations, and the proof of the capacity result.
Let \(P_m(6)\) be the full-rank probability for \(m\) independent draws from the 63
nonzero vectors of \(\GF^6\), one draw per selected basename slot.
Appendix~\ref{app:owner-capacity} gives the exact rank and collision recurrences
and the proof of the capacity result.

% \begin{trajtheorembox}
% \begin{theorem}[Frozen-pool ownership capacity]
% \label{thm:clean-acceptance}
% Suppose the frozen-pool audit contributes one clean equation for every selected
% basename, without feedback into the pool, and the claimed label is the embedded label.
% For \(\tau_z\ge0\), let \(L_\tau=\lceil\tau_z^2\rceil\).  Under
% Assumptions~\ref{ass:canonical-replay} and~\ref{ass:ideal-prf},
% \begin{equation}
% \Pr[\mathsf{ACCEPT}]
% =\sum_{m=0}^{D_{\mathcal B}}
% \binom{D_{\mathcal B}}m p_o^m(1-p_o)^{D_{\mathcal B}-m}
% \sum_{u=L_\tau}^{63}Q_m(6,u).
% \label{eq:clean-acceptance}
% \end{equation}
% The corresponding rank-six capacity is
% \begin{equation}
% P_{\mathrm{rank6}}(D_{\mathcal B},p_o)=
% \sum_{m=0}^{D_{\mathcal B}}
% \binom{D_{\mathcal B}}m p_o^m(1-p_o)^{D_{\mathcal B}-m}P_m(6).
% \label{eq:rank-six-probability}
% \end{equation}
% Thus clean payload recovery requires rank six, while clean acceptance additionally
% requires at least \(L_\tau\) distinct retained rows.
% \end{theorem}
% \end{trajtheorembox}
% 
% At the evaluated \(\tau_z=3.09\), the evidence gate requires ten distinct rows.  Rank is
% therefore a capacity condition rather than the complete attribution decision.
\begin{trajtheorembox}
\begin{theorem}[Frozen-pool ownership capacity]
\label{thm:clean-acceptance}
Suppose the frozen-pool audit contributes one clean equation for every selected
basename, without feedback into the pool, and the claimed label is the embedded label.
For \(\tau_z\ge0\), let \(L_\tau=\max\{6,\lceil\tau_z^2\rceil\}\).  Under
Assumptions~\ref{ass:canonical-replay} and~\ref{ass:ideal-prf},
\begin{equation}
\Pr[\mathsf{ACCEPT}]
=\sum_{m=L_\tau}^{D_{\mathcal B}}
\binom{D_{\mathcal B}}m p_o^m(1-p_o)^{D_{\mathcal B}-m}P_m(6),
\label{eq:clean-acceptance}
\end{equation}
where an empty sum is zero.  The corresponding rank-six capacity is
\begin{equation}
P_{\mathrm{rank6}}(D_{\mathcal B},p_o)=
\sum_{m=0}^{D_{\mathcal B}}
\binom{D_{\mathcal B}}m p_o^m(1-p_o)^{D_{\mathcal B}-m}P_m(6).
\label{eq:rank-six-probability}
\end{equation}
In a clean batch, \(U=M\); for \(M>0\), \(z_{\mathrm{owner}}=\sqrt M\).
Acceptance requires rank six and at least \(L_\tau\) retained basename slots.
\end{theorem}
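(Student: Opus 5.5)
The plan is to condition on the selected basename count \(M\) and evaluate the verifier's three gates (rank six, unique maximizer, and \(z\)-threshold) as deterministic functions of \(M\) and the coefficient rows. Averaging over \(M\) then gives both displayed formulas.

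\emph{Distribution of \(M\) and the rows.} Under Assumption~\ref{ass:ideal-prf}, the selection draws \(u_s=\hmac_K(\texttt{owner-select}\parallel s)_{64}/2^{64}\) for the \(D_{\mathcal B}\) distinct frozen basenames are independent and uniform. Each basename is therefore selected independently with probability \(p_o\), so \(M\sim\operatorname{Binomial}(D_{\mathcal B},p_o)\) as in Eq.~\eqref{eq:owner-selected-budget}. The row-generation domain is separate from the selection domain, and rejection sampling maps a uniform output to a uniform nonzero row. Hence, conditional on the selected set, the \(M\) rows \(\mathbf a_s\) are i.i.d.\ uniform on \(\GF^6\setminus\{\mathbf 0\}\), independent of selection. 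The frozen-pool hypothesis (no feedback) is what licenses treating the pool as fixed while the key varies.

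\emph{Clean batch.} Each selected basename contributes one clean equation per occurrence. By Eq.~\eqref{eq:owner-recover}, every observation of slot \(s\) decodes to \(y_s=\langle\mathbf a_s,\ownerid\rangle\): the mask and orientation bits cancel exactly. Majority aggregation therefore retains every slot with no ties, so \(U=M\) and \(H(\ownerid)=M\). For \(M>0\), Eq.~\eqref{eq:owner-z} gives \(z_{\mathrm{owner}}(\ownerid)=(M-M/2)/\sqrt{M/4}=\sqrt M\).

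\emph{Gates as events.} If the retained matrix has rank six, then any \(\mathbf x\neq\ownerid\) differs from \(\ownerid\) by a nonzero \(\mathbf v=\mathbf x\oplus\ownerid\). Full rank means \(\mathbf v\) is not orthogonal to every row, so some equation fails at \(\mathbf x\) and \(H(\mathbf x)<M\). Thus \(\ownerid\) is the unique maximizer. If the rank is below six, some nonzero \(\mathbf v\) is orthogonal to all rows, and \(\ownerid\oplus\mathbf v\) ties with \(\ownerid\), so the verifier abstains. Rank six therefore coincides with uniqueness, and it forces \(M\ge6\).

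Given uniqueness, \(\hat{\mathbf w}=\ownerid=\mathbf w_c\), so the verifier can only accept or abstain. Acceptance additionally requires \(\sqrt M\ge\tau_z\), i.e.\ \(M\ge\lceil\tau_z^2\rceil\) since \(M\) is an integer and \(\tau_z\ge0\). Combining with \(M\ge6\),
\[
\{\mathsf{ACCEPT}\}=\{M\ge L_\tau\}\cap\{\text{rank}=6\}.
\]
Conditional on \(M=m\), the rank-six probability is by definition \(P_m(6)\), since the rows are i.i.d.\ uniform nonzero. The law of total probability over the binomial distribution of \(M\) gives Eq.~\eqref{eq:clean-acceptance}, with terms for \(m<L_\tau\) dropped. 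Dropping the threshold gives Eq.~\eqref{eq:rank-six-probability}, where \(P_m(6)=0\) for \(m<6\); this also covers the empty-sum convention.

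\emph{Main obstacle.} The delicate step is justifying that the row draws are independent of the selection event and of one another across distinct basenames. This must come from domain separation and distinctness of the canonical HMAC inputs under Assumption~\ref{ass:ideal-prf}. One must also argue that rejection sampling produces an exactly uniform nonzero row, in the idealized model with unbounded retries.

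A second point to handle carefully is the coefficient collisions flagged in Section~\ref{sec:owner-equations}. Two slots may share a row, but they remain distinct slots, so each still counts toward \(U\). Collisions affect only \(P_m(6)\), not the identity \(U=M\). This is why the threshold is stated on retained slots rather than on distinct rows.
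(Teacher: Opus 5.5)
Your proposal is correct and follows essentially the same route as the paper. Both condition on \(M=m\), note that a clean batch gives \(U=H(\mathbf w)=m\) and \(z_{\mathrm{owner}}=\sqrt m\), reduce acceptance to rank six together with \(m\ge L_\tau\), and average \(\mathbf 1\{m\ge L_\tau\}P_m(6)\) over the binomial law of \(M\). Your extra argument that rank below six forces a tie is valid but not needed, because rank six is already an explicit verifier gate.
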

\end{trajtheorembox}

At the evaluated \(\tau_z=3.09\), the evidence gate requires ten retained basename
slots.  Their coefficient vectors may repeat, but the matrix must still have rank six.

% \begin{trajtheorembox}
% \begin{theorem}[Owner robustness and fixed-claim soundness]
% \label{thm:owner-guarantees}
% \label{thm:owner-distance}
% \label{thm:owner-null}
% Let \(A\in\GF^{U\times6}\) and \(y\in\GF^U\) be the retained system after replay,
% row aggregation, and tie erasure.
% 
% \emph{(i) Robust recovery.}  Define
% \begin{equation}
% d(A)=\min_{v\in\GF^6\setminus\{0\}}\|Av\|_0.
% \label{eq:owner-distance}
% \end{equation}
% If \(A\) has rank six, \(y=A\ownerid\oplus\eta\), and
% \(2\|\eta\|_0<d(A)\), then \(\ownerid\) is the unique maximum-satisfaction decoder
% output.  If \(s\) rows of an original matrix \(A_0\) are erased, the sufficient
% original-code condition is \(2\|\eta\|_0+s<d(A_0)\).
% 
% \emph{(ii) Fixed-claim soundness for the masked profile.}  Fix the visible stream and claim
% \(\mathbf w_c\) independently of a wrong test key.  Under
% Assumption~\ref{ass:ideal-prf}, conditional on the realized selection and row pattern
% and on any retained set of \(U\ge1\) non-tied row buckets, the satisfied-row count obeys
% \(H_c\sim\operatorname{Binomial}(U,1/2)\).  Hence
% \begin{equation}
% \Pr[\mathsf{false\ ACCEPT}\mid U]
% \le 2^{-U}\sum_{h=h_\tau(U)}^U\binom Uh,
% \qquad
% h_\tau(U)=\left\lceil\frac U2+\frac{\tau_z\sqrt U}{2}\right\rceil.
% \label{eq:owner-false-accept}
% \end{equation}
% \end{theorem}
% \end{trajtheorembox}
\begin{trajtheorembox}
\begin{theorem}[Owner robustness and fixed-claim soundness]
\label{thm:owner-guarantees}
\label{thm:owner-distance}
\label{thm:owner-null}
Let \(A\in\GF^{U\times6}\) and \(y\in\GF^U\) retain one equation per basename
slot after replay, majority aggregation, and tie erasure; coefficient vectors may repeat.

\emph{(i) Robust recovery.}  Define
\begin{equation}
d(A)=\min_{v\in\GF^6\setminus\{0\}}\|Av\|_0.
\label{eq:owner-distance}
\end{equation}
If \(A\) has rank six and \(y=A\ownerid\oplus\eta\), then
\(2\|\eta\|_0<d(A)\) ensures unique recovery of \(\ownerid\).
With \(s\) slot erasures from \(A_0\), a sufficient condition is
\(2\|\eta\|_0+s<d(A_0)\).

\emph{(ii) Fixed-claim soundness for the masked profile.}  Fix the visible stream and
claim \(\mathbf w_c\) independently of a wrong test key.  Suppose owner-mask outputs
do not affect role assignment, carrier selection, or basename-slot grouping.
Under Assumption~\ref{ass:ideal-prf}, conditional on all other domain outputs and
\(U\ge1\) retained slots, \(H_c:=H(\mathbf w_c)\sim\operatorname{Binomial}(U,1/2)\).
This holds even when coefficient vectors repeat.  Hence
\begin{equation}
\Pr[\mathsf{false\ ACCEPT}\mid U]
\le 2^{-U}\sum_{h=h_\tau(U)}^U\binom Uh,
\qquad
h_\tau(U)=\left\lceil\frac U2+\frac{\tau_z\sqrt U}{2}\right\rceil.
\label{eq:owner-false-accept}
\end{equation}
\end{theorem}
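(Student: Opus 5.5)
The two parts are independent. Part (i) is a minimum-distance decoding argument for the linear code $v\mapsto Av$. Part (ii) is a one-time-pad argument on the owner-mask bits.

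\emph{Part (i).} The decoder maximizes $H(\mathbf x)=U-\|A\mathbf x\oplus y\|_0$, so we must show $\|A\mathbf x\oplus y\|_0>\|\eta\|_0$ for every $\mathbf x\ne\ownerid$. Write $v=\mathbf x\oplus\ownerid\neq\mathbf 0$. Then $A\mathbf x\oplus y=Av\oplus\eta$, and the triangle inequality for Hamming weight gives $\|Av\oplus\eta\|_0\ge\|Av\|_0-\|\eta\|_0\ge d(A)-\|\eta\|_0$. By hypothesis this exceeds $\|\eta\|_0$, so $\ownerid$ is the unique maximizer.

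Rank six is what makes $d(A)\ge1$: otherwise some $v\ne\mathbf0$ has $Av=\mathbf0$. Rank six is therefore implicit in the condition $2\|\eta\|_0<d(A)$, which we note. Coefficient repetitions are harmless because rows are indexed by slot, not by coefficient value.

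For erasures, let $A$ be $A_0$ with $s$ rows deleted. Deleting a row lowers $\|A_0v\|_0$ by at most one, so $d(A)\ge d(A_0)-s$. Then $2\|\eta\|_0+s<d(A_0)$ implies $2\|\eta\|_0<d(A)$, and in particular $d(A)\ge1$, so rank six survives. We then apply the first case.

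\emph{Part (ii).} Condition on every domain output except the owner-mask outputs: selection, rows $\mathbf a_f$, orientations, and integrity tags. By hypothesis, role assignment, carrier selection, and slot grouping do not depend on the masks. So the set of $U$ retained slots, each slot's observed visible indices, and the aggregation pattern are all fixed given the conditioning.

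Under the wrong test key, the masks $m_s$ for distinct normalized basenames are HMAC outputs on distinct fresh inputs. By Assumption~\ref{ass:ideal-prf} they are i.i.d.\ uniform bits, independent of the conditioning and of the fixed stream and claim.

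Within a slot every observation uses the same mask $m_s$, so each reconstructed value is $y=j\oplus o\oplus m_s$. Whether the majority ties depends only on the $j\oplus o$ values, not on $m_s$, so tie erasure is mask-independent. When a slot is retained, its majority value equals $\mathrm{maj}(j\oplus o)\oplus m_s$.

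Hence the indicator that slot $s$ satisfies $\langle\mathbf a_s,\mathbf w_c\rangle$ is a fixed bit XOR $m_s$. That indicator is uniform, and independent across slots because the slots are distinct basenames; this holds even when coefficient vectors coincide. Therefore $H_c\sim\mathrm{Binomial}(U,1/2)$.

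A false ACCEPT requires $\hat{\mathbf w}=\mathbf w_c$ and $z_{\mathrm{owner}}(\mathbf w_c)\ge\tau_z$. The latter is equivalent to $H_c\ge U/2+\tau_z\sqrt U/2$, that is, $H_c\ge h_\tau(U)$ since $H_c$ is an integer. Dropping the rank and uniqueness gates only enlarges the event, which yields the stated binomial tail bound.

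The main obstacle is this conditioning step. We must argue carefully that nothing other than the masks, and in particular not $U$ itself, depends on them, and that one mask per basename slot gives independence across retained slots. The other steps are routine.
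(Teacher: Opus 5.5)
Your proposal is correct and follows the paper's proof essentially step for step. In part (i) you use the Hamming triangle inequality $\|Av\oplus\eta\|_0\ge d(A)-\|\eta\|_0$ together with $d(A_{\mathrm{ret}})\ge d(A_0)-s$ under row deletion; in part (ii) you condition on all non-mask domain outputs, note that the shared basename mask $m_s$ preserves ties and complements each retained majority, and bound the false-accept event by the binomial tail after dropping the remaining gates. Your extra remarks, that rank six is implied by $2\|\eta\|_0<d(A)$ and survives the erasures, are small but correct additions to the paper's argument.
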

\end{trajtheorembox}

Part (i) guarantees payload recovery, not the separate claim and evidence gates.
Part (ii) is an exact finite-sample, fixed-claim bound; the normal score
\(1-\Phi(z_{\mathrm{owner}})\) is only a descriptive approximation.
% Appendix~\ref{app:owner-decoding} proves both parts using the Hamming-distance and
% bucket-symmetry arguments.
Appendix~\ref{app:owner-decoding} proves both parts by Hamming distance and
basename-mask symmetry.

\subsection{Projected-Integrity Guarantee}
\label{sec:failure-probabilities}

Because 12 does not divide \(2^\lambda\), the largest probability mass of one Q12
subtype is
\begin{equation}
q_{12,\lambda}=
\frac{\lceil2^\lambda/12\rceil}{2^\lambda}
\le\frac1{12}+2^{-\lambda}.
\label{eq:q12-accidental-match}
\end{equation}

\begin{trajtheorembox}
\begin{theorem}[Projected-integrity guarantees]
\label{prop:clean-replay}
\label{thm:integrity-localization}
Under Assumption~\ref{ass:canonical-replay}, an unchanged marked trajectory whose
expected carriers execute successfully replays cleanly.

Now consider a one-shot, role-preserving substitution that changes exactly one completed
ordinary segment \(S_k\), while boundaries, carrier positions, subtypes, and successful
statuses remain fixed.  If its recomputed ordinary input is fresh relative to every
prior same-key ordinary-domain query, then under Assumption~\ref{ass:ideal-prf},
\begin{equation}
\Pr[S_k\notin\mathcal A_{\mathrm{int}}]\le q_{12,\lambda}.
\label{eq:ordinary-localization-bound}
\end{equation}
If the ordinary role belongs to a retained completed pair and, outside a full
ordinary-digest collision, the recomputed group input is fresh relative to every prior
same-key group-domain query, then
\begin{equation}
\Pr[\text{neither the segment nor its pair is reported}]
\le 2^{-\lambda}+q_{12,\lambda}^2.
\label{eq:group-localization-bound}
\end{equation}
\end{theorem}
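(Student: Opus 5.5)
The proof has three parts: clean replay, the ordinary-seal bound \eqref{eq:ordinary-localization-bound}, and the group bound \eqref{eq:group-localization-bound}.

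\emph{Clean replay.} This is a determinism argument under Assumption~\ref{ass:canonical-replay}. Generation and replay run the same public transducer \(\Pi_\theta\) and the same role state machine on the same canonical stream. So they agree on every boundary, every expected role position, every segment buffer, and every observed predecessor subtype \(\tilde\sigma_{k-1}\). By induction over the stream, the verifier recomputes exactly the inputs the wrapper fed to HMAC in the ordinary, group, and terminal domains. It therefore obtains the same digests and the same \(\mathsf{Q12}\) residues as the executed seals. Since the carriers succeeded, no structural, status, or mismatch alarm fires.

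\emph{Ordinary bound.} Fix the edit before the key is sampled, as Assumption~\ref{ass:ideal-prf} allows. Because boundaries, positions, subtypes, and statuses are unchanged, replay reaches role \(k\) with the same observed subtype \(\tilde\sigma_k\) and the same \(\tilde\sigma_{k-1}\). Only \(\mathsf{payload}(S_k)\) differs. The injective encoding makes the recomputed input \(x'_k\) differ from the original \(x_k\). By hypothesis, \(x'_k\) is also fresh with respect to every prior ordinary-domain query, including \(x_k\) and all other trajectories under the same key. Hence \(d'_k\) is uniform on \(\{0,1\}^\lambda\) and independent of everything the editor saw, in particular of \(\tilde\sigma_k\). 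The segment escapes reporting only if \(\mathsf{Q12}[d'_k\bmod 12]=\tilde\sigma_k\). That event has probability at most the largest residue mass, \(q_{12,\lambda}\).

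\emph{Group bound.} Let \(g\) be the pair containing role \(k\). Split on whether \(d'_k=d_k\), a full \(\lambda\)-bit collision of a fresh uniform output.
\begin{itemize}
\item The collision has probability \(2^{-\lambda}\).
\item Outside it, the recomputed group input differs from the original, because the group input contains the full digest \(d'_k\) and the other fields are unchanged. By the freshness hypothesis, \(g'_g\) is uniform and independent of \(d'_k\), since the two come from separate domains on fresh inputs.
\end{itemize}
Neither unit is reported only if both \(\mathsf{Q12}[d'_k\bmod12]=\tilde\sigma_k\) and \(\mathsf{Q12}[g'_g\bmod12]=\tilde\gamma_g\). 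Independence bounds this by \(q_{12,\lambda}^2\). A union over the two cases gives \(2^{-\lambda}+q_{12,\lambda}^2\).

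\emph{Main obstacle.} The hardest step is making the independence claim precise. The group input is a function of \(d'_k\), so I would condition on \(d'_k\) and argue as follows. For each value of \(d'_k\ne d_k\), the group input is a fixed fresh string. Its output is then uniform given \(d'_k\) under the idealized-output assumption, so the joint probability factors. I would also check that the terminal seal and the later ordinary seal \(k+1\) are irrelevant. The edit leaves \(\tilde\sigma_k\) unchanged, so chaining on the \emph{observed} predecessor does not propagate the change. The bound only needs the segment or pair event, so other alarms can only help.
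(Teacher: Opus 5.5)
Your proposal is correct and follows the paper's proof essentially step for step. Clean replay is a prefix induction on identical public state. The ordinary bound uses injective encoding and global freshness to get \(x'_k\ne x_k\) and a uniform digest that matches the old observed subtype with probability at most \(q_{12,\lambda}\), and the group bound splits on the full-digest collision \(d'_k=d_k\) (probability \(2^{-\lambda}\)) before multiplying an ordinary residue match by a conditionally fresh group residue match. Your explicit conditioning on \(d'_k\) to justify that factorization is the paper's ``conditionally fresh'' step made slightly more explicit.
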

\end{trajtheorembox}

\begin{trajremarkbox}
\begin{remark}[Scope of the integrity bound]
No terminal factor is claimed because \(H_\theta\) has no quantitative collision model
or global terminal-query freshness assumption.  Role-changing single-site insertion,
deletion, and carrier edits are evaluated empirically.  Compound and
projection-preserving edits are outside both this theorem and the empirical integrity
claim.
\end{remark}
\end{trajremarkbox}

Appendix~\ref{app:integrity-proofs} proves clean replay and both localization clauses,
then records the boundary cases that are intentionally excluded from the probability
product.

\paragraph{Operational consequences.}
With \(m\) ordinary roles, the integrity channel emits
\begin{equation}
A_{\mathrm{seal}}(m)=m+\left\lfloor\frac m2\right\rfloor+1
\label{eq:seal-overhead}
\end{equation}
seal actions.  Ordinary alarms identify one segment, group alarms one adjacent pair,
and terminal alarms a suffix or closure inconsistency.  For total canonical byte length
% \(B\), generation and replay take expected \(O(B)\) time; owner scoring adds
% \(O(2^6U)\), with \(U\le63\).
\(B\), generation and replay take expected \(O(B)\) time; owner scoring adds
\(O(2^6U)\) for \(U\) retained basename slots.

\section{Cross-Framework Implementation}
\label{sec:implementation}

We implement \tool{} in SWE-agent, OpenHands, and OpenDev using a shared semantic
schema and versioned framework profiles.  The schema fixes the Q6/Q12 inventories,
owner-equation abstraction, and integrity commitments.  Each evaluated profile also
fixes event admission, role parsing, failure handling, closure hooks, literal hash inputs,
terminal serialization, and native action templates.  The reference contract below
states the conditions for the clean-replay analysis; the evaluated profiles have the
specific differences described in Appendix~\ref{app:framework-details}.

\subsection{Reference Adapter Contract}
\label{sec:adapter-contract}

Let \(\mathsf{Realize}_F(e)\) be framework \(F\)'s visible realization of canonical
event \(e\), and let \(\Pi_{F,\theta}\) be its public causal transducer.  For every
finite event sequence and every prefix, a conforming adapter satisfies
\begin{equation}
\Pi_{F,\theta}\!\left(
  \mathsf{Realize}_F(e_1)\cdots\mathsf{Realize}_F(e_n)\right)
=(e_1,\ldots,e_n).
\label{eq:adapter-conformance}
\end{equation}
This contract requires unique public action--observation joins, order preservation,
regular-file targets, and read-only Q6/Q12 realizations.  Missing observations map to
\(\mathsf{unknown}\), and non-success expected carriers raise alarms.  These are
reference requirements, rather than identical failure semantics in all evaluated
adapters.

\begin{table}[!htbp]
  \centering
  \caption{Native action surfaces of the evaluated framework profiles.}
  \label{tab:adapter-implementation}
  \small
  \begin{tabularx}{\textwidth}{@{}l
    >{\raggedright\arraybackslash}X
    >{\raggedright\arraybackslash}X
    >{\raggedright\arraybackslash}X@{}}
    \toprule
    Framework & Owner integration point & Integrity realization & Public verifier input \\
    \midrule
    SWE-agent
      & Agent-emitted request before computer execution
      & Read-only request executed through the ordinary path
      & Published action, outcome status, and order \\
    OpenHands
      & Agent-sourced action entering the event stream
      & Typed read action and observation executed by the runtime
      & Publicly joined action and observation events \\
    OpenDev
      & Structured tool call before dispatch
      & Read-only tool call and response appended by the dispatcher
      & Canonically indexed tool-call order and status \\
    \bottomrule
  \end{tabularx}
\end{table}

SWE-agent hooks textual agent-to-computer requests, OpenHands joins typed actions with
their observations, and OpenDev uses structured tool calls with a public index.  In all
three, owner rewriting occurs before execution and scheduled seals traverse the normal
runtime.  SWE-agent replay expects a final Q12 READ; OpenHands schedules a terminal READ
after a visible finish action, so an exit that bypasses this hook has no terminal
expectation in its evaluated replay.  OpenDev recognizes roles through a public
authenticated role/ordinal header in the tool call.  This header is retained verifier
input; auxiliary debug records are excluded.

The evaluated OpenHands profile admits only successful non-seal actions to Q12 state.
Its expected-role replay checks the next action's subtype and advances the role count,
but does not separately reject that carrier's failed observation.  OpenDev instead
records failed carriers and advances its ordinary commitment state only after a
successful ordinary carrier.  These distinctions affect clean baselines and deletion
coverage; the empirical results use each evaluated profile's paired replay.

\subsection{Configuration and Validation}
\label{sec:impl-configuration}

The evaluated configuration uses owner \(0x3f\), six owner bits,
\(\bar p_o=\OwnerSelectionProbability\), filename-once selection, uniform nonzero rows,
Q6 owner substitution, Q12 segment seals, group size two, and terminal closure.
All three adapters use the masked-orientation owner profile.  SWE-agent and OpenHands
preserve the natural READ family, while OpenDev's evaluated carrier uses the Locate
pair.  Ordinary and group payloads share one literal grammar; the OpenHands terminal
commitment uses compact sorted JSON and the other two profiles use newline-delimited
fields.

Validation covers Q6/Q12 realizations and profile-specific replay with auxiliary debug
metadata removed and native action text retained.  Fixed public vectors cover owner rows, empty segments, the \(m=0\)
terminal case, and ordinary/group/terminal payloads.  Appendix~\ref{app:framework-details}
documents the adapter contracts, while Appendix~\ref{app:watermark-details} shows
sanitized excerpts of the resulting released trajectories.

\section{Evaluation Methodology}
\label{sec:setup}

This section describes the empirical protocol used to evaluate \tool{}. The
probabilistic behavior of the owner equations and the integrity seals is analyzed in
Section~\ref{sec:analysis}; here we specify the experimental subjects, comparison
conditions, parameter choices, measurements, attacks, and artifact gates. Parameter
selection uses a separate offline carrier-capacity analysis. The three claim-bearing
evidence scopes are the primary full-watermark study, matched comparisons and ablations,
and a paired cross-agent and cross-task generalization study with No-WM and owner-WM
grids.

\subsection{Research Questions}
\label{sec:setup-rqs}

Our evaluation asks five questions.

\begin{description}[leftmargin=3.7em,style=nextline]
  \item[\textbf{RQ1: Ownership.}]
  Can the visible-only decoder recover the exact six-bit owner from batches of
  complete watermarked trajectories, and how does recovery change with batch size?

  \item[\textbf{RQ2: Robustness.}]
  Does batch-level owner recovery remain stable when visible actions are randomly
  inserted, deleted, or replaced?

  \item[\textbf{RQ3: Integrity.}]
  Do untouched trajectories replay without alarms, and do edits to protected payloads
  or seal carriers induce a new Q12 alarm under single-site and proportional attacks?

  \item[\textbf{RQ4: Cost and utility.}]
  How many visible actions do the two layers add, and do marked runs preserve the
  agent's repository-level task performance on the same tasks?

  \item[\textbf{RQ5: Comparison and generalization.}]
  How does \tool{} compare with matched controls and behavioral baselines, and does
  natural owner-carrier capacity remain available across agent, model-provider, and
  task-family boundaries?
\end{description}

RQ1 and RQ2 use a batch as the unit of attribution because the owner layer is
deliberately sparse. RQ3 uses a trajectory and its protected segments as the unit of
integrity detection. RQ4 is interpreted only within matched agent, model, task-list,
and seed conditions; raw action counts from different agent interfaces are not treated
as equivalent computational units.

\subsection{Evaluation Design and Scope}
\label{sec:setup-scope}

Table~\ref{tab:evaluation-scope} summarizes the three empirical scopes. The
\emph{primary full-WM study} evaluates trajectories generated with both the Q6 owner
layer and the Q12 ordinary, group, and terminal seals. It is the evidence source for
the paper's ownership, integrity-detection, and overhead claims. Matched controls
and ablations use the same frozen task lists wherever complete live artifacts are
available. The separate \emph{cross-pair generalization study} first runs a No-WM
control grid and then enables only the owner layer while varying agent, provider, and
task family at seed 42. The paired grids distinguish unsupported agent--task
conditions from limitations of natural READ embedding and batch decoding; neither
grid is used to support full-protocol integrity or utility claims.

\begin{table}[h]
  \caption{Evaluation scopes. The paired 27-cell grids cross every agent--task
  pairing and are analyzed separately from the primary full-WM study.}
  \label{tab:evaluation-scope}
  \small
  \begin{tabularx}{\columnwidth}{@{}p{0.22\columnwidth}X p{0.25\columnwidth}@{}}
    \toprule
    \textbf{Scope} & \textbf{Design} & \textbf{Claims} \\
    \midrule
    Primary full-WM
      & Three agents and three model providers; three fixed 50-task lists
        (seeds 42, 45, and 48) for each aligned agent--model condition; both layers
        enabled
      & RQ1--RQ4 \\
    Matched comparisons
      & No-WM, owner-only, two live behavioral baselines, and full \tool{} on
        identical task identifiers where artifacts pass the gates
      & RQ4--RQ5 \\
    Crossed grids
      & Paired No-WM and owner-WM grids over \(3\) agents \(\times\) \(3\)
        providers \(\times\) \(3\) task families, seed 42, 50-task batches
      & Carrier availability in RQ5 \\
    \bottomrule
  \end{tabularx}
\end{table}

This separation prevents two common attribution errors. First, owner recovery is
evaluated with Q6 groups, while integrity replay is evaluated separately with Q12.
Second, an unsuccessful downstream task outcome does not by itself invalidate
watermark evidence when the complete visible trajectory and protocol carriers are
present. We therefore report watermark validity, task outcome, and technical artifact
completeness as distinct fields.

\subsection{Agents, Models, Tasks, and Seeds}
\label{sec:setup-subjects}

We evaluate three independently engineered coding-agent frameworks: SWE-agent,
OpenHands, and the Rust implementation of OpenDev\footnote{\url{https://github.com/opendev-to/opendev}}.The initial JavaScript full-watermark batches used a locally modified
Rust implementation of OpenDev based on upstream commit
\texttt{e56cf85769a1}. A subset of subsequent local Owner-only
generalization reruns used upstream commit \texttt{a90fbda26e8c}.
Their released action surfaces differ substantially:
SWE-agent records text-oriented shell and ACI actions, OpenHands publishes typed
actions and observations, and OpenDev records structured tool calls. This variation is
central to the study because the detector is restricted to the visible representation
published by each framework.

The primary full-WM study pairs each framework with the task family used by its
aligned implementation:

\begin{itemize}
  \item \textbf{SWE-agent} uses 50-instance lists from SWE-bench Python
        \cite{jimenez2024swebench,yang2024sweagent};
  \item \textbf{OpenHands} uses 50-instance lists from the Java partition of
        SWE-PolyBench \cite{wang2024openhands,rashid2025swepolybench}; and
  \item \textbf{OpenDev} uses 50-instance lists from the JavaScript partition of
        SWE-PolyBench.
\end{itemize}

For each aligned agent we evaluate DeepSeek V4 Flash, GPT-5 mini, and MiniMax M3 on
the frozen task-list seeds 42, 45, and 48. Task identifiers are fixed before live
generation and reused within every matched comparison. We do not derive the reported
denominator from this nominal grid: a group enters a result only after the artifact and
method gates in Section~\ref{sec:setup-gates} pass, and each table reports its actual
eligible denominator.

The cross-pair generalization study uses the same three agents and providers but
crosses each of them with all three task families: SWE-bench Python, SWE-PolyBench
Java, and SWE-PolyBench JavaScript. Both 27-cell grids use seed 42 and frozen
50-instance inputs. The No-WM grid verifies that every crossed agent--task condition
materializes a usable visible trajectory; the owner-WM grid then asks whether the
owner mechanism can use the natural filenames exposed by that condition. Because
integrity is disabled, these cells are never pooled with the primary full-WM
localization results.

\subsection{Parameter Configuration and Selection}
\label{sec:setup-parameters}

We tune only the owner-selection probability \(p_o\). The owner payload, equation
construction, carrier alphabets, baseline profiles, and integrity schedule are fixed
across agents, models, task families, and attack strengths.
Table~\ref{tab:parameter-configuration} reports the density-selection evidence.

\begin{table}[!htbp]
  \caption{Owner-density parameter selection based on carrier-capacity analysis
  of fixed No-WM trajectories. A condition passes the rank-six, exact-owner,
  and batch-size-30 recovery criteria. We select the smallest \(p_o\) among
  the five candidates shown that passes all nine conditions.}
  \label{tab:parameter-configuration}
  \centering
  \small
  \begin{tabular}{@{}crrrl@{}}
    \toprule
    \(p_o\) (\%) & Rank-6 cells & Passing conditions &
    Min. \(B{=}30\) recovery & Decision \\
    \midrule
    15 & 27/27 & 6/9 & 0.440 & Fail \\
    20 & 27/27 & 7/9 & 0.771 & Fail \\
    25 & 27/27 & 7/9 & 0.928 & Fail \\
    \textbf{30} & \textbf{27/27} & \textbf{9/9} & \textbf{0.970} & \textbf{Pass} \\
    45 & 27/27 & 9/9 & 0.995 & Pass \\
    \bottomrule
  \end{tabular}
\end{table}

\paragraph{Owner-density selection.}
Using fixed No-WM trajectories, we evaluate carrier capacity at
\(p_o\in\{0.15,\allowbreak 0.20,\allowbreak 0.25,\allowbreak 0.30,\allowbreak 0.45\}\) by constructing owner equations from
the observed eligible READ opportunities. We hold the task identifiers,
owner identifier, key derivation, filename-once rule, equation family, and decoding
procedure fixed. Each rate is evaluated over 27 seed-level cells grouped into nine
agent--LLM conditions. We record rank-six coverage, the number of conditions that
pass all selection gates, and the minimum condition-level exact-owner recovery rate
at batch size 30.
These density-selection artifacts are not pooled into the primary detection tables.

Selection is constraint-based rather than based on the largest raw hit count. A
candidate passes only when all 27 seed-level cells reach rank six and recover the
exact owner, and every agent--LLM condition has a pooled batch-size-30 recovery rate
of at least \(0.95\). Each condition pools 1,000 resampling trials from each of
the three seeds.
Table~\ref{tab:parameter-configuration} shows that rank six alone is insufficient:
\(p_o\leq 0.25\) leaves at least two
conditions below the complete gate and has a minimum \(B{=}30\) recovery below
\(0.95\). Both \(0.30\) and \(0.45\) pass all nine conditions. We therefore select
\(p_o=0.30\), the smallest passing density among the five candidates, to minimize
intervention in natural READs.

\paragraph{Transfer to other agents, models, and task families.}
We do not retune \(p_o\) for each deployment condition. Instead, for every cell in the
No-WM generalization grid we apply the same visible canonicalizer and count
\(D_{\mathcal B}\), the distinct normalized basenames in eligible natural READs, and
\(C_{\mathcal B}\), the filename-once carrier opportunities. With \(p_o=0.30\), these
counts determine the expected number of selected equation generators and rewrites via
Eqs.~\eqref{eq:owner-selected-budget} and~\eqref{eq:owner-rewrite-budget}, and the
idealized rank-six probability via Eq.~\eqref{eq:rank-six-probability}. This capacity
audit uses natural No-WM behavior; task utility and owner recovery are measured in the
primary full-WM groups. The same \(p_o=0.30\), without per-cell adjustment, reaches rank six and the
exact owner in all 27 agent--provider--seed groups. The combination of the No-WM
capacity audit and the primary full-WM result is the basis for using \(p_o=0.30\)
throughout the evaluated settings.

\paragraph{Baseline operating points.}
ActHook-style uses a trajectory-level selection ratio of \(R=0.05\), following
ActHook's default watermark ratio, and adds exactly one keyed, read-only hook to a
selected trajectory.
\agentmarku{} has no density parameter: it uses every eligible existing Q6 READ and
changes only its visible subtype. This dense profile gives the adaptation its maximum
available equation budget while preserving zero added actions. Both profiles are
frozen before live runs and remain unchanged across agents and models.

\paragraph{Integrity operating point.}
The integrity layer is configured from the design analysis rather than selected on
the attack results. Q12 is the largest READ-subclass alphabet implemented consistently
by all three adapters and gives an approximate \(1/12\) accidental match probability
for an altered keyed seal. Ordinary seals protect segments ending at the fixed critical
classes GENERATE\_FIX, REFACTOR, RUN\_TESTS, and REPRODUCE; consecutive critical
actions are coalesced to avoid sealing every step in a burst. A group protector commits
to each pair of ordinary seals, so \(G=2\) keeps every ordinary seal near a second keyed
check while adding approximately one protector per two segments. The terminal seal
commits to the final suffix, seal count, and observed seal sequence. Detection reports
whether replay introduces an ordinary, group, terminal, missing-carrier,
failed-carrier, or unexpected-carrier alarm relative to the clean trajectory.

\subsection{Methods and Comparison Conditions}
\label{sec:setup-conditions}

The full method uses the frozen configuration in
Section~\ref{sec:impl-configuration}. On matched task identifiers, we compare
an unwatermarked control, two prior-method baselines, an ownership-only
ablation, and the complete \tool{} method.

\begin{description}[leftmargin=2.8em,style=nextline]
  \item[\textbf{Unwatermarked control (No-WM).}]
  The same agent--LLM pair is executed on the matched tasks with watermarking
  completely disabled. The resulting trajectories contain no owner-channel
  encoding or integrity seals. This condition provides the negative control
  and the reference for measuring utility and execution overhead.

  \item[\textbf{Prior-method baseline (ActHook-style).}]
  A live visible-action analogue of ActHook \cite{meng2026acthook}. A keyed
  subset of eligible runs receives a recognizable read-only hook action, and
  the detector checks whether visible hook evidence survives. This condition
  implements ActHook's action-insertion principle in the live visible-action
  setting.

  \item[\textbf{Prior-method baseline (\agentmarku{}).}]
  An adaptation of AgentMark \cite{huang2026agentmark} to the native tool-dispatch
  interface. AgentMark elicits an explicit planning-time probability list \(P_t\)
  from the LLM and uses it for sampling and decoding. The evaluated native interfaces
  provide selected tool calls rather than this probability list, so
  \agentmarku{} applies keyed uniform selection over a public set of
  functionally equivalent visible choices. An honest abstention caused by too
  few opportunities remains a method outcome. It uses every eligible Q6 READ
  and therefore operates at the maximum carrier density available to the
  adaptation rather than at a tuned sparse operating point.

  \item[\textbf{Ownership-only ablation (Owner-only).}]
  This ablation of \tool{} enables the filename-linear Q6 owner-ID channel
  while disabling the Q12 integrity seals, group-of-two protectors, and
  terminal commitment. It isolates the owner-ID recovery provided by the Q6
  channel, including its property of adding no trajectory actions.

  \item[\textbf{Complete method (\tool{}).}]
  The complete method combines the filename-linear Q6 owner-ID channel with
  Q12 ordinary segment seals, group-of-two protectors, and a terminal
  commitment.
\end{description}

All comparison methods run during agent generation rather than being added to a
No-WM log after the fact. This matters because an inserted action or altered READ can
change the next observation seen by the model. We compare only conditions with the
same agent, model, seed, and task identifiers; utility comparisons remain within the
same agent--model row.

\subsection{Metrics}
\label{sec:setup-metrics}

\paragraph{Ownership recovery.}
% For every batch we report the number of visible equation observations, the number of
% unique coefficient rows after basename and row aggregation, algebraic rank, decoded
% owner, and the method-specific decoder outcome. For \agentmarku{}, exact recovery
% requires a consistent rank-six GF(2) system whose solution is \(0x3f\). For Owner-only
% and \tool{}, it requires a unique maximum-consensus solution equal to \(0x3f\).
% The score in Eq.~\eqref{eq:owner-z} reports a descriptive evidence margin.
% Batch-size curves preserve the frozen trajectory order
% or use a stated resampling procedure and report abstention when rank is insufficient.
For every batch we report raw equation observations, retained basename slots
(majority aggregation with ties erased), distinct coefficient vectors, rank, decoded
owner, and decoder outcome. For \agentmarku{}, exact recovery
requires a consistent rank-six GF(2) system whose solution is \(0x3f\). For Owner-only
and \tool{}, it requires a unique maximum-consensus solution equal to \(0x3f\).
The score in Eq.~\eqref{eq:owner-z} reports a descriptive evidence margin.
Batch-size curves preserve the frozen trajectory order
or use a stated resampling procedure and report abstention when rank is insufficient.

\paragraph{Cross-method carrier evidence.}
For the matched detection comparison, let \(N\) be the number of visible binary
carrier observations and \(H\) the number that match the method's keyed target.
We report
\[
  z=\frac{H-N/2}{\sqrt{N/4}},\qquad
  p=1-\Phi(z),\qquad
  \mathrm{Hit}=H/N.
\]
For ActHook-style, an observation is an exact keyed-hook match; for
\agentmarku{}, Owner-only, and \tool{}, it is an equation satisfied by the claimed
owner. Within each agent--LLM cell, Table~\ref{tab:baseline-current} reports the mean
seed-level \(z\), the largest normal-reference \(p\)-value across seeds, and the pooled
\(\sum H/\sum N\). Following the random-agreement reporting convention used in prior
agent-watermarking evaluations \cite{an2026seqwm}, the \(z\)-score and one-sided
reference \(p\)-value are standardized summaries of accumulated carrier evidence.
They are not used as a cross-method efficiency test or as the operative group-level
decision rule; each method's decision rule is defined in
Section~\ref{sec:setup-conditions}.

\paragraph{Owner robustness.}
For \agentmarku{}, Owner-only, and \tool{}, the unit of success is a complete
corrupted batch whose decoded identifier remains exactly \(0x3f\). The robustness
curves retain the three LLM columns. For random visible-action corruption, every
method--attack--LLM--corruption point contains 4,500 corrupted batches (500 trials
for each of three agents and three seeds). ActHook-style appears only as a
detection-only reference because it encodes hook presence rather than a six-bit
% identifier. We additionally evaluate exact-budget white-box carrier deletion. For
% each 50-trajectory batch, the attacker deletes
% \(k=\lfloor\delta N+0.5\rfloor\) owner-carrier occurrences. For an ID-bearing
% method it exhaustively checks the 63 nonzero six-bit null directions and chooses
% the minimum-cost cut that makes the retained equation system rank-deficient; Q12
% integrity seals are excluded. Each white-box point therefore contains nine
% deterministic seed batches per LLM rather than Monte Carlo replicates.
identifier. We additionally evaluate exact-budget white-box carrier deletion. In
each 50-trajectory batch, \(N\) is the number of clean owner-carrier occurrences
and \(k=\lfloor\delta N+0.5\rfloor\). The attacker constructs up to 63
deterministic deletion sets using the clean coefficient rows, or hook orderings
for ActHook-style, with clean Q12 seal positions excluded. Each candidate removes
exactly \(k\) action atoms from the visible trajectories, including cause-linked
observations for OpenHands. We then re-extract the evidence and rerun the native
decoder, reporting whether recovery or hook detection survives the strongest
tested candidate. Appendix~\ref{app:ownership-results} specifies candidate
construction. Each point contains nine deterministic seed batches per LLM;
candidate searches are not counted as additional trials.

The disaggregated robustness figures retain all nine agent--LLM cells and the three
primary seeds in each cell. For every random attack and corruption level, 500
deterministic trials are run per seed. Each plotted random-attack point therefore
contains 1,500 corrupted \(B=50\) batches and uses the method-specific exact owner-ID
recovery criterion above, with ActHook-style as a binary hook-detection reference.

\paragraph{Integrity detection.}
Clean replay is the fraction of untouched full-WM trajectories with no ordinary,
group, terminal, missing-carrier, failed-carrier, or unexpected-carrier alarm. Attack
trials require a complete, classifiable paired-clean replay baseline and an eligible
modification site. Alarms already present in that baseline do not count as
attack-induced detections. For exhaustive
single-site attacks, we report whether a payload insertion, deletion, replacement, or
seal removal induces at least one new Q12 protocol alarm. For proportional attacks,
we apply the same any-alarm decision to random insertion, deletion, and replacement,
and to targeted Q12-seal removal under the same nominal visible-action budget. The
seal-removal attack spends its budget only on existing ordinary, group, and terminal
seals, so its realized deletion rate saturates once all seals have been removed. These
any-alarm rates are separate from localization. For the disaggregated results in
Table~\ref{tab:localization-summary}, the agent-native Q12 report set is fixed from
paired clean and attacked replay before the modified positions are supplied.
Each report is mapped to an inclusive interval \([\ell,r]\) of paired-clean visible
actions. For every eligible payload modification, the accepted hierarchy contains its
ordinary segment, the corresponding enclosing group when present, and the terminal
region. We select the narrowest newly reported mapped interval in this hierarchy that
geometrically covers the modified action or insertion boundary; its width is
\(r-\ell+1\). A site with no accepted covering report is a localization miss: it
remains in the coverage denominator and does not enter the conditional width mean or
median. The one-action column uses exactly one selected action or insertion boundary
and 50 deterministic trials per trajectory and attack. Percentage columns retain the
frozen proportional schedules. Randomly selected Q12 seal actions remain in deletion
and replacement mutants but are excluded from the payload-site denominator; targeted
seal removal is evaluated only by the paired-clean any-alarm decision. Each displayed
condition--attack cell micro-pools eligible sites over the three primary seeds; the
overall row additionally pools the three random attack types. The interval construction
follows the ordinary, group, and terminal replay units in
Appendix~\ref{app:protocol-details}.

In the full-watermark strength sweep, integrity is evaluated per attacked
trajectory rather than per batch. Each agent--LLM--attack--corruption point contains
7,500 trajectory trials: 50 trajectories, 50 deterministic trials, and three seeds. A trial
is detected when the attacked visible stream introduces an ordinary, group,
terminal, missing-carrier, failed-carrier, or unexpected-carrier alarm not present in
that trajectory's clean replay. This clean-replay difference prevents a genuine
runtime outcome already present in the released trajectory from being counted as an
attack-induced detection.

\paragraph{Cost and task outcome.}
We count owner rewrites separately from added integrity actions. The owner-only layer
adds no structural action; full \tool{} reports ordinary, group, and terminal READs and
their within-condition percentage overhead. Where a task-family harness is available,
we report paired benchmark resolution. We define Pass@1 as the fraction of benchmark
tasks for which the single submitted patch passes
the corresponding benchmark's repository-level test harness; empty or invalid patches
count as failures. Solve-rate comparisons pair TrajMark and No-WM runs by agent, LLM,
seed, and task identifier, and use the same frozen harness configuration. We report the
Pass@1 percentages and their difference in percentage points. To assess whether those
changes have a consistent direction independently of the pooled magnitude, we apply
an exact two-sided sign test to the nine matched agent--LLM percentage changes,
excluding ties. Token and wall-clock
measurements are reported only when the same adapter
exposes comparable accounting, because API queueing, retries, and repository commands
otherwise dominate the difference. For OpenDev records with missing native cost measurements, turns and
output tokens are taken from matched cost-only re-executions;
visible-action counts are taken from the canonical runs.

\subsection{Attack Protocol}
\label{sec:setup-attacks}

We use separate attack suites for the robust owner layer and the fragile integrity
layer.

\paragraph{Random batch corruption.}
For RQ2, an attacker inserts, deletes, or replaces a fraction
\(\delta\in\{0.05,0.10,0.20,0.30,0.50\}\) of visible actions. Positions are sampled
without access to the secret key. Insertions use syntactically valid visible actions;
deletions remove observed actions; replacements change the visible action
representation. The appropriate detector is rerun on every corrupted batch: exact
six-bit recovery for \agentmarku{}, Owner-only, and \tool{}. The owner-ID baseline
comparison uses matched groups from SWE-agent, OpenHands, and OpenDev, retaining
identical task identifiers and seeds within each agent--LLM condition. ActHook-style is evaluated
separately as a binary hook detector. This suite measures tolerance to non-targeted
log damage, not an oracle attack that knows the selected filename equations.

The cross-agent strength sweep applies the same five values of \(\delta\) to the
complete full-WM trajectories from every agent--LLM--seed group. In each trial, the
requested fraction is rounded separately for each trajectory and sampled from that
trajectory's released visible actions. Insertions and replacements draw non-protocol
actions from the same clean seed group. The owner detector then re-decodes the full
50-trajectory batch, while the integrity detector evaluates every attacked
trajectory against its own clean replay. Consequently, the sweep shows both how the
robust owner channel degrades with attack strength and how quickly the fragile layer
reacts to the same corruption.

For proportional seal removal, a trajectory with \(N\) released visible actions receives
the nominal budget
\(k=\min\{N,\max(1,\lceil\delta N\rceil)\}\). The attacker samples
without replacement from the trajectory's existing Q12 ordinary, group, and terminal
seals and removes \(\min(k,S)\) of the \(S\) available seals. Thus \(\delta\) denotes a
nominal visible-action budget; the realized deletion fraction can be smaller after the
seal set is exhausted. This attack is scored only by the paired-clean Q12 any-alarm
decision.

\paragraph{Exhaustive single-site integrity attacks.}
For RQ3, we enumerate eligible attack sites in every trajectory with a complete,
classifiable paired-clean replay baseline. Payload
attacks insert a READ into a protected segment, delete a protected READ or critical
action, replace a READ subtype, or replace a critical target. Protector attacks delete
or replace each ordinary seal, group protector, and terminal commitment. Every mutant
changes one eligible site and is verified from the modified visible stream. We retain
the exact trial denominator for each attack family because trajectories expose
different numbers of protected segments and seal carriers.

Insertion, deletion, and replacement are the primary payload-tampering families.
Attacks on ordinary, group, and terminal carriers are reported separately because they
target the integrity mechanism itself. Suffix loss is evaluated through the terminal
role and its commitment to the final payload, seal counts, and observed seal sequence.

\subsection{Artifact and Method Gates}
\label{sec:setup-gates}

A run enters the analysis only when its selected identifiers match the frozen input,
provider and model identity
match the manifest, required trajectory and prediction artifacts are non-empty, and
the public protocol configuration and implementation hashes match the frozen manifest.
For full \tool{}, the owner method must be the filename-once linear Q6 protocol and
the integrity method must be the group-of-two Q12 protocol described in
Sections~\ref{sec:owner} and \ref{sec:localization}. Only runs satisfying these
method gates are pooled.

The visible-only gate removes debug records and reconstructs owner equations,
protected segments, and seals solely from the released action order. A full-WM
trajectory must provide a complete, classifiable clean replay baseline before it
contributes attack sites, and only alarm signatures newly introduced by a mutant
count as attack detection. Provider, authentication, setup, and missing-artifact
failures are treated as technical exclusions. Completed model interactions, timeouts,
and failed tests remain honest task outcomes when their visible watermark artifacts
are complete.

Finally, every reported group is regenerated from synchronized local artifacts. The
evaluation snapshot stores
the frozen input hash, runtime and detector identifiers, eligible denominators, and the
result file used to generate each table.

\paragraph{AI-assisted research workflow.}
We used ChatGPT and Codex (OpenAI) to assist with discussions of the watermarking
and verification design, implementation and debugging of the watermarking and
evaluation code, and development of scripts for aggregating recorded experimental
outputs and producing result tables and plots. OpenAI image-generation tools assisted
in preparing the method-overview illustration. AI-assisted code was checked using
automated tests and comparisons against recorded verification outputs. The numerical
results reported in this paper were computed from the recorded runs and verifier
outputs by the evaluation and analysis scripts.

\section{Evaluation Results}
\label{sec:evaluation-results}

We report exact attribution, robustness, Q12 tamper detection, and utility in that
order. Each result is kept at its valid unit of analysis; agent--LLM conditions are
not merged unless an overall value is explicitly labeled.

\subsection{Cross-Agent and Cross-Task Owner Attribution}
\label{sec:ablation-results}
\label{sec:results-generalization}

The crossed grids test whether the frozen Q6 owner layer transfers beyond the three
aligned agent--task pairings.

\begin{table}[H]
  \caption{Cross-agent and cross-task owner attribution. Each cell is one
  agent--provider--task-family batch.}
  \label{tab:generalization-current}
  \small
  \begin{tabular}{@{}lll@{}}
    \toprule
    Condition & Measure & Result \\
    \midrule
    No-WM & Valid visible batch & 27/27 \\
    Owner-only & Exact owner-ID recovery & 27/27 \\
    \bottomrule
  \end{tabular}
\end{table}

All 27 No-WM cells produce usable visible batches, and Owner-only recovers the exact
identifier \texttt{0x3f} in all 27 corresponding cells. Thus the owner layer transfers
across all tested agents, providers, and task families without per-cell retuning.

\FloatBarrier
\subsection{Trace Footprint and Runtime Cost}
\label{sec:results-execution-cost}

Table~\ref{tab:execution-cost-current} shows the realized actions, model turns, and
output tokens for each matched agent--LLM row. Complete \tool{} has more visible
actions than No-WM in all nine rows, reflecting the explicit Q12 seals;
Section~\ref{sec:results-step-cost} isolates their exact structural cost from ordinary
run-to-run variation. Model turns change by \(-9.5\) to \(+5.7\), and output tokens by
\(-1.23\)k to \(+0.62\)k, so neither increases systematically across the matched rows.
Turns and output tokens use framework-native accounting.

\begingroup
\footnotesize
\setlength{\tabcolsep}{2.4pt}
\renewcommand{\arraystretch}{1.00}
\newcommand{\costup}[1]{\textcolor[HTML]{B04A3A}{\raisebox{0.35ex}{\scriptsize$\uparrow$#1}}}
\newcommand{\costdown}[1]{\textcolor[HTML]{2A8C91}{\raisebox{0.35ex}{\scriptsize$\downarrow$#1}}}
\setlength{\LTcapwidth}{\textwidth}
\begin{longtable}{@{}p{\textwidth}@{}}
\caption{Trace footprint and runtime cost across three coding agents and three LLMs
(three matched 50-task seeds; 150 runs per full cell). Superscripts show absolute
changes from No-WM within the same agent--LLM row (cyan: lower; red: higher).
AH, AM-U, Owner, and TM denote ActHook-style, AgentMark-U, Owner-only, and
complete \tool{}.}
\label{tab:execution-cost-current} \\
\endfirsthead
\multicolumn{1}{@{}l@{}}{\small Table~\thetable{} (continued)} \\
\endhead
\endfoot
\endlastfoot
\begin{minipage}{\linewidth}
\begin{tabular*}{\linewidth}{@{\extracolsep{\fill}}llrrrrr@{}}
  \toprule
  \multicolumn{7}{c}{\textbf{Avg. Visible Actions} $\downarrow$} \\
  \cmidrule(lr){1-7}
  LLM & Agent & No-WM & AH & AM-U & Owner & TM \\
  \midrule
  \multirow{3}{*}{\shortstack[l]{DeepSeek V4\\Flash}}
    & SWE-agent & 41.1 & 40.1\costdown{1.0} & 51.0\costup{9.9} & 41.8\costup{0.7} & 47.7\costup{6.6} \\
    & OpenHands & 33.4 & 31.7\costdown{1.7} & 33.7\costup{0.4} & 32.7\costdown{0.6} & 34.4\costup{1.1} \\
    & OpenDev & 33.8 & 45.3\costup{11.5} & 47.7\costup{13.9} & 47.6\costup{13.8} & 51.3\costup{17.5} \\
  \addlinespace[1pt]
  \multirow{3}{*}{GPT-5 mini}
    & SWE-agent & 16.2 & 16.0\costdown{0.2} & 39.9\costup{23.7} & 16.6\costup{0.4} & 27.0\costup{10.8} \\
    & OpenHands & 25.1 & 25.6\costup{0.5} & 31.3\costup{6.2} & 26.7\costup{1.5} & 29.9\costup{4.7} \\
    & OpenDev & 18.3 & 21.4\costup{3.1} & 19.5\costup{1.2} & 15.7\costdown{2.6} & 19.8\costup{1.4} \\
  \addlinespace[1pt]
  \multirow{3}{*}{MiniMax M3}
    & SWE-agent & 43.1 & 41.5\costdown{1.5} & 43.5\costup{0.4} & 41.5\costdown{1.6} & 46.9\costup{3.9} \\
    & OpenHands & 34.1 & 33.4\costdown{0.7} & 35.7\costup{1.6} & 31.9\costdown{2.2} & 37.6\costup{3.5} \\
    & OpenDev & 68.6 & 72.7\costup{4.1} & 74.9\costup{6.3} & 55.0\costdown{13.6} & 73.3\costup{4.7} \\
  \bottomrule
\end{tabular*}
\end{minipage} \\
\begin{minipage}{\linewidth}
\begin{tabular*}{\linewidth}{@{\extracolsep{\fill}}llrrrrr@{}}
  \toprule
  \multicolumn{7}{c}{\textbf{Avg. Turns} $\downarrow$} \\
  \cmidrule(lr){1-7}
  LLM & Agent & No-WM & AH & AM-U & Owner & TM \\
  \midrule
  \multirow{3}{*}{\shortstack[l]{DeepSeek V4\\Flash}}
    & SWE-agent & 41.4 & 40.5\costdown{0.9} & 51.6\costup{10.2} & 42.2\costup{0.8} & 40.4\costdown{1.0} \\
    & OpenHands & 31.4 & 29.8\costdown{1.6} & 31.8\costup{0.4} & 30.7\costdown{0.7} & 30.8\costdown{0.6} \\
    & OpenDev & 40.1 & 40.0\costdown{0.2} & 41.9\costup{1.8} & 44.6\costup{4.5} & 39.4\costdown{0.7} \\
  \addlinespace[1pt]
  \multirow{3}{*}{GPT-5 mini}
    & SWE-agent & 18.1 & 17.5\costdown{0.6} & 47.7\costup{29.6} & 18.8\costup{0.7} & 23.8\costup{5.7} \\
    & OpenHands & 24.1 & 24.5\costup{0.4} & 30.3\costup{6.2} & 25.7\costup{1.5} & 26.2\costup{2.0} \\
    & OpenDev & 23.3 & 26.4\costup{3.0} & 24.5\costup{1.2} & 18.5\costdown{4.8} & 18.7\costdown{4.6} \\
  \addlinespace[1pt]
  \multirow{3}{*}{MiniMax M3}
    & SWE-agent & 44.3 & 42.7\costdown{1.6} & 44.7\costup{0.4} & 42.7\costdown{1.6} & 42.2\costdown{2.0} \\
    & OpenHands & 32.9 & 32.0\costdown{0.9} & 34.3\costup{1.4} & 30.5\costdown{2.4} & 34.7\costup{1.8} \\
    & OpenDev & 70.8 & 74.0\costup{3.2} & 72.9\costup{2.0} & 54.7\costdown{16.2} & 61.3\costdown{9.5} \\
  \bottomrule
\end{tabular*}
\end{minipage} \\
\begin{minipage}{\linewidth}
\begin{tabular*}{\linewidth}{@{\extracolsep{\fill}}llrrrrr@{}}
  \toprule
  \multicolumn{7}{c}{\textbf{Avg. Output Tokens (k)} $\downarrow$} \\
  \cmidrule(lr){1-7}
  LLM & Agent & No-WM & AH & AM-U & Owner & TM \\
  \midrule
  \multirow{3}{*}{\shortstack[l]{DeepSeek V4\\Flash}}
    & SWE-agent & 5.43 & 5.41\costdown{0.02} & 5.50\costup{0.07} & 4.81\costdown{0.62} & 5.28\costdown{0.14} \\
    & OpenHands & 7.34 & 7.38\costup{0.04} & 4.60\costdown{2.74} & 6.57\costdown{0.77} & 6.43\costdown{0.91} \\
    & OpenDev & 20.52 & 21.34\costup{0.81} & 22.05\costup{1.52} & 19.77\costdown{0.75} & 21.14\costup{0.62} \\
  \addlinespace[1pt]
  \multirow{3}{*}{GPT-5 mini}
    & SWE-agent & 0.74 & 0.69\costdown{0.05} & 1.80\costup{1.05} & 0.80\costup{0.06} & 1.13\costup{0.39} \\
    & OpenHands & 6.17 & 6.21\costup{0.03} & 4.20\costdown{1.98} & 6.46\costup{0.29} & 6.58\costup{0.41} \\
    & OpenDev & 4.09 & 4.80\costup{0.71} & 4.60\costup{0.51} & 3.59\costdown{0.50} & 3.80\costdown{0.29} \\
  \addlinespace[1pt]
  \multirow{3}{*}{MiniMax M3}
    & SWE-agent & 2.32 & 2.28\costdown{0.04} & 1.54\costdown{0.78} & 2.19\costdown{0.14} & 1.96\costdown{0.36} \\
    & OpenHands & 6.41 & 5.70\costdown{0.71} & 3.97\costdown{2.44} & 5.80\costdown{0.61} & 6.40\costdown{0.01} \\
    & OpenDev & 12.89 & 14.36\costup{1.47} & 13.38\costup{0.50} & 9.89\costdown{3.00} & 11.65\costdown{1.23} \\
  \bottomrule
\end{tabular*}
\end{minipage} \\
\end{longtable}
\endgroup

\FloatBarrier
\subsection{Behavioral Watermark Evidence}
\label{sec:results-baselines}

\begingroup
\setlength{\intextsep}{3pt}
\captionsetup{skip=4pt}
\begin{table}[H]
  \centering
  \caption{Detection evidence across three coding agents and three LLMs. Each
  cell pools three seed-matched 50-trajectory batches and reports mean seed-level
  \(z\), the largest normal-reference \(p\)-value \(1-\Phi(z)\) across seeds, and pooled
  keyed-carrier Hit\%. Exact owner recovery is reported separately in
  Section~\ref{sec:results-owner-only}.}
  \label{tab:baseline-current}
  \scriptsize
  \setlength{\tabcolsep}{2.3pt}
  \renewcommand{\arraystretch}{1.00}
  \resizebox{\textwidth}{!}{%
  \begin{tabular}{@{}ll*{3}{r>{\centering\arraybackslash}m{6.2em}r}@{}}
    \toprule
    & & \multicolumn{3}{c}{SWE-agent}
      & \multicolumn{3}{c}{OpenHands}
      & \multicolumn{3}{c}{OpenDev} \\
    \cmidrule(lr){3-5}\cmidrule(lr){6-8}\cmidrule(lr){9-11}
    LLM & Method
      & \(z\uparrow\) & \(p\downarrow\) & Hit\%\(\uparrow\)
      & \(z\uparrow\) & \(p\downarrow\) & Hit\%\(\uparrow\)
      & \(z\uparrow\) & \(p\downarrow\) & Hit\%\(\uparrow\) \\
    \midrule

    \multirow{4}{*}{\shortstack[l]{DeepSeek V4\\Flash}}
      & \acthook{}
        & 1.28 & \(1.59{\times}10^{-1}\) & 100.0
        & 1.82 & \(4.2{\times}10^{-2}\) & 100.0
        & 1.62 & \(1.59{\times}10^{-1}\) & 100.0 \\
      & \agentmarku{}
        & 29.51 & \(1.5{\times}10^{-179}\) & 100.0
        & 35.83 & \(8.4{\times}10^{-273}\) & 100.0
        & 20.88 & \(6.7{\times}10^{-92}\) & 100.0 \\
      & Owner-only
        & 6.82 & \(1.3{\times}10^{-10}\) & 100.0
        & 9.84 & \(9.0{\times}10^{-21}\) & 100.0
        & 8.55 & \(1.0{\times}10^{-13}\) & 100.0 \\
      & \textbf{\tool{} (ours)}
        & \textbf{6.68} & \(\boldsymbol{4.6{\times}10^{-11}}\) & \textbf{100.0}
        & \textbf{9.93} & \(\boldsymbol{4.3{\times}10^{-22}}\) & \textbf{100.0}
        & \textbf{9.97} & \(\boldsymbol{7.2{\times}10^{-22}}\) & \textbf{100.0} \\

    \midrule
    \multirow{4}{*}{GPT-5 mini}
      & \acthook{}
        & 1.28 & \(1.59{\times}10^{-1}\) & 100.0
        & 1.82 & \(4.2{\times}10^{-2}\) & 100.0
        & 1.62 & \(1.59{\times}10^{-1}\) & 100.0 \\
      & \agentmarku{}
        & 32.99 & \(8.1{\times}10^{-218}\) & 100.0
        & 37.45 & \(1.6{\times}10^{-294}\) & 100.0
        & 12.05 & \(5.6{\times}10^{-30}\) & 100.0 \\
      & Owner-only
        & 4.54 & \(1.9{\times}10^{-5}\) & 100.0
        & 7.85 & \(5.9{\times}10^{-12}\) & 100.0
        & 5.15 & \(2.3{\times}10^{-6}\) & 100.0 \\
      & \textbf{\tool{} (ours)}
        & \textbf{5.12} & \(\boldsymbol{8.1{\times}10^{-7}}\) & \textbf{100.0}
        & \textbf{7.78} & \(\boldsymbol{3.6{\times}10^{-14}}\) & \textbf{100.0}
        & \textbf{5.45} & \(\boldsymbol{3.2{\times}10^{-5}}\) & \textbf{100.0} \\

    \midrule
    \multirow{4}{*}{MiniMax M3}
      & \acthook{}
        & 1.28 & \(1.59{\times}10^{-1}\) & 100.0
        & 1.63 & \(7.9{\times}10^{-2}\) & 100.0
        & 1.62 & \(1.59{\times}10^{-1}\) & 100.0 \\
      & \agentmarku{}
        & 28.75 & \(3.0{\times}10^{-168}\) & 100.0
        & 34.63 & \(1.8{\times}10^{-253}\) & 100.0
        & 20.70 & \(2.8{\times}10^{-89}\) & 100.0 \\
      & Owner-only
        & 7.48 & \(6.0{\times}10^{-14}\) & 100.0
        & 8.89 & \(3.0{\times}10^{-17}\) & 100.0
        & 10.01 & \(1.1{\times}10^{-17}\) & 100.0 \\
      & \textbf{\tool{} (ours)}
        & \textbf{8.02} & \(\boldsymbol{4.7{\times}10^{-15}}\) & \textbf{100.0}
        & \textbf{9.57} & \(\boldsymbol{2.0{\times}10^{-21}}\) & \textbf{100.0}
        & \textbf{10.28} & \(\boldsymbol{5.2{\times}10^{-19}}\) & \textbf{100.0} \\

    \bottomrule
  \end{tabular}%
  }
\end{table}

\endgroup

All methods reach 100\% Hit, while their evidence volumes differ: ActHook-style yields
\(z=1.28\)--1.82, \agentmarku{} yields \(z=12.05\)--37.45, and Owner-only and
complete \tool{} yield \(z=4.54\)--10.01 and \(z=5.12\)--10.28, respectively.
Because \(z\) measures accumulated carrier evidence rather than cross-method
efficiency, Section~\ref{sec:results-owner-only} uses exact owner-ID recovery as the
decision-level comparison.

\subsection{Owner-ID Recovery and Robustness}
\label{sec:results-owner-only}

We count owner-ID recovery as successful when the method-specific group-level
decoder returns the exact six-bit owner ID \texttt{0x3f}; abstentions and incorrect
IDs count as failures. Under this criterion, \agentmarku{}, Owner-only, and complete
\tool{} recover the exact owner ID in all 27 clean groups. ActHook-style is
N/A for owner-ID recovery because hook presence does not encode an owner
identity.

\subsubsection{Owner-ID Recovery under Random Corruption and White-box Carrier Deletion}
\label{sec:results-owner-only-attacks}

Figure~\ref{fig:verified-owner-id-recovery-method-comparison} evaluates exact
owner-ID recovery for the three ID-bearing methods under random visible-action
corruption and targeted owner-carrier removal. ActHook-style is included as a
detection-only reference because it does not encode an owner ID.

\begin{figure}[!ht]
  \centering
  \includegraphics[width=\linewidth]{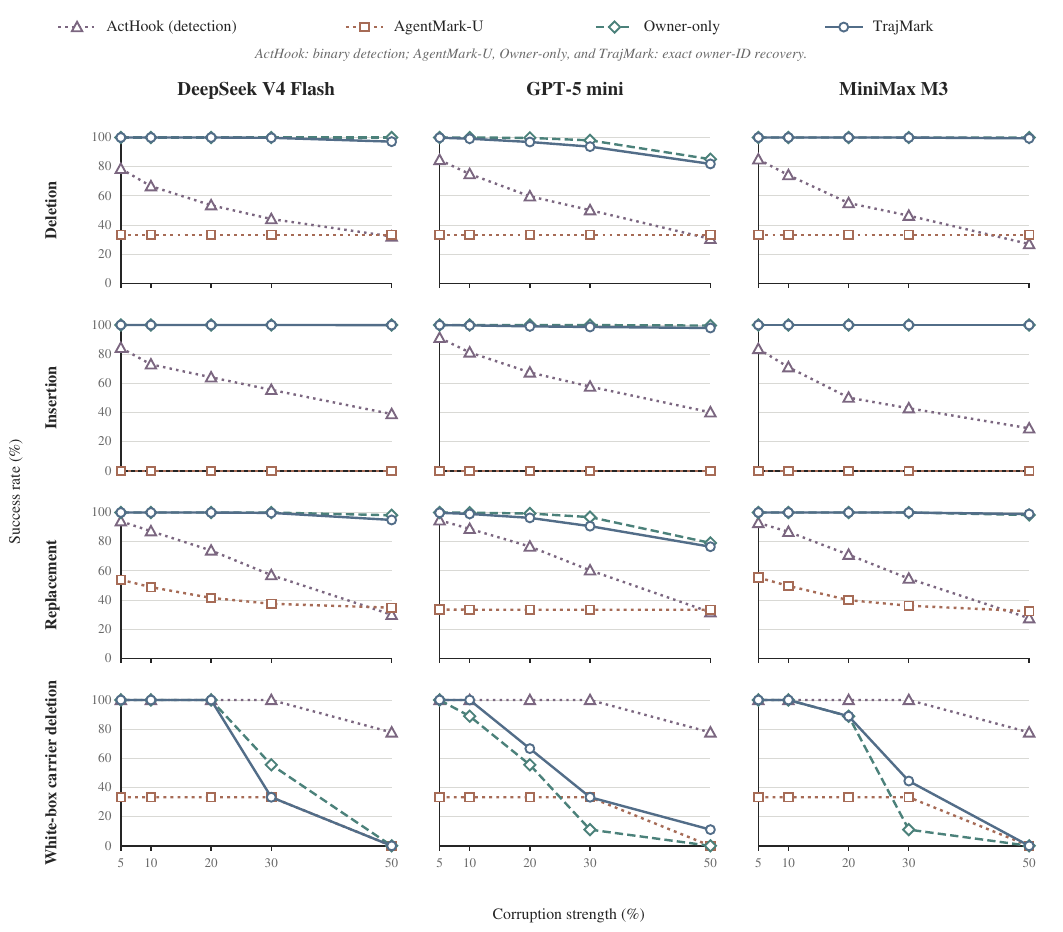}
  \Description{Twelve line charts report detection or exact owner-ID recovery
  rates. The first three rows show random visible-action deletion, insertion,
  and replacement; the fourth row shows exact-budget white-box deletion of
  owner-channel carriers. Columns are DeepSeek V4 Flash, GPT-5 mini, and
  MiniMax M3. Each panel compares ActHook-style, AgentMark-U, Owner-only, and
  TrajMark. ActHook-style reports binary detection, whereas the other three
  methods report exact owner-ID recovery.}
  \caption{Detection and exact owner-ID recovery under random visible-action
  corruption and exact-budget white-box carrier deletion. The first three rows
  aggregate 4,500 corrupted \(B=50\) batches per point. ActHook-style reports
  binary detection, whereas AgentMark-U, Owner-only, and \tool{} report exact
%   owner-ID recovery. The fourth row applies exact-budget white-box deletion to
%   owner-channel carriers; integrity-seal carriers are excluded. Agent-level
  owner-ID recovery. The fourth row reports nine \(B=50\) batches per point after
  physical carrier deletion and native re-decoding; clean integrity-seal positions
  are excluded from deletion. Agent-level
  results appear in Appendix~\ref{app:ownership-results}.}
  \label{fig:verified-owner-id-recovery-method-comparison}
\end{figure}

The filename-indexed equations used by Owner-only and \tool{} recover the owner
without adding trajectory actions. Under insertion, \agentmarku{}'s
position-indexed equations shift, whereas the filename-indexed carrier
identities used by Owner-only and \tool{} remain stable
(Figure~\ref{fig:owner-outcomes-insert-full}). ActHook-style remains a
detection-only reference at the fixed \(R=0.05\) operating point.

At 20\% random corruption, Owner-only and complete \tool{} achieve
99.3--100\% and 96.3--100\% exact owner-ID recovery, respectively. At 50\%
corruption, their minimum recovery rates are 79.1\% and 76.5\%. ActHook-style
falls to 26.9--40.2\% detection at 50\% because its binary decision depends
only on retained hook evidence.

% The exact-budget white-box attack exposes the redundancy limit. Owner-only and
% complete \tool{} fall from 85.2\% recovery at 20\% carrier deletion to 37.0\%
% at 30\% and 0\% at 50\%. The per-cell grids show that deletion broadly affects
Under white-box physical deletion, Owner-only and complete \tool{} recover the
exact ID in 22/27 and 23/27 batches at 20\% carrier deletion, 7/27 and 10/27 at
30\%, and 0/27 and 1/27 at 50\%, respectively. The per-cell grids show that
random deletion broadly affects
\agentmarku{}, while the largest filename-indexed losses under random deletion
and replacement occur in the GPT-5 mini cells
(Figures~\ref{fig:owner-outcomes-delete-full} and
\ref{fig:owner-outcomes-replace-full}). Random insertion largely preserves the
filename-indexed equations but shifts \agentmarku{}'s position index
(Figure~\ref{fig:owner-outcomes-insert-full}).
% Figure~\ref{fig:owner-outcomes-whitebox-full} reports the per-cell carrier
% budget at which rank-six recovery fails.
Figure~\ref{fig:owner-outcomes-whitebox-full} shows the native-adapter difference:
\agentmarku{} recovers no batches under white-box deletion in SWE-agent or OpenDev
from 5\% onward, whereas its OpenHands event-ID indexing preserves recovery in all
nine batches through 30\%, before falling to zero at 50\%.

\FloatBarrier
\subsection{Full-Watermark Tamper Detection}
\label{sec:local-results}

The Q6 channel supports batch-level owner recovery, whereas Q12 integrity replay
detects alarms introduced relative to the paired clean replay. Detection and localization use attack-specific eligibility conditions
determined from the paired-clean Q12 replay.

Figure~\ref{fig:integrity-detection-matrix-current} first evaluates every eligible
single-site insertion, deletion, replacement, and seal removal. Across all nine
agent--LLM cells, TrajMark detects 95.5--100\% of these edits, establishing that the
Q12 checks react reliably when one eligible site is modified.

\begin{figure}[H]
  \centering
  \includegraphics[width=0.80\linewidth]{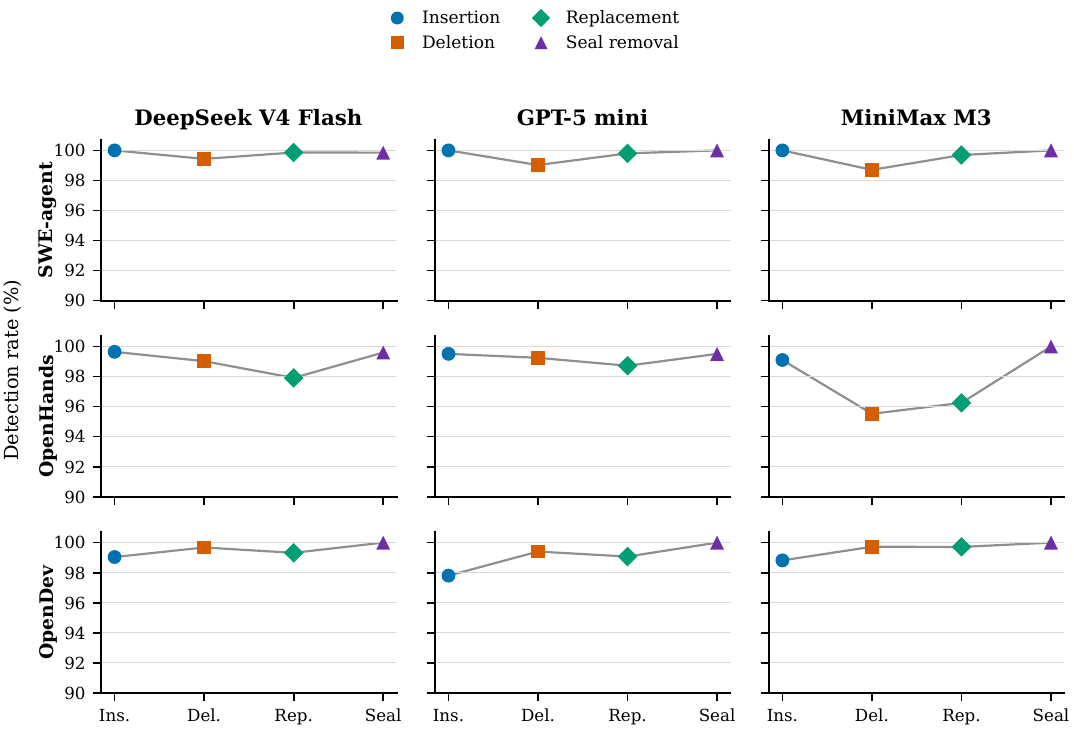}
  \Description{A three-by-three matrix of plots reports exhaustive eligible
  single-site tamper detection. Rows are SWE-agent, OpenHands, and OpenDev; columns
  are DeepSeek V4 Flash, GPT-5 mini, and MiniMax M3. Each panel contains insertion,
  deletion, replacement, and seal-removal detection rates for TrajMark.}
  \caption{Single-site integrity detection across all nine agent--LLM cells. Each
  point is the probability of at least one attack-induced Q12 alarm under exhaustive
  eligible single-site attacks, macro-averaged over the three primary seeds. The
  truncated vertical axis makes differences near 100\% visible. This is an
  any-protocol detection result.}
  \label{fig:integrity-detection-matrix-current}
\end{figure}

Figure~\ref{fig:proportional-tamper-detection-current} then applies the same any-alarm
decision under proportional attacks. Insertion, deletion, and replacement randomly
modify 5--50\% of each trajectory's visible actions; seal removal spends the same
nominal budget only on Q12 seals. Every agent--LLM condition remains separate.

\begin{figure}[!ht]
  \centering
  \includegraphics[width=\linewidth]{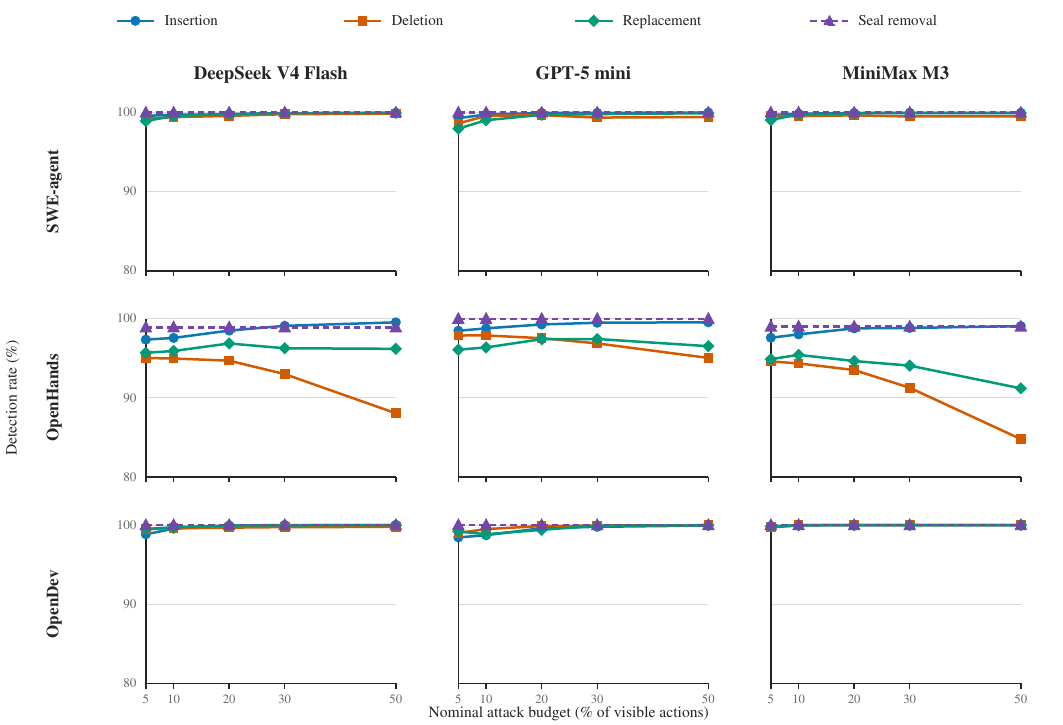}
  \Description{A three-by-three matrix reports tamper detection rates under
  proportional visible-action attacks. Rows are SWE-agent,
  OpenHands, and OpenDev; columns are DeepSeek V4 Flash, GPT-5 mini, and MiniMax M3.
  Each panel contains random insertion, deletion, and replacement curves plus a
  targeted Q12-seal-removal curve at nominal budgets of 5, 10, 20, 30, and 50 percent.}
  \caption{Tamper detection under proportional visible-action attacks across all nine
  agent--LLM conditions. The first three curves randomly insert, delete, or replace
  visible actions; seal removal directs the same nominal budget only to existing Q12
  seals and saturates once all seals are removed. Detection means that at least one
  attack-induced Q12 alarm appears relative to the paired clean replay. Each point
  macro-averages the three primary-seed rates; each seed uses 50 trajectories and 50
  deterministic trials. The denominator contains trajectories with at least one
  eligible random modification or at least one removed seal.}
  \label{fig:proportional-tamper-detection-current}
\end{figure}

At 50\% corruption, all SWE-agent and OpenDev curves remain at or above 99.4\%.
OpenHands remains at 99.1--99.6\% for insertion, but falls to 88.1\%, 95.0\%, and
84.8\% for deletion and to 96.2\%, 96.5\%, and 91.2\% for replacement across the
three LLMs. Insertion preserves the existing Q12 evidence, whereas deletion can
remove both protected content and some of its checks. The figure therefore supports
robust any-alarm detection, with OpenHands deletion as the limiting condition; it
evaluates detection rather than exact localization. Targeted seal removal is detected
in 100\% of SWE-agent and OpenDev trials and in 98.9\%, 100\%, and 99.0\% of
OpenHands trials across the three LLMs, respectively, at every nominal budget. At
50\%, all available Q12 seals are removed in every condition. The flat seal-removal
curves reflect the binary any-alarm decision: once a detectable trajectory raises an
alarm, removing additional seals cannot increase its per-trajectory score.

\FloatBarrier

Table~\ref{tab:localization-summary} then reports where those alarms point. It keeps
all LLM, agent, and random-corruption conditions separate and pairs accepted-site
coverage with the width of the narrowest reported ordinary, group, or terminal region
that covers each localized payload modification.

% Source artifacts:
%   localization_payload_site_minimum_interval_v4/*.json
% Each cell is recomputed from integer totals across seeds 42/45/48:
%   coverage = sum(accepted_localized_payload_sites) / sum(eligible_payload_sites)
%   mean width = sum(accepted_minimum_width_sum) / sum(accepted_localized_payload_sites)
\begin{table}[!t]
  \centering
  \caption{Accepted-site localization by LLM, coding agent, and random
  corruption type. Each cell reports site coverage (\%; $\uparrow$) / conditional
  mean width (paired-clean visible actions; $\downarrow$) for the narrowest newly
  reported Q12 region in the modified site's accepted ordinary, group, or terminal
  hierarchy. Coverage retains eligible misses in its denominator; width is averaged
  only over covered sites.}
  \label{tab:localization-summary}
  \begingroup
  \small
  \setlength{\tabcolsep}{2.35pt}
  \renewcommand{\arraystretch}{0.94}
  \newcommand{\loccell}[2]{#1\,/\,#2}
  \begin{adjustbox}{max width=\linewidth}
  \begin{tabular}{@{}lllrrrrrr@{}}
    \toprule
    LLM & Agent & Attack
      & \multicolumn{6}{c}{Site coverage (\%) $\uparrow$ / conditional mean width $\downarrow$} \\
    \cmidrule(lr){4-9}
      & & & 1 action & 5\% & 10\% & 20\% & 30\% & 50\% \\
    \midrule
    \multirow{9}{*}{\shortstack[l]{DeepSeek V4\\Flash}}
      & \multirow{3}{*}{SWE-agent}
        & Insertion   & \loccell{98.1}{21.7} & \loccell{97.9}{24.7} & \loccell{97.6}{25.5} & \loccell{97.5}{26.7} & \loccell{97.4}{27.9} & \loccell{97.7}{30.1} \\
      & & Deletion    & \loccell{96.8}{27.5} & \loccell{96.6}{31.6} & \loccell{95.7}{33.1} & \loccell{95.2}{37.3} & \loccell{94.8}{40.9} & \loccell{92.7}{47.0} \\
      & & Replacement & \loccell{90.0}{24.9} & \loccell{96.1}{30.8} & \loccell{96.4}{31.4} & \loccell{96.3}{34.3} & \loccell{96.3}{38.0} & \loccell{96.5}{45.6} \\
    \addlinespace[1pt]
      & \multirow{3}{*}{OpenHands}
        & Insertion   & \loccell{85.1}{20.2} & \loccell{81.6}{22.2} & \loccell{77.4}{22.6} & \loccell{69.8}{22.5} & \loccell{62.4}{22.9} & \loccell{46.4}{23.2} \\
      & & Deletion    & \loccell{90.8}{20.4} & \loccell{85.5}{23.2} & \loccell{81.2}{23.4} & \loccell{71.3}{24.0} & \loccell{61.4}{24.7} & \loccell{46.5}{25.8} \\
      & & Replacement & \loccell{90.4}{20.4} & \loccell{83.9}{22.7} & \loccell{78.2}{22.9} & \loccell{67.0}{23.6} & \loccell{54.7}{24.0} & \loccell{34.8}{24.5} \\
    \addlinespace[1pt]
      & \multirow{3}{*}{OpenDev}
        & Insertion   & \loccell{98.0}{19.2} & \loccell{98.9}{22.3} & \loccell{99.2}{22.6} & \loccell{99.6}{22.5} & \loccell{99.7}{22.5} & \loccell{99.9}{22.4} \\
      & & Deletion    & \loccell{99.3}{19.0} & \loccell{99.2}{23.1} & \loccell{99.4}{24.0} & \loccell{99.4}{25.8} & \loccell{99.3}{28.7} & \loccell{98.8}{36.1} \\
      & & Replacement & \loccell{98.9}{19.3} & \loccell{99.1}{23.4} & \loccell{99.5}{23.9} & \loccell{99.5}{26.0} & \loccell{99.3}{28.9} & \loccell{98.6}{36.4} \\
    \midrule
    \multirow{9}{*}{GPT-5 mini}
      & \multirow{3}{*}{SWE-agent}
        & Insertion   & \loccell{97.7}{12.3} & \loccell{98.2}{15.2} & \loccell{98.0}{16.0} & \loccell{97.9}{17.1} & \loccell{98.1}{18.2} & \loccell{98.1}{20.1} \\
      & & Deletion    & \loccell{96.3}{17.1} & \loccell{95.9}{21.5} & \loccell{96.5}{23.0} & \loccell{96.0}{25.3} & \loccell{95.1}{27.7} & \loccell{93.7}{31.7} \\
      & & Replacement & \loccell{93.8}{16.0} & \loccell{95.9}{20.0} & \loccell{96.3}{21.1} & \loccell{96.3}{23.1} & \loccell{96.9}{25.9} & \loccell{96.8}{31.1} \\
    \addlinespace[1pt]
      & \multirow{3}{*}{OpenHands}
        & Insertion   & \loccell{91.3}{22.9} & \loccell{89.2}{24.8} & \loccell{87.4}{25.1} & \loccell{82.4}{25.7} & \loccell{75.5}{25.8} & \loccell{59.9}{26.7} \\
      & & Deletion    & \loccell{97.2}{22.6} & \loccell{93.8}{24.9} & \loccell{91.9}{25.4} & \loccell{87.5}{26.3} & \loccell{82.1}{27.0} & \loccell{74.4}{28.5} \\
      & & Replacement & \loccell{96.2}{22.1} & \loccell{89.3}{24.1} & \loccell{87.9}{24.8} & \loccell{81.4}{25.6} & \loccell{70.4}{26.2} & \loccell{49.3}{27.0} \\
    \addlinespace[1pt]
      & \multirow{3}{*}{OpenDev}
        & Insertion   & \loccell{97.1}{11.2} & \loccell{98.2}{11.8} & \loccell{98.3}{11.8} & \loccell{99.0}{11.8} & \loccell{99.3}{11.8} & \loccell{99.6}{11.8} \\
      & & Deletion    & \loccell{98.2}{10.9} & \loccell{98.7}{11.4} & \loccell{99.1}{11.9} & \loccell{99.4}{12.7} & \loccell{99.6}{13.6} & \loccell{99.8}{15.9} \\
      & & Replacement & \loccell{99.1}{10.5} & \loccell{98.7}{11.2} & \loccell{98.1}{11.8} & \loccell{98.9}{12.6} & \loccell{99.6}{13.4} & \loccell{99.8}{15.7} \\
    \midrule
    \multirow{9}{*}{MiniMax M3}
      & \multirow{3}{*}{SWE-agent}
        & Insertion   & \loccell{98.4}{21.3} & \loccell{98.3}{22.7} & \loccell{98.2}{22.8} & \loccell{98.2}{23.9} & \loccell{98.3}{24.8} & \loccell{98.4}{26.4} \\
      & & Deletion    & \loccell{96.9}{26.2} & \loccell{97.3}{28.5} & \loccell{96.3}{29.6} & \loccell{95.1}{32.9} & \loccell{93.2}{35.7} & \loccell{89.3}{41.2} \\
      & & Replacement & \loccell{90.5}{25.7} & \loccell{97.0}{27.7} & \loccell{97.4}{28.3} & \loccell{97.1}{30.8} & \loccell{96.8}{33.5} & \loccell{97.1}{39.7} \\
    \addlinespace[1pt]
      & \multirow{3}{*}{OpenHands}
        & Insertion   & \loccell{86.9}{24.7} & \loccell{84.9}{26.6} & \loccell{80.7}{26.6} & \loccell{73.3}{26.9} & \loccell{63.8}{26.7} & \loccell{49.1}{27.2} \\
      & & Deletion    & \loccell{92.2}{25.7} & \loccell{85.2}{27.6} & \loccell{79.7}{27.8} & \loccell{70.5}{28.2} & \loccell{60.5}{28.4} & \loccell{45.0}{28.8} \\
      & & Replacement & \loccell{91.5}{25.5} & \loccell{85.1}{27.3} & \loccell{79.0}{27.5} & \loccell{67.4}{27.7} & \loccell{55.3}{27.9} & \loccell{34.2}{28.4} \\
    \addlinespace[1pt]
      & \multirow{3}{*}{OpenDev}
        & Insertion   & \loccell{98.8}{18.8} & \loccell{99.5}{21.5} & \loccell{99.6}{21.5} & \loccell{99.9}{21.4} & \loccell{99.9}{21.4} & \loccell{99.9}{21.2} \\
      & & Deletion    & \loccell{99.3}{19.3} & \loccell{99.6}{22.5} & \loccell{99.6}{23.0} & \loccell{99.2}{25.8} & \loccell{98.2}{29.8} & \loccell{95.7}{39.6} \\
      & & Replacement & \loccell{99.3}{18.6} & \loccell{99.5}{22.4} & \loccell{99.6}{23.0} & \loccell{99.2}{25.4} & \loccell{98.3}{29.0} & \loccell{95.7}{39.3} \\
    \midrule
    \multicolumn{3}{@{}l}{\textbf{Overall (all displayed cells)}}
      & \loccell{95.8}{19.8} & \loccell{96.0}{23.3} & \loccell{95.4}{23.9}
      & \loccell{93.9}{25.5} & \loccell{92.0}{27.4} & \loccell{88.6}{31.6} \\
    \bottomrule
  \end{tabular}
  \end{adjustbox}
  \vspace{2pt}
  \parbox{\linewidth}{\small Each condition--attack cell micro-pools the
  three primary seeds. ``1 action'' selects one payload action or insertion
  boundary; the remaining columns select the displayed proportion of visible
  actions. Targeted seal removal is excluded because it modifies Q12 carriers rather
  than payload sites.}
  \endgroup
\end{table}

With one selected modification, 95.8\% of eligible payload sites fall inside an
accepted newly reported region; among those sites, the narrowest region has pooled
median width 15 and mean width 19.8 visible actions. At 50\% corruption, the pooled
coverage is 88.6\%, and the corresponding median and mean widths are 25 and 31.6
actions. The disaggregated rows show that SWE-agent and OpenDev retain at least
89.3\% and 95.7\% coverage, respectively, at 50\%, whereas OpenHands is the limiting
framework, especially for replacement (34.2--49.3\%). Its conditional widths must
therefore be read together with the adjacent coverage values. Across all conditions,
terminal regions constitute 22.0\% of the narrowest accepted reports in the
one-action setting and 41.4\% at 50\% corruption, showing the shift toward coarser
enclosing evidence as corruption grows.

These widths define the resolution of the localization claim. \tool{} identifies the
narrowest accepted protocol region enclosing a modified site, not the individual action
itself. As a scale comparison, the nine full-\tool{} cells in
Table~\ref{tab:execution-cost-current} average 40.9 visible actions per trajectory;
the one-action median and mean widths therefore correspond to 36.7\% and 48.4\% of
that average. Granularity is determined jointly by \(\mathsf{Boundary}_\theta\), the
spacing of critical actions, and the surviving alarm level: ordinary reports cover one
segment, group reports two, and terminal reports a suffix. A finer boundary policy
partitions the protected stream into more ordinary segments, while
Eq.~\eqref{eq:seal-overhead} gives the corresponding seal count
\(m+\lfloor m/2\rfloor+1\).

\FloatBarrier
\subsection{Structural Cost and Task Utility}
\label{sec:utility-results}

\subsubsection{RQ4: Mechanism-Level Structural Cost}
\label{sec:results-step-cost}

Owner-only adds no action because it rewrites existing READ choices. For SWE-agent,
full \tool{} adds ordinary, group, and terminal seals: 6.98 visible actions per
trajectory on average. The mean within-condition overhead relative to the
reconstructed natural action count is 22.1\%. This is the protocol's
structural cost; the realized turns and tokens in Table~\ref{tab:execution-cost-current}
also include ordinary run-to-run variation.

\FloatBarrier
\subsubsection{Pass@1}
\label{sec:results-pass-at-one}

\begin{table}[!htb]
  \caption{Pass@1 across three coding agents and three LLMs. Each row pools
  three seed-matched batches of 50 tasks, yielding 150 task runs per
  agent--LLM condition. Superscripts give within-row changes from No-WM in
  percentage points (cyan: lower; red: higher; black: unchanged). Overall
  pools all 1,350 agent--LLM--seed task outcomes per method; the final row
  reports the pooled difference from No-WM. Pass@1 is the percentage of tasks
  whose single submitted patch passes
  the corresponding repository-level test harness; empty or invalid patches
  count as failures.}
  \label{tab:solve-rate-current}
  \centering
  \small
  \newcommand{\passup}[1]{\textcolor[HTML]{B04A3A}{\raisebox{0.35ex}{\scriptsize$\uparrow$#1}}}
  \newcommand{\passdown}[1]{\textcolor[HTML]{2A8C91}{\raisebox{0.35ex}{\scriptsize$\downarrow$#1}}}
  \begin{tabular}{@{}llrrrrr@{}}
    \toprule
    Agent & LLM & No-WM & ActHook-style & AgentMark-U & Owner-only & \tool{} \\
    \midrule
    \multirow{3}{*}{SWE-agent}
      & DeepSeek V4 Flash & 36.7\% & 34.0\%\passdown{2.7} & 36.0\%\passdown{0.7} & 34.7\%\passdown{2.0} & 35.3\%\passdown{1.3} \\
      & GPT-5 mini        & 28.0\% & 30.7\%\passup{2.7} & 4.7\%\passdown{23.3} & 30.0\%\passup{2.0} & 25.3\%\passdown{2.7} \\
      & MiniMax M3        & 27.3\% & 31.3\%\passup{4.0} & 9.3\%\passdown{18.0} & 31.3\%\passup{4.0} & 30.0\%\passup{2.7} \\
    \midrule
    \multirow{3}{*}{OpenHands}
      & DeepSeek V4 Flash & 20.0\% & 18.7\%\passdown{1.3} & 13.3\%\passdown{6.7} & 23.3\%\passup{3.3} & 18.7\%\passdown{1.3} \\
      & GPT-5 mini        & 21.3\% & 22.0\%\passup{0.7} & 6.0\%\passdown{15.3} & 21.3\%\raisebox{0.35ex}{\scriptsize$\phantom{\uparrow}$0.0} & 22.0\%\passup{0.7} \\
      & MiniMax M3        & 20.7\% & 19.3\%\passdown{1.3} & 12.0\%\passdown{8.7} & 22.0\%\passup{1.3} & 24.0\%\passup{3.3} \\
    \midrule
    \multirow{3}{*}{OpenDev}
      & DeepSeek V4 Flash & 36.0\% & 39.3\%\passup{3.3} & 40.7\%\passup{4.7} & 34.0\%\passdown{2.0} & 35.3\%\passdown{0.7} \\
      & GPT-5 mini        & 10.7\% & 8.7\%\passdown{2.0} & 13.3\%\passup{2.7} & 9.3\%\passdown{1.3} & 14.0\%\passup{3.3} \\
      & MiniMax M3        & 36.0\% & 33.3\%\passdown{2.7} & 30.7\%\passdown{5.3} & 34.7\%\passdown{1.3} & 37.3\%\passup{1.3} \\
    \midrule
    \multicolumn{2}{@{}l}{Overall}
      & 26.3\% & 26.4\% & 18.4\% & 26.7\% & 26.9\% \\
    \multicolumn{2}{@{}l}{\(\Delta\) vs. No-WM (pp)}
      & -- & +0.1 & -7.9 & +0.4 & +0.6 \\
    \bottomrule
  \end{tabular}
\end{table}

Each condition is generated as a fresh live agent run rather than by applying a
watermark to a fixed No-WM log. Although the agent, LLM, seed, task identifier, and
test harness are matched, hosted black-box inference can vary across independent runs
because of stochastic generation, retries, context handling, and provider-side
scheduling. Pooled over 1,350 outcomes, complete \tool{} reaches 26.9\% Pass@1 versus
26.3\% for No-WM, a difference of +0.6 percentage points. Across the nine matched
agent--LLM conditions, five changes are positive and four are negative;
an exact two-sided sign test over the nine non-tied conditions gives \(p=1.00\).

\section{Discussion}
\label{sec:discussion}

\subsection{Visibility and Adaptive Removal}
\label{sec:discussion-visibility}
\label{sec:discussion-security}

\tool deliberately uses two complementary visibility profiles. The owner layer rewrites
sparse filename-selected READs that the agent already intended to execute, adding no
trajectory step. The integrity layer inserts explicit READs around protected segments.
Consequently, Q6 provides action-count-neutral ownership evidence, while Q12 provides
visible, tamper-evident integrity checks.

The random-corruption evaluation measures non-targeted damage, whereas the white-box
carrier-deletion experiment intentionally grants a stronger adversary than the
post-publication editor in our main threat model. The public canonicalizer exposes the
candidate pool---the first eligible READ for each normalized basename---but not which
candidates were selected or their coefficient rows. The fourth row of
Figure~\ref{fig:verified-owner-id-recovery-method-comparison} gives the attacker this
% hidden state and exhaustively searches all 63 nonzero null directions for a minimum-cost
% rank-breaking cut. Under this oracle, Owner-only and complete \tool{} decline from
% 85.2\% exact recovery at a 20\% carrier-occurrence budget to 37.0\% at 30\% and 0\%
% at 50\%. These values quantify the rank-suppression boundary under known-carrier,
% known-equation deletion; they do not show that a keyless editor can identify the same
% cut. Raising \(p_o\) increases the equation budget, while an outer erasure code could
% increase the deletion margin.
hidden state and decoder access to select among up to 63 deterministic deletion
sets. Each candidate is applied to the visible trajectories before evidence is
re-extracted and decoded. Removing a natural READ can promote a later READ of the
same basename to carrier status; position-indexed evidence can also change when
deletion changes the positions used by the native adapter. The resulting curves capture both
lost evidence and changes to the reconstructed equations. Raising \(p_o\) increases
the initial equation budget, while an outer erasure code could increase the
deletion margin.

The ordinary, group, and terminal/root commitments make selective seal removal visible
at another level of the Q12 hierarchy. Complete READ removal produces missing
structure or owner abstention instead of accepted
provenance. A forged seal must match its ordinal, payload, preceding observation, and
enclosing commitments under the secret key.

\paragraph{Why OpenHands is the deletion boundary.}
The 84.8\% floor in Figure~\ref{fig:proportional-tamper-detection-current} is not an
inherent weakness of typed tool events; it follows from the evidence topology produced
by the evaluated OpenHands profile. OpenHands first joins each typed action to its
public observation through the cause relation and admits the action to Q12 state only
after a successful outcome. The deletion attack removes that action and all of its
cause-linked observations as one public atom. If a critical action and its adjacent
seal are both removed, the event that would have created the seal expectation and the
local check can disappear together, leaving fewer independent commitments to expose
the deletion.

The clean streams underlying Figure~\ref{fig:proportional-tamper-detection-current}
make the closure gap explicit. Among the OpenHands trajectories with at least one Q12
protector---the population that can enter the deletion denominator---a terminal
protector is present in 47/98 (48.0\%), 120/138 (87.0\%), and 44/98 (44.9\%) of the
DeepSeek V4 Flash, GPT-5 mini, and MiniMax M3 trajectories, respectively. This ordering
tracks their 88.1\%, 95.0\%, and 84.8\% detection rates at 50\% deletion; every
carrier-bearing SWE-agent trajectory has a terminal protector (150/150 in each cell).
Thus the important difference is not merely the number of local seals. SWE-agent's
textual action and outcome form one linear canonical event and, more importantly, its
always-closed root commitment remains an independent witness when a local trigger and
seal disappear. This also explains why targeted seal removal remains near 100\%:
retaining the protected action leaves an unmet commitment, whereas random deletion can
remove both its trigger and local check and, without terminal closure, leave no
higher-level evidence that the segment existed. The direct mitigation is to make every
critical typed action enter protected state before outcome resolution, map a missing or
failed observation to an explicit \(\mathsf{unknown}\) outcome, and close Q12 on every
runtime termination path, including normal finish, timeout, and error.

\subsection{Composing Trajectory and Artifact Watermarks}
\label{sec:discussion-composition}

Trajectory and artifact watermarks protect different objects. A code watermark can
associate a final patch with a provider-specific pattern, whereas \tool associates a
visible action history with an owner and checks protected regions of that history. A
deployment can combine an artifact watermark, natural-action owner equations, and overt
integrity seals.

Agreement across the trajectory and artifact layers strengthens provenance. A marked
patch without a verified trajectory owner, or an owner-verified trajectory with an
unmarked patch, flags inconsistent provenance evidence for investigation. A Q12
alarm on an owner-verified trajectory indicates an inconsistency in the protected
canonical projection. This composition also provides redundancy when basename collisions reduce
trajectory capacity or refactoring weakens an artifact mark.

\subsection{Operational Guidance}
\label{sec:discussion-guidance}

The results suggest five deployment rules:

\begin{enumerate}[leftmargin=2.2em]
  \item \textbf{Gate attribution.} Require rank six, a unique best owner, and a score
        meeting the evidence threshold; otherwise abstain. When all three gates pass,
        accept the claim if the recovered ID matches the predeclared owner ID, and
        reject it otherwise.
  \item \textbf{Calibrate each adapter.} Re-estimate clean replay and eligible-basename
        distributions after changes to models, serializers, or tool vocabularies.
  \item \textbf{Choose integrity explicitly.} Use owner-only for zero-added-action
        provenance and full mode when redundant integrity alarms justify extra READs.
  \item \textbf{Preserve canonical order.} Publish the protocol version and a
        deterministic visible index, especially for concurrent tool frameworks.
  \item \textbf{Use the strongest surviving channel.} If an authenticated sidecar is
        preserved end to end, a signed or hash-chained log can provide stronger
        integrity guarantees; additional properties depend on the authentication and
        freshness mechanisms. \tool{} targets releases that retain the human-readable action stream but
        discard wrapper metadata, such as exported trace datasets, benchmark or
        leaderboard trace artifacts, excerpts reproduced in issue threads or papers,
        and cross-organization hand-offs re-rendered by another viewer. If the action
        stream itself is not retained, \tool{} is not applicable.
\end{enumerate}

\label{sec:discussion-scope}
Keys should be domain-separated by deployment, owner, and layer, and rotated for future
batches without rewriting old trajectories. The resulting signals authenticate the
published visible process through owner attribution and tamper detection.

\section{Threats to Validity}
\label{sec:validity}

\subsection{Canonicalization and Implementation Validity}
\label{sec:validity-canonicalization}
\label{sec:validity-internal}

The canonicalizer is part of the trusted public protocol. It uses published action
requests, visible targets, exposed status, and canonical order, and excludes hidden
reasoning, debug anchors, and wrapper metadata. A small public action alphabet reduces
parser ambiguity. Semantic synonyms and complex shell pipelines can change carrier
availability, while basename collisions can repeat equations without increasing rank;
the rank and uniqueness gates prevent such batches from producing an unsupported ID.

All qualifying live-API runs are retained. Before aggregation, we verify positive API
use, valid artifacts, exact instance and model identities, clean round-trip replay,
visible order, and synchronized local recomputation. The reported tables are generated
directly from the resulting frozen artifacts.

\subsection{Statistical and Construct Validity}
\label{sec:validity-statistics}
\label{sec:validity-utility}

Owner equations are correlated because models revisit files and reuse exploration
patterns. We therefore use exact recovery, rank, uniqueness, score margin, and batch
resampling as the primary evidence, while Eq.~\eqref{eq:owner-z} summarizes carrier
agreement. The 500 owner-corruption trials and 50 integrity trials per trajectory
sample the attack distribution at each of the five reported corruption levels.

The evaluated configuration fixes owner label \texttt{0x3f}. Algebraically, this label
receives no privileged decoder treatment: Eq.~\eqref{eq:owner-mask} masks
\(b_f=\langle\mathbf a_f,\mathbf w\rangle\) with a domain-separated keyed bit before
the public Q6 choice is oriented, and the decoder applies the same rank, uniqueness,
and score gates to all 64 candidates. Under Assumption~\ref{ass:ideal-prf} and for a
fixed key-independent opportunity pool, the visible carrier choice is marginally
uniform for every fixed label. This is an idealized distributional argument rather
than an end-to-end empirical label sweep: during live generation, an early Q6
realization can alter later actions and carrier availability. Our empirical recovery
results therefore apply to \texttt{0x3f}; evaluating multiple labels under live
generation is needed to test label invariance of the adaptive online system.

The evaluated six-bit owner space permits exhaustive decoding of 64 candidates. The
construction scales to larger identifiers with more independent equations or an outer
code. Pass@1 is also compared within matched agent--LLM rows because repository
families and test harnesses
differ. Structural additions are exactly observable, whereas turns and tokens also
reflect stochastic generation, retries, context handling, and provider scheduling.

\subsection{External Validity}
\label{sec:validity-external}

Three agents, three model providers, and Python, Java, and JavaScript repositories cover
distinct interfaces and ecosystems. Deployment to IDE-native, GUI, multi-agent, or
long-horizon systems requires the same adapter calibration used in our crossed grid.
The primary evaluation uses three matched 50-task seeds, so the reported capacity
reflects their filename diversity and eligible READ supply.

Our native tool-dispatch adapters observe selected actions rather than elicited
planning-time probability lists. We therefore evaluate \agentmarku, the uniform
adaptation defined in Section~\ref{sec:setup-conditions}; all reported AgentMark
comparisons refer to this executable interface-compatible baseline.

\subsection{Security Validity}
\label{sec:validity-security}

The security model requires HMAC pseudorandomness, a secret \(K\), and an honest
embedding runtime. Public canonicalization reveals the eligible action classes, while
the key determines selected filenames, expected subtypes, and seal payloads.

The attack suite covers random non-key-aware corruption and exact-budget white-box
carrier deletion. The public protocol reveals candidate carrier classes and
first-occurrence basenames; selected carriers and coefficient rows remain keyed. Our
% white-box evaluation deliberately reveals both and grants an optimal rank-breaking
% search, thereby bounding recovery under an oracle carrier-aware attacker. The main
white-box evaluation reveals both and grants decoder access to select among up to
63 deterministic physical-deletion candidates. Recovery is measured after the
selected actions are removed and evidence is re-extracted. The main
post-publication threat model and random-corruption results do not establish how well
a keyless editor could infer selected carriers from repeated releases or verifier
feedback. Complete READ erasure removes the
authenticated substrate; the verifier then abstains or reports missing Q12 structure
instead of accepting an owner or an intact trajectory.

\section{Related Work}
\label{sec:related}

\subsection{Generated-Content and Agent Watermarking}
\label{sec:related-artifacts}
\label{sec:related-agents}

Language-model watermarks bias or partition generation choices to accumulate keyed
evidence. Prior work studies green lists, provable robustness, unbiased or
distortion-free sampling, semantic carriers, and multi-bit messages
\cite{kirchenbauer2023watermark,zhao2024provable,kuditipudi2024distortionfree,
ICLR2024_c5b00c5b,hou2024semstamp,ren2024semark,yoo2024multibit,feng2025bimark}.
WaterBench, MarkLLM, and attack studies further separate utility, detectability, and
robustness \cite{tu2024waterbench,pan2024markllm,rastogi-pruthi-2024-revisiting,
chang2025smoothing}. \tool adopts that separation but evaluates complete interactive
trajectories and executable repository tasks rather than generated text alone.

Semantic and distribution-aware schemes move beyond fixed token identities through
embedding partitions, token priors, lexical redundancy, or keyed transformations
\cite{ren2024riw,chen2024watme,lau2024waterfall}. Multi-bit methods additionally encode
traceable messages through invariant linguistic features or position allocation
\cite{yoo2023multibit}. These designs motivate our finite owner identifier, but their
evidence is accumulated over generated language. Coding-agent READ actions are sparse,
task-constrained, and correlated, which motivates exact finite-identifier decoding and
empirical batch calibration rather than an independent-token approximation.

Code watermarking must preserve syntax and behavior. Classical software watermarking
modifies program structure, while SWEET, CodeIP, and STONE constrain lexical or grammar
choices during generation \cite{collberg1999software,lee2024sweet,guan2024codeip,
kim2026stone}. These methods protect the artifact; \tool protects the visible process
that produced it. The two are complementary because different trajectories can yield
the same patch and a trajectory can be edited without changing that patch
(Section~\ref{sec:discussion-composition}).

Behavioral methods move the carrier into agent actions. AgentMark uses
utility-preserving choices conditioned on candidate-action probabilities; AGENTWM,
ActHook, and SeqWM construct or trigger marked execution behavior
\cite{huang2026agentmark,wang2026agentwm,meng2026acthook,an2026seqwm}. \tool
indexes natural carriers by normalized filenames and decodes
random linear constraints on a finite owner identifier. Its Q12 integrity layer uses
ordinary, group, and terminal commitments to turn visible-action edits into protocol
alarms. Exact owner recovery and tamper detection are separate verification decisions.

This distinction is also methodological. AgentMark elicits candidate-behavior
probabilities from the LLM for sampling and decoding. Our evaluated native interfaces
provide selected tool calls rather than these planning-time probability lists;
\agentmarku{} applies uniform-choice watermarking at this tool-dispatch boundary.
ActHook and AGENTWM demonstrate that behavior can carry ownership. \tool additionally
uses nested ordinary, group, and terminal commitments to preserve tamper evidence across
the protected trajectory structure.

\subsection{Secure Logs and Fragile Integrity Watermarks}
\label{sec:related-logs}
\label{sec:related-fragile}

Secure audit logs and provenance systems use hash chains, signatures, authenticated
structures, protected state, or trusted derivation graphs to expose modification
\cite{haber1991timestamp,bellare1997forward,schneier1999secure,crosby2009tamper,
ma2009securelogging,muniswamy2006pass,hasan2009provenance}. These mechanisms are
preferable when authenticated metadata is preserved end to end. \tool addresses the
complementary release boundary in which the visible action stream survives but wrapper
metadata or a logging service does not---for example, exported trace datasets,
benchmark or leaderboard trace artifacts, republished trace excerpts, and re-rendered
cross-organization hand-offs. In such channels the action record is the only surviving
carrier; when a sidecar can be preserved, deployments should prefer or combine signed
logging.

Data-provenance systems similarly bind outputs to derivation histories, usually through
a trusted graph or storage layer \cite{braun2008securing,waters2004encrypted,
hoang2022faster}. Our verifier may instead receive only a serialized action trace and a
key. The terminal/root commitment borrows the completeness objective of secure logs,
but represents it through a visible READ subtype rather than an opaque signature; group
commitments amortize protection over pairs of ordinary segments.

Fragile watermarks are designed to react to modification, unlike robust ownership marks
that aim to survive transformation \cite{Delp,cox2007digital}. Recent media
systems combine ownership with spatial or temporal tamper localization
% \cite{neekhara2024facesigns,zhang2024editguard,zhang2025omniguard,
% sander2025watermarkanything,sanroman2024localizedvoice,yang2025stableguard,
% petrov2025coexistence}. \tool transfers this robust-versus-fragile separation to
\cite{neekhara2024facesigns,zhang2024editguard,zhang2025omniguard,
sander2025watermarkanything,yang2025stableguard}. \tool transfers this robust-versus-fragile separation to
discrete action traces: sparse filename equations support batch ownership, while
ordinary, group, and terminal/root commitments trigger protocol alarms under action
edits and seal removal. The three commitment roles provide overlapping integrity checks
over the protected action structure.

The hierarchy is dependency-structured rather than a collection of independent tags.
Each ordinary seal commits to its segment and the previously observed seal, each group
binds two ordinary seals, and the terminal/root role commits to the protected
projection. These dependencies allow an enclosing commitment to retain tamper evidence
when local evidence is modified or removed. This differs from robust media watermarking,
where success commonly means recovering the same message after editing.

\subsection{Coding Agents and Repository-Level Benchmarks}
\label{sec:related-coding-agents}

Tool-using agents interleave model responses, actions, and environment feedback.
ReAct and Toolformer established core interaction patterns, while AgentBench, WebArena,
Mind2Web, and OSWorld evaluate heterogeneous long-horizon environments
\cite{yao2023react,schick2023toolformer,liu2023agentbench,zhou2024webarena,
deng2023mind2web,xie2024osworld}. Repository-level systems additionally require search,
cross-file reasoning, execution, and validation; RepoBench, RepoCoder, CodeAgent,
AutoCodeRover, and Agentless study these capabilities
\cite{liu2023repobench,zhang2023repocoder,zhang2024codeagentenhancingcodegeneration,
zhang2024autocoderover,xia2024agentless}.

These benchmarks also clarify why final-output watermarking is insufficient for our
setting. Repository repair contains exploration, failed attempts, tests, and revisions
that may be absent from the submitted patch but remain valuable audit evidence. At the
same time, a useful carrier must not prevent the agent from selecting tools needed to
solve the task, which motivates our separation between rewriting existing owner
choices and adding explicit integrity actions.

SWE-bench made issue resolution executable, SWE-agent highlighted the effect of the
agent-computer interface, OpenHands standardized typed tool events, and SWE-PolyBench
extended evaluation across languages \cite{jimenez2024swebench,yang2024sweagent,
wang2024openhands,rashid2025swepolybench}. These systems supply natural read and search
carriers but serialize tools, failures, concurrency, and termination differently. Our
adapter contract therefore treats public visible serialization and clean replay as part
of the watermark protocol rather than assuming one universal trajectory format.

\section{Conclusion}
\label{sec:conclusion}

\subsection{Summary}

Coding-agent trajectories are valuable process artifacts, but ownership and integrity
place opposite demands on a watermark. \tool separates them. Its sparse ownership
layer turns selected natural filename READs into masked random linear equations and
adds no actions. Its overt integrity layer commits to critical-action segments with
ordinary Q12 seals, group-of-two protection, and a terminal commitment.

Across three agents, three model providers, three task-list seeds, and three repository
languages, the visible-only decoder recovers the exact owner in
\PrimaryExactRecoveryConditions{} of \PrimaryConditions{} primary conditions.
At 20\% random corruption, Owner-only recovers the correct ID in 99.3--100\% of trials
and complete \tool{} in 96.3--100\%; their lowest rates at 50\% are 79.1\% and 76.5\%.
Q12 detects 95.5--100\% of exhaustive eligible single-site edits. Under 50\%
proportional corruption, every SWE-agent and OpenDev curve remains at least 99.4\%, and
the lowest OpenHands condition remains 84.8\%. These results establish robust owner
recovery and tamper detection across the tested agents and LLMs.

Owner-only adds no trajectory action. On SWE-agent, full \tool{} adds 6.98 visible
actions per trajectory on average. Its pooled Pass@1 across all three agents is
26.9\% versus 26.3\% for No-WM; across
the nine matched agent--LLM conditions, five changes increase and four decrease
(exact two-sided sign test over the nine
non-tied conditions, \(p=1.00\)). Thus the owner layer provides zero-added-action
attribution, and the full protocol adds explicit integrity evidence with measured
structural cost.

The resulting design is composable. Applications can deploy owner-only mode for
zero-added-action provenance, full mode for segment integrity, and artifact watermarks or
secure logs alongside either mode. Each component provides a distinct verification role
and measured cost.

\subsection{Future Work}

Two evaluation gaps remain. The primary live-generation study fixes the deployment
label at \texttt{0x3f} and, after the calibration sweep in
Table~\ref{tab:parameter-configuration}, deploys \(p_o=0.30\); live runs across labels
and deployed densities would test the mask-based symmetry of
Eq.~\eqref{eq:owner-mask} empirically. The baseline comparison uses frozen
method-specific operating points rather than an equal visible-carrier budget; a
carrier-budget-matched analysis would complement this deployment-oriented comparison
by separating evidence efficiency from carrier volume.

Three directions follow from the current results. First, probability-aware Q6
substitution could improve distribution preservation when an explicit candidate-action
distribution is available. Second, an outer erasure- and error-correcting code
could expand the owner space beyond six bits while retaining explicit abstention.
Third, adaptive evaluations should combine semantic command rewriting, targeted READ
removal, and utility-preserving trajectory compression.

For integrity, future work can learn adapter-specific natural realizations of Q12
subtypes without changing the public payload and can evaluate key-aware,
utility-preserving multi-region attacks and adaptive verifier feedback. A broader
provenance stack should also connect trajectory evidence with code watermarks, signed
execution environments, and benchmark-grounded patch validation.

\bibliographystyle{unsrtnat}
\bibliography{references}

\begin{thebibliography}{56}
\providecommand{\natexlab}[1]{#1}
\providecommand{\url}[1]{\texttt{#1}}
\expandafter\ifx\csname urlstyle\endcsname\relax
  \providecommand{\doi}[1]{doi: #1}\else
  \providecommand{\doi}{doi: \begingroup \urlstyle{rm}\Url}\fi

\bibitem[Yang et~al.(2024)Yang, Jimenez, Wettig, Lieret, Yao, Narasimhan, and
  Press]{yang2024sweagent}
John Yang, Carlos~E. Jimenez, Alexander Wettig, Kilian Lieret, Shunyu Yao,
  Karthik Narasimhan, and Ofir Press.
\newblock {SWE-agent}: Agent-computer interfaces enable automated software
  engineering, 2024.
\newblock URL \url{https://arxiv.org/abs/2405.15793}.

\bibitem[Wang et~al.(2024)Wang, Li, Song, Xu, Tang, Zhuge, Pan, Song, Li,
  Singh, Tran, Li, Ma, Zheng, Qian, Shao, Muennighoff, Zhang, Hui, Lin,
  Brennan, Peng, Ji, and Neubig]{wang2024openhands}
Xingyao Wang, Boxuan Li, Yufan Song, Frank~F. Xu, Xiangru Tang, Mingchen Zhuge,
  Jiayi Pan, Yueqi Song, Bowen Li, Jaskirat Singh, Hoang~H. Tran, Fuqiang Li,
  Ren Ma, Mingzhang Zheng, Bill Qian, Yanjun Shao, Niklas Muennighoff, Yizhe
  Zhang, Binyuan Hui, Junyang Lin, Robert Brennan, Hao Peng, Heng Ji, and
  Graham Neubig.
\newblock {OpenHands}: An open platform for {AI} software developers as
  generalist agents, 2024.
\newblock URL \url{https://arxiv.org/abs/2407.16741}.

\bibitem[Kirchenbauer et~al.(2023)Kirchenbauer, Geiping, Wen, Katz, Miers, and
  Goldstein]{kirchenbauer2023watermark}
John Kirchenbauer, Jonas Geiping, Yuxin Wen, Jonathan Katz, Ian Miers, and Tom
  Goldstein.
\newblock A watermark for large language models.
\newblock In \emph{Proceedings of the 40th International Conference on Machine
  Learning}, volume 202 of \emph{Proceedings of Machine Learning Research},
  pages 17061--17084. PMLR, 2023.
\newblock URL \url{https://proceedings.mlr.press/v202/kirchenbauer23a.html}.

\bibitem[Zhao et~al.(2024)Zhao, Ananth, Li, and Wang]{zhao2024provable}
Xuandong Zhao, Prabhanjan~Vijendra Ananth, Lei Li, and Yu-Xiang Wang.
\newblock Provable robust watermarking for {AI}-generated text.
\newblock In \emph{International Conference on Learning Representations}, 2024.
\newblock URL \url{https://openreview.net/forum?id=Bwz0fy9Hc9}.

\bibitem[Collberg and Thomborson(1999)]{collberg1999software}
Christian~S. Collberg and Clark~D. Thomborson.
\newblock Software watermarking: Models and dynamic embeddings.
\newblock In \emph{Proceedings of the 26th ACM SIGPLAN-SIGACT Symposium on
  Principles of Programming Languages}, pages 311--324. ACM, 1999.
\newblock \doi{10.1145/292540.292569}.

\bibitem[Huang et~al.(2026)Huang, Tan, Wei, Li, Zhang, Tian, Yang, and
  Zhou]{huang2026agentmark}
Kaibo Huang, Jin Tan, Yukun Wei, Wanling Li, Zipei Zhang, Hui Tian, Zhongliang
  Yang, and Linna Zhou.
\newblock {AgentMark}: Utility-preserving behavioral watermarking for agents,
  2026.
\newblock URL \url{https://arxiv.org/abs/2601.03294}.

\bibitem[Wang et~al.(2026)Wang, Li, Xie, Wang, She, Wang, and
  Rahmel]{wang2026agentwm}
Liwen Wang, Zongjie Li, Yuchong Xie, Shuai Wang, Dongdong She, Wei Wang, and
  Juergen Rahmel.
\newblock On protecting agentic systems' intellectual property via
  watermarking, 2026.
\newblock URL \url{https://arxiv.org/abs/2602.08401}.

\bibitem[Meng et~al.(2026)Meng, Gong, Zhuo, Zhang, Li, Liu, Yang, Wei, and
  Chen]{meng2026acthook}
Wenlong Meng, Chen Gong, Terry~Yue Zhuo, Fan Zhang, Kecen Li, Zheng Liu, Zhou
  Yang, Chengkun Wei, and Wenzhi Chen.
\newblock Watermarking {LLM} agent trajectories, 2026.
\newblock URL \url{https://arxiv.org/abs/2602.18700}.

\bibitem[An et~al.(2026)An, Park, Kim, and Han]{an2026seqwm}
Hyeseon An, Shinwoo Park, Dongsu Kim, and Yo-Sub Han.
\newblock Sequential behavioral watermarking for {LLM} agents, 2026.
\newblock URL \url{https://arxiv.org/abs/2605.11036}.

\bibitem[Cox et~al.(2007)Cox, Miller, Bloom, Fridrich, and
  Kalker]{cox2007digital}
Ingemar~J. Cox, Matthew~L. Miller, Jeffrey~A. Bloom, Jessica Fridrich, and Ton
  Kalker.
\newblock \emph{Digital Watermarking and Steganography}.
\newblock Morgan Kaufmann, 2 edition, 2007.

\bibitem[Schneier and Kelsey(1999)]{schneier1999secure}
Bruce Schneier and John Kelsey.
\newblock Secure audit logs to support computer forensics.
\newblock \emph{ACM Transactions on Information and System Security},
  2\penalty0 (2):\penalty0 159--176, 1999.
\newblock \doi{10.1145/317087.317089}.

\bibitem[Jimenez et~al.(2024)Jimenez, Yang, Wettig, Yao, Pei, Press, and
  Narasimhan]{jimenez2024swebench}
Carlos~E. Jimenez, John Yang, Alexander Wettig, Shunyu Yao, Kexin Pei, Ofir
  Press, and Karthik Narasimhan.
\newblock {SWE-bench}: Can language models resolve real-world {GitHub} issues?
\newblock In \emph{International Conference on Learning Representations}, 2024.
\newblock URL \url{https://openreview.net/forum?id=VTF8yNQM66}.

\bibitem[Rashid et~al.(2025)Rashid, Bock, Zhuang, Buchholz, Esler, Valentin,
  Franceschi, Wistuba, Sivaprasad, Kim, Deoras, Zappella, and
  Callot]{rashid2025swepolybench}
Muhammad~Shihab Rashid, Christian Bock, Yuan Zhuang, Alexander Buchholz, Tim
  Esler, Simon Valentin, Luca Franceschi, Martin Wistuba, Prabhu~Teja
  Sivaprasad, Woo~Jung Kim, Anoop Deoras, Giovanni Zappella, and Laurent
  Callot.
\newblock {SWE-PolyBench}: A multi-language benchmark for repository-level
  evaluation of coding agents, 2025.
\newblock URL \url{https://arxiv.org/abs/2504.08703}.

\bibitem[Kuditipudi et~al.(2024)Kuditipudi, Thickstun, Hashimoto, and
  Liang]{kuditipudi2024distortionfree}
Rohith Kuditipudi, John Thickstun, Tatsunori Hashimoto, and Percy Liang.
\newblock Robust distortion-free watermarks for language models.
\newblock \emph{Transactions on Machine Learning Research}, 2024.
\newblock URL \url{https://openreview.net/forum?id=FpaCL1MO2C}.

\bibitem[Hu et~al.(2024)Hu, Chen, Wu, Wu, Zhang, and Huang]{ICLR2024_c5b00c5b}
Zhengmian Hu, Lichang Chen, Xidong Wu, Yihan Wu, Hongyang Zhang, and Heng
  Huang.
\newblock Unbiased watermark for large language models.
\newblock In B.~Kim, Y.~Yue, S.~Chaudhuri, K.~Fragkiadaki, M.~Khan, and Y.~Sun,
  editors, \emph{International Conference on Learning Representations}, volume
  2024, pages 45408--45436, 2024.
\newblock URL
  \url{https://proceedings.iclr.cc/paper_files/paper/2024/file/c5b00c5bdcc6fe35907dbcca03d27652-Paper-Conference.pdf}.

\bibitem[Hou et~al.(2024)Hou, Zhang, He, Wang, Chuang, Wang, Shen, Durme,
  Khashabi, and Tsvetkov]{hou2024semstamp}
Abe Hou, Jingyu Zhang, Tianxing He, Yichen Wang, Yung-Sung Chuang, Hongwei
  Wang, Lingfeng Shen, Benjamin~Van Durme, Daniel Khashabi, and Yulia Tsvetkov.
\newblock {SemStamp}: A semantic watermark with paraphrastic robustness for
  text generation.
\newblock In \emph{Proceedings of the 2024 Conference of the North American
  Chapter of the Association for Computational Linguistics}, pages 4067--4082,
  Mexico City, Mexico, 2024. Association for Computational Linguistics.
\newblock \doi{10.18653/v1/2024.naacl-long.226}.

\bibitem[Ren et~al.(2024{\natexlab{a}})Ren, Xu, Liu, Cui, Wang, Yin, and
  Tang]{ren2024semark}
Jie Ren, Han Xu, Yiding Liu, Yingqian Cui, Shuaiqiang Wang, Dawei Yin, and
  Jiliang Tang.
\newblock A robust semantics-based watermark for large language model against
  paraphrasing.
\newblock In \emph{Findings of the Association for Computational Linguistics:
  NAACL 2024}, pages 613--625, Mexico City, Mexico, 2024{\natexlab{a}}.
  Association for Computational Linguistics.
\newblock \doi{10.18653/v1/2024.findings-naacl.40}.

\bibitem[Yoo et~al.(2024)Yoo, Ahn, and Kwak]{yoo2024multibit}
KiYoon Yoo, Wonhyuk Ahn, and Nojun Kwak.
\newblock Advancing beyond identification: Multi-bit watermark for large
  language models.
\newblock In \emph{Proceedings of the 2024 Conference of the North American
  Chapter of the Association for Computational Linguistics}, pages 4031--4055,
  Mexico City, Mexico, 2024. Association for Computational Linguistics.
\newblock \doi{10.18653/v1/2024.naacl-long.224}.

\bibitem[Feng et~al.(2025)Feng, Zhang, Zhang, Zhang, and Pan]{feng2025bimark}
Xiaoyan Feng, He~Zhang, Yanjun Zhang, Leo~Yu Zhang, and Shirui Pan.
\newblock {BiMark}: Unbiased multilayer watermarking for large language models.
\newblock In \emph{Proceedings of the 42nd International Conference on Machine
  Learning}, volume 267 of \emph{Proceedings of Machine Learning Research},
  pages 17049--17067. PMLR, 2025.

\bibitem[Tu et~al.(2024)Tu, Sun, Bai, Yu, Hou, and Li]{tu2024waterbench}
Shangqing Tu, Yuliang Sun, Yushi Bai, Jifan Yu, Lei Hou, and Juanzi Li.
\newblock {WaterBench}: Towards holistic evaluation of watermarks for large
  language models.
\newblock In \emph{Proceedings of the 62nd Annual Meeting of the Association
  for Computational Linguistics}, pages 1517--1542, Bangkok, Thailand, 2024.
  Association for Computational Linguistics.
\newblock \doi{10.18653/v1/2024.acl-long.83}.

\bibitem[Pan et~al.(2024)Pan, Liu, He, Gao, Zhao, Lu, Zhou, Liu, Hu, Wen, King,
  and Yu]{pan2024markllm}
Leyi Pan, Aiwei Liu, Zhiwei He, Zitian Gao, Xuandong Zhao, Yijian Lu, Binglin
  Zhou, Shuliang Liu, Xuming Hu, Lijie Wen, Irwin King, and Philip~S. Yu.
\newblock {MarkLLM}: An open-source toolkit for {LLM} watermarking.
\newblock In \emph{Proceedings of the 2024 Conference on Empirical Methods in
  Natural Language Processing: System Demonstrations}, pages 61--71, Miami,
  Florida, USA, 2024. Association for Computational Linguistics.
\newblock \doi{10.18653/v1/2024.emnlp-demo.7}.

\bibitem[Rastogi and Pruthi(2024)]{rastogi-pruthi-2024-revisiting}
Saksham Rastogi and Danish Pruthi.
\newblock Revisiting the robustness of watermarking to paraphrasing attacks.
\newblock In Yaser Al-Onaizan, Mohit Bansal, and Yun-Nung Chen, editors,
  \emph{Proceedings of the 2024 Conference on Empirical Methods in Natural
  Language Processing}, pages 18100--18110, Miami, Florida, USA, November 2024.
  Association for Computational Linguistics.
\newblock \doi{10.18653/v1/2024.emnlp-main.1005}.
\newblock URL \url{https://aclanthology.org/2024.emnlp-main.1005/}.

\bibitem[Chang et~al.(2025)Chang, Hassani, and Shokri]{chang2025smoothing}
Hongyan Chang, Hamed Hassani, and Reza Shokri.
\newblock Watermark smoothing attacks against language models.
\newblock In \emph{Findings of the Association for Computational Linguistics:
  EMNLP 2025}, pages 4915--4941, Suzhou, China, 2025. Association for
  Computational Linguistics.
\newblock \doi{10.18653/v1/2025.findings-emnlp.264}.

\bibitem[Ren et~al.(2024{\natexlab{b}})Ren, Guo, Cao, and Ma]{ren2024riw}
Yubing Ren, Ping Guo, Yanan Cao, and Wei Ma.
\newblock Subtle signatures, strong shields: Advancing robust and imperceptible
  watermarking in large language models.
\newblock In \emph{Findings of the Association for Computational Linguistics:
  ACL 2024}, pages 5508--5519, Bangkok, Thailand, 2024{\natexlab{b}}.
  Association for Computational Linguistics.
\newblock \doi{10.18653/v1/2024.findings-acl.327}.

\bibitem[Chen et~al.(2024)Chen, Bian, Deng, Cai, Li, Zhao, and
  Wong]{chen2024watme}
Liang Chen, Yatao Bian, Yang Deng, Deng Cai, Shuaiyi Li, Peilin Zhao, and
  Kam-Fai Wong.
\newblock {WatME}: Towards lossless watermarking through lexical redundancy.
\newblock In \emph{Proceedings of the 62nd Annual Meeting of the Association
  for Computational Linguistics}, pages 9166--9180, Bangkok, Thailand, 2024.
  Association for Computational Linguistics.
\newblock \doi{10.18653/v1/2024.acl-long.496}.

\bibitem[Lau et~al.(2024)Lau, Niu, Dao, Chen, Foo, and Low]{lau2024waterfall}
Gregory Kang~Ruey Lau, Xinyuan Niu, Hieu Dao, Jiangwei Chen, Chuan-Sheng Foo,
  and Bryan Kian~Hsiang Low.
\newblock Waterfall: Scalable framework for robust text watermarking and
  provenance for {LLMs}.
\newblock In \emph{Proceedings of the 2024 Conference on Empirical Methods in
  Natural Language Processing}, pages 20432--20466, Miami, Florida, USA, 2024.
  Association for Computational Linguistics.
\newblock \doi{10.18653/v1/2024.emnlp-main.1138}.

\bibitem[Yoo et~al.(2023)Yoo, Ahn, Jang, and Kwak]{yoo2023multibit}
KiYoon Yoo, Wonhyuk Ahn, Jiho Jang, and Nojun Kwak.
\newblock Robust multi-bit natural language watermarking through invariant
  features.
\newblock In \emph{Proceedings of the 61st Annual Meeting of the Association
  for Computational Linguistics}, pages 2092--2115, Toronto, Canada, 2023.
  Association for Computational Linguistics.
\newblock \doi{10.18653/v1/2023.acl-long.117}.

\bibitem[Lee et~al.(2024)Lee, Hong, Ahn, Hong, Lee, Yun, Shin, and
  Kim]{lee2024sweet}
Taehyun Lee, Seokhee Hong, Jaewoo Ahn, Ilgee Hong, Hwaran Lee, Sangdoo Yun,
  Jamin Shin, and Gunhee Kim.
\newblock Who wrote this code? watermarking for code generation.
\newblock In \emph{Proceedings of the 62nd Annual Meeting of the Association
  for Computational Linguistics}, pages 4890--4911, Bangkok, Thailand, 2024.
  Association for Computational Linguistics.
\newblock \doi{10.18653/v1/2024.acl-long.268}.

\bibitem[Guan et~al.(2024)Guan, Wan, Bi, Wang, Zhang, Zhou, and
  Sun]{guan2024codeip}
Batu Guan, Yao Wan, Zhangqian Bi, Zheng Wang, Hongyu Zhang, Pan Zhou, and
  Lichao Sun.
\newblock {CodeIP}: A grammar-guided multi-bit watermark for large language
  models of code.
\newblock In \emph{Findings of the Association for Computational Linguistics:
  EMNLP 2024}, pages 9243--9258, Miami, Florida, USA, 2024. Association for
  Computational Linguistics.
\newblock \doi{10.18653/v1/2024.findings-emnlp.541}.

\bibitem[Kim et~al.(2026)Kim, Park, and Han]{kim2026stone}
Jungin Kim, Shinwoo Park, and Yo-Sub Han.
\newblock Marking code without breaking it: Code watermarking for detecting
  {LLM}-generated code.
\newblock In \emph{Findings of the Association for Computational Linguistics:
  EACL 2026}, pages 3990--4002, Rabat, Morocco, 2026. Association for
  Computational Linguistics.
\newblock \doi{10.18653/v1/2026.findings-eacl.207}.

\bibitem[Haber and Stornetta(1991)]{haber1991timestamp}
Stuart Haber and W.~Scott Stornetta.
\newblock How to time-stamp a digital document.
\newblock \emph{Journal of Cryptology}, 3\penalty0 (2):\penalty0 99--111, 1991.
\newblock \doi{10.1007/BF00196791}.

\bibitem[Bellare and Yee(1997)]{bellare1997forward}
Mihir Bellare and Bennet~S. Yee.
\newblock Forward integrity for secure audit logs.
\newblock Technical report, University of California, San Diego, 1997.

\bibitem[Crosby and Wallach(2009)]{crosby2009tamper}
Scott~A. Crosby and Dan~S. Wallach.
\newblock Efficient data structures for tamper-evident logging.
\newblock In \emph{18th USENIX Security Symposium}, pages 317--334, Montreal,
  Canada, 2009. USENIX Association.

\bibitem[Ma and Tsudik(2009)]{ma2009securelogging}
Di~Ma and Gene Tsudik.
\newblock A new approach to secure logging.
\newblock \emph{ACM Transactions on Storage}, 5\penalty0 (1):\penalty0 1--21,
  2009.
\newblock \doi{10.1145/1502777.1502779}.

\bibitem[Muniswamy-Reddy et~al.(2006)Muniswamy-Reddy, Holland, Braun, and
  Seltzer]{muniswamy2006pass}
Kiran-Kumar Muniswamy-Reddy, David~A. Holland, Uri Braun, and Margo Seltzer.
\newblock Provenance-aware storage systems.
\newblock In \emph{USENIX Annual Technical Conference}, pages 43--56, Boston,
  Massachusetts, 2006. USENIX Association.

\bibitem[Hasan et~al.(2009)Hasan, Sion, and Winslett]{hasan2009provenance}
Ragib Hasan, Radu Sion, and Marianne Winslett.
\newblock Preventing history forgery with secure provenance.
\newblock \emph{ACM Transactions on Storage}, 5\penalty0 (4):\penalty0 1--43,
  2009.
\newblock \doi{10.1145/1629080.1629082}.

\bibitem[Braun et~al.(2008)Braun, Shinnar, and Seltzer]{braun2008securing}
Uri Braun, Avraham Shinnar, and Margo Seltzer.
\newblock Securing provenance.
\newblock In \emph{USENIX Workshop on Hot Topics in Security}, San Jose,
  California, 2008. USENIX Association.

\bibitem[Waters et~al.(2004)Waters, Balfanz, Durfee, and
  Smetters]{waters2004encrypted}
Brent~R. Waters, Dirk Balfanz, Glenn Durfee, and D.~K. Smetters.
\newblock Building an encrypted and searchable audit log.
\newblock In \emph{Network and Distributed System Security Symposium}, San
  Diego, California, 2004. Internet Society.

\bibitem[Hoang et~al.(2022)Hoang, Wu, and Yuan]{hoang2022faster}
Viet~Tung Hoang, Cong Wu, and Xin Yuan.
\newblock Faster yet safer: Logging system via fixed-key blockcipher.
\newblock In \emph{31st USENIX Security Symposium}, pages 2389--2406, Boston,
  Massachusetts, 2022. USENIX Association.

\bibitem[E~Lin(1999)]{Delp}
E~Delp E~Lin.
\newblock A review of fragile image watermarks.
\newblock pages 25--29, 1999.
\newblock URL \url{ftp://skynet.ecn.purdue.edu/pub/dist/delp/acm99/paper.pdf}.
\newblock Proceedings of the Multimedia and Security Workshop (ACM Multimedia
  '99) Multimedia Contents, October 1999, Orlando, pp. 25-29.

\bibitem[Neekhara et~al.(2024)Neekhara, Hussain, Zhang, Huang, McAuley, and
  Koushanfar]{neekhara2024facesigns}
Paarth Neekhara, Shehzeen Hussain, Xinqiao Zhang, Ke~Huang, Julian McAuley, and
  Farinaz Koushanfar.
\newblock {FaceSigns}: Semi-fragile watermarks for media authentication.
\newblock \emph{ACM Transactions on Multimedia Computing, Communications and
  Applications}, 20\penalty0 (11):\penalty0 1--21, 2024.
\newblock \doi{10.1145/3640466}.

\bibitem[Zhang et~al.(2024{\natexlab{a}})Zhang, Li, Yu, Xu, Li, and
  Zhang]{zhang2024editguard}
Xuanyu Zhang, Runyi Li, Jiwen Yu, Youmin Xu, Weiqi Li, and Jian Zhang.
\newblock {EditGuard}: Versatile image watermarking for tamper localization and
  copyright protection.
\newblock In \emph{IEEE/CVF Conference on Computer Vision and Pattern
  Recognition}, pages 11964--11974, Seattle, Washington, 2024{\natexlab{a}}.
  IEEE.

\bibitem[Zhang et~al.(2025)Zhang, Tang, Xu, Li, Xu, Chen, Gao, and
  Zhang]{zhang2025omniguard}
Xuanyu Zhang, Zecheng Tang, Zhipei Xu, Runyi Li, Youmin Xu, Bin Chen, Feng Gao,
  and Jian Zhang.
\newblock {OmniGuard}: Hybrid manipulation localization via augmented versatile
  deep image watermarking.
\newblock In \emph{IEEE/CVF Conference on Computer Vision and Pattern
  Recognition}, pages 3008--3018, Nashville, Tennessee, 2025. IEEE.

\bibitem[Sander et~al.(2025)Sander, Fernandez, Durmus, Furon, and
  Douze]{sander2025watermarkanything}
Tom Sander, Pierre Fernandez, Alain~Oliviero Durmus, Teddy Furon, and Matthijs
  Douze.
\newblock Watermark anything with localized messages.
\newblock In \emph{International Conference on Learning Representations}, 2025.
\newblock URL \url{https://openreview.net/forum?id=IkZVDzdC8M}.

\bibitem[Yang et~al.(2025)Yang, Liu, Xu, Xu, Yu, Huang, Wang, and
  He]{yang2025stableguard}
Haoxin Yang, Bangzhen Liu, Xuemiao Xu, Cheng Xu, Yuyang Yu, Zikai Huang,
  Yi~Wang, and Shengfeng He.
\newblock {StableGuard}: Towards unified copyright protection and tamper
  localization in latent diffusion models.
\newblock In \emph{Advances in Neural Information Processing Systems},
  volume~38, pages 14692--14720, 2025.

\bibitem[Yao et~al.(2023)Yao, Zhao, Yu, Du, Shafran, Narasimhan, and
  Cao]{yao2023react}
Shunyu Yao, Jeffrey Zhao, Dian Yu, Nan Du, Izhak Shafran, Karthik Narasimhan,
  and Yuan Cao.
\newblock {ReAct}: Synergizing reasoning and acting in language models.
\newblock In \emph{International Conference on Learning Representations}, 2023.
\newblock URL \url{https://openreview.net/forum?id=WE_vluYUL-X}.

\bibitem[Schick et~al.(2023)Schick, Dwivedi-Yu, Dess{\`i}, Raileanu, Lomeli,
  Hambro, Zettlemoyer, Cancedda, and Scialom]{schick2023toolformer}
Timo Schick, Jane Dwivedi-Yu, Roberto Dess{\`i}, Roberta Raileanu, Maria
  Lomeli, Eric Hambro, Luke Zettlemoyer, Nicola Cancedda, and Thomas Scialom.
\newblock Toolformer: Language models can teach themselves to use tools.
\newblock In \emph{Advances in Neural Information Processing Systems},
  volume~36, pages 68539--68551, 2023.

\bibitem[Liu et~al.(2024)Liu, Yu, Zhang, Xu, Lei, Lai, Gu, Ding, Men, Yang,
  Zhang, Deng, Zeng, Du, Zhang, Shen, Zhang, Su, Sun, Huang, Dong, and
  Tang]{liu2023agentbench}
Xiao Liu, Hao Yu, Hanchen Zhang, Yifan Xu, Xuanyu Lei, Hanyu Lai, Yu~Gu,
  Hangliang Ding, Kaiwen Men, Kejuan Yang, Shudan Zhang, Xiang Deng, Aohan
  Zeng, Zhengxiao Du, Chenhui Zhang, Sheng Shen, Tianjun Zhang, Yu~Su, Huan
  Sun, Minlie Huang, Yuxiao Dong, and Jie Tang.
\newblock {AgentBench}: Evaluating {LLMs} as agents.
\newblock In \emph{International Conference on Learning Representations}, 2024.
\newblock URL \url{https://openreview.net/forum?id=zAdUB0aCTQ}.

\bibitem[Zhou et~al.(2024)Zhou, Xu, Zhu, Zhou, Lo, Sridhar, Cheng, Ou, Bisk,
  Fried, Alon, and Neubig]{zhou2024webarena}
Shuyan Zhou, Frank~F. Xu, Hao Zhu, Xuhui Zhou, Robert Lo, Abishek Sridhar,
  Xianyi Cheng, Tianyue Ou, Yonatan Bisk, Daniel Fried, Uri Alon, and Graham
  Neubig.
\newblock {WebArena}: A realistic web environment for building autonomous
  agents.
\newblock In \emph{International Conference on Learning Representations}, 2024.
\newblock URL \url{https://openreview.net/forum?id=oKn9c6ytLx}.

\bibitem[Deng et~al.(2023)Deng, Gu, Zheng, Chen, Stevens, Wang, Sun, and
  Su]{deng2023mind2web}
Xiang Deng, Yu~Gu, Boyuan Zheng, Shijie Chen, Samuel Stevens, Boshi Wang, Huan
  Sun, and Yu~Su.
\newblock {Mind2Web}: Towards a generalist agent for the web.
\newblock In \emph{Advances in Neural Information Processing Systems},
  volume~36, pages 28091--28114, 2023.

\bibitem[Xie et~al.(2024)Xie, Zhang, Chen, Li, Zhao, Cao, Hua, Cheng, Shin,
  Lei, Liu, Xu, Zhou, Savarese, Xiong, Zhong, and Yu]{xie2024osworld}
Tianbao Xie, Danyang Zhang, Jixuan Chen, Xiaochuan Li, Siheng Zhao, Ruisheng
  Cao, Toh~Jing Hua, Zhoujun Cheng, Dongchan Shin, Fangyu Lei, Yitao Liu,
  Yiheng Xu, Shuyan Zhou, Silvio Savarese, Caiming Xiong, Victor Zhong, and Tao
  Yu.
\newblock {OSWorld}: Benchmarking multimodal agents for open-ended tasks in
  real computer environments.
\newblock In \emph{Advances in Neural Information Processing Systems}, 2024.
\newblock URL \url{https://arxiv.org/abs/2404.07972}.

\bibitem[Liu et~al.(2023)Liu, Xu, and McAuley]{liu2023repobench}
Tianyang Liu, Canwen Xu, and Julian McAuley.
\newblock {RepoBench}: Benchmarking repository-level code auto-completion
  systems, 2023.

\bibitem[Zhang et~al.(2023)Zhang, Chen, Zhang, Keung, Liu, Zan, Mao, Lou, and
  Chen]{zhang2023repocoder}
Fengji Zhang, Bei Chen, Yue Zhang, Jacky Keung, Jin Liu, Daoguang Zan, Yi~Mao,
  Jian-Guang Lou, and Weizhu Chen.
\newblock {RepoCoder}: Repository-level code completion through iterative
  retrieval and generation, 2023.

\bibitem[Zhang et~al.(2024{\natexlab{b}})Zhang, Li, Li, Shi, and
  Jin]{zhang2024codeagentenhancingcodegeneration}
Kechi Zhang, Jia Li, Ge~Li, Xianjie Shi, and Zhi Jin.
\newblock Codeagent: Enhancing code generation with tool-integrated agent
  systems for real-world repo-level coding challenges, 2024{\natexlab{b}}.
\newblock URL \url{https://arxiv.org/abs/2401.07339}.

\bibitem[Zhang et~al.(2024{\natexlab{c}})Zhang, Ruan, Fan, and
  Roychoudhury]{zhang2024autocoderover}
Yuntong Zhang, Haifeng Ruan, Zhiyu Fan, and Abhik Roychoudhury.
\newblock {AutoCodeRover}: Autonomous program improvement, 2024{\natexlab{c}}.

\bibitem[Xia et~al.(2024)Xia, Deng, Dunn, and Zhang]{xia2024agentless}
Chunqiu~Steven Xia, Yinlin Deng, Soren Dunn, and Lingming Zhang.
\newblock Agentless: Demystifying {LLM}-based software engineering agents,
  2024.

\end{thebibliography}

\clearpage
\appendix

\section{Owner Capacity Calculations}
\label{app:owner-capacity}

\subsection{Rank and collision recurrence}

% Initialize \(Q_0(0,0)=1\).  From state \((r,u)\), one further nonzero-row draw gives
Let \(Q_m(r,u)\) be the probability that \(m\) basename slots have coefficient
rank \(r\) and \(u\) distinct coefficient vectors.  Repeated vectors remain separate
matrix rows, and \(P_m(6)=\sum_uQ_m(6,u)\).  Initialize \(Q_0(0,0)=1\).
From state \((r,u)\), one further nonzero-vector draw gives
\begin{equation}
\begin{array}{rcll}
(r,u)&\to&(r,u)    &\text{with probability }u/63,\\
(r,u)&\to&(r,u+1) &\text{with probability }(2^r-1-u)/63,\\
(r,u)&\to&(r+1,u+1)&\text{with probability }(64-2^r)/63.
\end{array}
\label{eq:rank-unique-recurrence}
\end{equation}
Marginalizing over \(u\), with \(P_0(0)=1\) and out-of-range terms zero, gives
\begin{equation}
P_{m+1}(r)=P_m(r)\frac{2^r-1}{63}
+P_m(r-1)\frac{64-2^{r-1}}{63}.
\label{eq:rank-recurrence}
\end{equation}

\subsection{Proof of Theorem~\ref{thm:clean-acceptance}}

\begin{proof}
Assumption~\ref{ass:ideal-prf} makes the \(D_{\mathcal B}\) selection bits independent
Bernoulli variables with exact parameter \(p_o\).  Conditional on \(M=m\), row
generation gives \(m\) independent uniform draws from the 63 nonzero vectors.

If the current draws have rank \(r\) and contain \(u\) distinct rows, exactly \(u\)
choices repeat a row, \(2^r-1-u\) are unseen rows within the span, and \(64-2^r\) lie
outside it.  This proves Eq.~\eqref{eq:rank-unique-recurrence}.

% In a clean batch every retained equation supports \(\ownerid\).  Rank six makes that
% solution unique, and with \(u\) distinct rows its score is
% \(z_{\mathrm{owner}}=\sqrt u\).  Acceptance is therefore equivalent to rank six and
% \(u\ge\lceil\tau_z^2\rceil\).  Averaging \(Q_m(6,u)\) over \(M\) proves
% Eq.~\eqref{eq:clean-acceptance}; summing over all \(u\) proves
% Eq.~\eqref{eq:rank-six-probability}.
In a clean batch, majority aggregation retains one equation for each selected
basename, so \(U=M=m\), even when coefficient vectors repeat.  Every equation supports
\(\ownerid\), giving \(H(\ownerid)=m\) and
\(z_{\mathrm{owner}}=\sqrt m\) for \(m>0\).  Rank six makes this solution the unique
maximizer.  Acceptance is therefore equivalent to rank six and \(m\ge L_\tau\).
Conditional on \(M=m\), its probability is
\(\mathbf 1\{m\ge L_\tau\}P_m(6)\).  Averaging over the binomial distribution of \(M\)
proves Eq.~\eqref{eq:clean-acceptance}; dropping the slot-count gate gives
Eq.~\eqref{eq:rank-six-probability}.
\end{proof}

For exactly six draws, the full-rank probability is
\(\prod_{r=0}^{5}(64-2^r)/63\approx0.322\), so six rows suffice algebraically but are
not a high-probability operating point.  If the \(c_s\) observations of basename \(s\)
are independently erased with probability \(\delta\), the expected number of selected
basenames retaining at least one observation is
\begin{equation}
p_o\sum_{s\in\mathcal S}(1-\delta^{c_s}).
\label{eq:surviving-owner-rows}
\end{equation}
% This expectation measures redundancy before row collisions, not recovery probability.
This is the expected number of surviving basename slots; it does not by itself
determine matrix rank or recovery probability.

\section{Owner Decoding Proofs}
\label{app:owner-decoding}

\subsection{Robust recovery}

\begin{proof}[Proof of Theorem~\ref{thm:owner-guarantees}(i)]
The true label has residual \(d_H(y,A\ownerid)=\|\eta\|_0=e\).  Any other label is
\(\mathbf x=\ownerid\oplus v\) for some \(v\ne0\), and the Hamming triangle inequality
gives
\[
d_H(y,A\mathbf x)
\ge d_H(A\ownerid,A\mathbf x)-d_H(y,A\ownerid)
=\|Av\|_0-e\ge d(A)-e.
\]
If \(2e<d(A)\), this exceeds the true residual, so no other label ties or wins.
Deleting \(s\) rows produces \(A_{\mathrm{ret}}\) with
\(d(A_{\mathrm{ret}})\ge d(A_0)-s\), which yields the stated sufficient condition.
Rank alone is not robustness: for \(A=I_6\), \(d(A)=1\).
\end{proof}

\subsection{Fixed-claim soundness}

% \begin{proof}[Proof of Theorem~\ref{thm:owner-guarantees}(ii)]
% Condition first on selection, row identities, and basename multiplicities.  Within one
% row bucket, simultaneously complementing the mask bit of every contributing basename
% preserves its tie event and flips its non-tied majority.  Conditional on retention, the
% majority bit is therefore fair.  Different row buckets contain disjoint basename inputs,
% so their tie indicators and retained majorities are independent.  Conditioning on any
% retained set leaves \(U\) independent fair right-hand sides.
% 
% For fixed \(\mathbf w_c\), each predicted value is constant conditional on its row.
% Consequently,
% \[
% H_c\sim\operatorname{Binomial}(U,1/2).
% \]
% Solving
% \(z_{\mathrm{owner}}(\mathbf w_c)\ge\tau_z\) gives
% \(H_c\ge h_\tau(U)\); summing the binomial mass above this threshold proves
% Eq.~\eqref{eq:owner-false-accept}.  The bound is conservative because acceptance also
% requires rank six, a unique optimum, and equality with the predeclared claim.
% \end{proof}
\begin{proof}[Proof of Theorem~\ref{thm:owner-guarantees}(ii)]
Condition on the fixed stream and all non-mask domain outputs, including selection,
coefficient vectors, and family-specific orientations.  Mask invariance fixes the
replay roles and basename slots.  After orientation removal, observation \(i\) in
slot \(s\) has right-hand side \(y_{s,i}=q_{s,i}\oplus m_s\), with fixed
\(q_{s,i}\) and independent fair basename masks \(m_s\).  All observations in a slot
share \(m_s\), even across READ families.  Flipping this mask preserves ties and
complements every non-tied majority.  Retention is therefore mask-independent, and
each retained majority is a fixed bit XOR \(m_s\).  Distinct slots have independent
masks even when their coefficient vectors agree.  Conditional on the retained set,
the \(U\) right-hand sides are thus independent and fair.

The predicted bits for fixed \(\mathbf w_c\) are constant under this conditioning,
so \(H_c\sim\operatorname{Binomial}(U,1/2)\).  Solving
\(z_{\mathrm{owner}}(\mathbf w_c)\ge\tau_z\) gives \(H_c\ge h_\tau(U)\).
Summing this binomial tail proves Eq.~\eqref{eq:owner-false-accept}; the additional
rank, uniqueness, and claim-equality gates can only reduce acceptance.
\end{proof}

\section{Integrity Proofs and Boundary Cases}
\label{app:integrity-proofs}

\subsection{Clean replay}

\begin{proof}[Proof of the clean-replay clause in Theorem~\ref{thm:integrity-localization}]
We induct on public prefixes.  Generator and verifier begin with the same
genesis symbol, empty segment, and counters.  Equal states serialize the same next
payload and predict the same role subtype.  Consuming the unchanged successful carrier
preserves the common state.  The induction applies to ordinary and group roles, and the
equal final suffix and seal sequence give the same terminal role.
\end{proof}

\subsection{Single-segment and group localization}

\begin{proof}[Proof of the edit clauses in Theorem~\ref{thm:integrity-localization}]
Let \(x\) and \(x'\) be the old and recomputed ordinary inputs.  Injective canonical
encoding and the changed payload give \(x'\ne x\).  Global freshness makes the new
digest uniform conditioned on the old visible subtype, so it matches that subtype with
probability at most \(q_{12,\lambda}\), proving
Eq.~\eqref{eq:ordinary-localization-bound}.

For a completed pair, let \(d_k\) and \(d'_k\) be the old and new full ordinary digests.
The event \(d'_k=d_k\) has probability \(2^{-\lambda}\).  Outside this event, the group
input changes and is globally fresh by hypothesis.  Avoiding both reports then requires
an ordinary residue match and a conditionally fresh group residue match, each with
probability at most \(q_{12,\lambda}\).  Adding the full-digest collision event gives
Eq.~\eqref{eq:group-localization-bound}.
\end{proof}

The terminal role is deliberately excluded from this multiplication: a quantitative
factor would require both a collision bound for \(H_\theta\) and global freshness of the
terminal-domain input.  Likewise, insertion, deletion, boundary-changing, seal-carrier,
compound, and projection-preserving edits are not reduced to the single-segment theorem.

\section{Protocol Pseudocode and Replay Details}
\label{app:protocol-details}

\begin{table}[H]
  \caption{Visible-only verification contract.}
  \label{tab:visible-contract}
  \small
  \begin{tabularx}{\columnwidth}{@{}>{\bfseries}p{0.22\columnwidth}X@{}}
    \toprule
    Retained & Action class, READ subtype, normalized basename or critical target,
    public status, visible order, and a profile's public carrier header when present. \\
    Excluded & Hidden reasoning, logits, observation contents beyond public linkage
    and status, private wrapper state, debug metadata, and sidecars.  Native action
    text remains visible; this includes OpenDev's published keyed carrier header. \\
    \bottomrule
  \end{tabularx}
\end{table}

\begin{table}[H]
\centering
\caption{Public READ alphabets.}
\label{tab:alphabets}
\small
\begin{tabular}{@{}lll@{}}
\toprule
Family & Q6 owner pair & Q12-only additions \\
\midrule
Explore & \texttt{dir} / \texttt{meta} & \texttt{tree}, \texttt{files} \\
Search  & \texttt{content} / \texttt{name} & \texttt{symbol}, \texttt{import} \\
Locate  & \texttt{range} / \texttt{whole} & \texttt{head}, \texttt{tail} \\
\bottomrule
\end{tabular}
\end{table}

\begin{algorithm}[H]
\caption{\tool{} reference online generation}
\label{alg:embed}
\label{alg:owner-embed}
\label{alg:seal-embed}
\small
\begin{algorithmic}[1]
\Require online agent/environment loop; key \(K\); owner \(\ownerid\); configuration
\(\theta\); owner threshold \(\bar p_o\)
\Ensure released marked trajectory \(\tau_w\)
\State initialize filename-once owner state and public integrity state, with \(c_{\mathrm{prev}}=0\)
\While{the agent loop has not terminated}
  \State receive the next agent-proposed action \(a\)
  \If{\(a\) is a selected first eligible READ of basename \(s\)}
    \State rewrite only its Q6 subtype using the declared framework profile
  \EndIf
  \State execute and publish \(a\); update the current public segment
  \State \(c\gets\mathbf1[\mathsf{class}(a)\in\mathcal C_{\mathrm{crit}}]\);
  \(b\gets c\land\neg c_{\mathrm{prev}}\); \(c_{\mathrm{prev}}\gets c\)
  \If{\(b\)}
    \State emit its ordinary Q12 role; clear the segment buffer
    \If{two emitted ordinary roles complete a group}
      \State emit the corresponding group Q12 role
    \EndIf
  \EndIf
\EndWhile
\State emit the terminal Q12 role over the final suffix and public seal state
\State \Return \(\tau_w\)
\end{algorithmic}
\end{algorithm}

\begin{algorithm}[H]
\caption{\tool{} reference visible-only verification}
\label{alg:verify}
\label{alg:owner-decode}
\label{alg:seal-replay}
\small
\begin{algorithmic}[1]
\Require observed batch \(\widetilde{\mathcal B}\); key \(K\); claim
\(\mathbf w_c\); threshold \(\tau_z\); configuration \(\theta\)
\Ensure recovered label or abstention; owner decision; localized alarms
\State apply the public transducer and replay role placement from each stream prefix
\State recover Q6 equations from replay-assigned non-seal READs
% \State majority-aggregate equal rows; erase ties; score all \(64\) labels
\State majority-aggregate each basename slot; erase ties; score all \(64\) labels
\State apply the rank, uniqueness, claim, and evidence gates
\For{each observed trajectory}
  \State replay ordinary, group, and terminal roles from the public state machine
  \State report every mismatched, missing, or status-invalid expected role
\EndFor
\State \Return owner output, diagnostics, and \(\mathcal A_{\mathrm{int}}\)
\end{algorithmic}
\end{algorithm}

\begin{table}[H]
  \caption{Reference visible-replay transitions at an expected role.}
  \label{tab:replay-transitions}
  \small
  \begin{tabularx}{\columnwidth}{@{}p{0.26\columnwidth}X@{}}
    \toprule
    Visible condition & Replay transition \\
    \midrule
    Expected Q12 subtype, success & Consume the record and advance with its observed
    subtype. \\
    Wrong subtype or class & Consume, alarm, and use the observed subtype or
    \(\mathsf{INVALID}\). \\
    Failed or unknown status & Consume, alarm, and preserve the public role count. \\
    End of stream & Consume \(\mathsf{MISSING}\) and alarm. \\
    Record after terminal & Report a trailing-record structural alarm. \\
    \bottomrule
  \end{tabularx}
\end{table}

A deleted role therefore shifts the next visible record into that public position in
the reference replay; the
parser never silently skips it.  A non-role READ insertion is not recognizable from
private origin metadata.  It is detected only through a changed protected payload,
shifted role, or a later subtype mismatch.  Projection-preserving edits can replay
cleanly by design.

\section{Framework Adapter Details}
\label{app:framework-details}

\subsection{Core duties and failure semantics}

The reference adapter hooks the agent loop before execution, after the public outcome,
and at termination.  It must (i) classify public records as READ subtypes, critical classes,
terminal classes, or \(\mathsf{OTHER}\); (ii) extract an explicitly visible regular-file
target; (iii) realize any requested Q6/Q12 subtype through a read-only action; and
(iv) serialize action, outcome, status, and order exactly as replay observes them.

The owner hook never converts a write or test into a carrier.  Integrity carriers are
additional actions on the ordinary execution path.  In the reference contract, a failed
carrier alarms and group placement advances over attempted ordinary roles.  The evaluated
profiles below specify their actual event filters and state advancement.

\subsection{SWE-agent}

SWE-agent represents a decision as textual action handled by its agent-to-computer
interface.  Before execution, the adapter recognizes public EXPLORE, SEARCH, and LOCATE
families and accepts an owner candidate only when it explicitly names a regular file in
the repository.  It changes only the command template required by the selected Q6 member,
leaving the target and read-only semantics unchanged.

The evaluated replay scans public action text in order.  It appends the current event's
payload before closing on a critical action whose preceding boundary-state class is
noncritical, excluding malformed unexecuted actions from this trigger.  Consumed seals
leave that class unchanged.  A publicly reconstructed selected owner READ also leaves
the boundary-state class unchanged, while other non-seal actions update it; owner READs
still contribute to the segment's READ counts.  The immediately following Q12 READ is
the ordinary carrier, followed by a group READ after every second consumed ordinary.
If the expected Q12 READ is absent, replay records a missing role and continues scanning.
The final visible Q12 READ is reserved as the terminal commitment over the remaining
suffix; its absence raises a terminal report.  All roles use the ordinary computer
interface, and verification never reads \texttt{extra\_info} records.

\subsection{OpenHands}

OpenHands persists typed actions and observations in an event stream.  The owner hook
intercepts only agent-sourced events entering \texttt{EventStream.add\_event}, preserves
event identity and tool-call attributes, and realizes the selected Q6 member before
serialization.  The integrity hook runs after the corresponding public observation and
executes scheduled Q12 roles through the normal runtime.  The evaluated profile joins
observations by their public cause identifier.  A non-seal action with no observation,
an error observation, or an observation beginning with \texttt{ERROR} does not enter
Q12 state and leaves the previous-critical flag unchanged.  Every admitted READ or
critical action contributes its payload; an admitted noncritical action resets the flag.
A critical action closes the buffer immediately only when that flag was false.

Replay consumes the next action at an expected ordinary or group position, compares its
subtype, and advances even after a mismatch.  Consuming an ordinary clears the buffer;
every second ordinary schedules a group.  Expected carriers are processed before the
non-seal success filter, so this evaluated replay does not separately test their
observation status.  A visible \texttt{finish} action schedules a terminal READ over the
remaining suffix, including an empty suffix; end-of-stream raises a missing-role report
only if a role was already expected.  Runtime exits that bypass the finish hook may
therefore leave no terminal protector.  The measured closure coverage is reported in
Section~\ref{sec:discussion-visibility}.

\subsection{OpenDev}

OpenDev exposes structured tool calls at its dispatcher.  A selected natural READ is
rewritten before dispatch, and the same arguments are placed in the public tool-call
message and sent to the executor.  Natural actions and carriers receive monotonically
increasing public indices, resolving multiple calls within one response.  Diagnostic
JSON is retained for tests but excluded from verification.  Its Q12 Bash carrier also
contains a visible authenticated role/ordinal header.  The evaluated verifier retains
this header and checks its 64-bit HMAC tag over role, ordinal, and subtype against the
visible command template.  It uses the resulting role to recognize carriers.

Among successful nonterminal natural calls, READs and other noncritical calls reset the
previous-critical flag.  Critical calls append their payload and close immediately only
when the flag was false.  The next call must carry the expected ordinary header; a
missing or failed carrier is reported.  A successful ordinary carrier advances the
ordinary sequence and clears the buffer, and every second such carrier schedules the
following group call.  Replay selects the last unconsumed terminal header with ordinal
\(m+1\) after the last successful natural call and checks it against the suffix and
accumulated commitments.  A missing terminal is reported.  Thus this profile includes
an authenticated public header in addition to the Q12 subtype commitments.

\subsection{Canonicalization and validation checklist}

The adapters share semantics rather than concrete commands.  The owner selector uses a
lowercase normalized basename only after verifying an explicit regular-file target.
Critical targets come only from visible edit or execution arguments; absent targets use
the public null symbol.  A seal's operational file target does not enter
\(P_{\mathrm{int}}\), so changing only that target is projection-preserving.

Reference conformance requires unique action and observation joins, stable public order,
visible failure status, and read-only realization of all Q6/Q12 subtypes.  Profile
vectors cover selection, rows, the masked and direct owner variants,
ordinary/group/terminal seals, and empty inputs.  Round-trip tests remove auxiliary
metadata before replay and compare owner equations, rank, decoded label, and integrity
reports with the profile's original replay; published native action text is retained.

\section{Watermark Details: Released Trajectories}
\label{app:watermark-details}

This appendix shows sanitized excerpts from two real SWE-agent trajectories in the
frozen evaluation.  Machine-specific path prefixes are replaced by
\texttt{<workspace>}, task and run identifiers are omitted, and long fields are
shortened; released action text and order are otherwise retained.  Bold field labels
and colored role callouts are paper typography and are not inputs to the visible-only
verifier.

\subsection{From Prompt Context to a Released Record}

The first example connects unchanged prompt excerpts to one real released
action--observation record.  Longer interface, demonstration, and issue material is
omitted.

\begin{trajdetail}{(a) Unchanged Prompt Context (Excerpts)}
SYSTEM PROMPT (unchanged excerpt):
SETTING: You are an autonomous programmer, and you're working directly
in the command line with a special interface.
The special interface consists of a file editor that shows you 100 lines
of a file at a time.
[...]

USER TASK (unchanged excerpt):
We're currently solving the following issue within our repository.
Here's the issue text:
ISSUE:
Q(...) & Exists(...) raises a TypeError
[...]
\end{trajdetail}

\tool{} modifies neither prompt.  In this record, its intervention occurs after the
agent proposes an eligible action and before that action is executed and published.

\begin{trajdetail}{(b) One Real Prompt-to-Trajectory Record}
AGENT PROPOSAL (generation-side metadata; not released):
  open reproduce.py
        |
        |  TrajMark wrapper: action-layer rewrite only
        v
RELEASED ACTION, position 14:
  cat "$(pwd)/reproduce.py" | head -n 40
VISIBLE OBSERVATION begins:
  from django.db.models import Q
  from django.db.models.expressions import Exists
  ...
NEXT RELEASED ACTION, position 15:
  python reproduce.py
\end{trajdetail}

\noindent\textcolor{trajdetailframe}{\textbf{Owner carrier (position 14).}}
This natural \textsc{Read} targets the same file as the proposal and occupies the same
trajectory position; no extra action is introduced.

\subsection{Integrity Seals from the Patch to the Trace Tail}

The next excerpt is a continuous region from a second released trajectory.  It moves
from environment setup to reproduction, source inspection, a patch, and validation.
Positions 22 and 23 are visible read-only actions placed immediately after the edit.

\begin{trajdetail}{(c) Compact View of Released Records 16--25 (Observations Shortened)}
16  python -c "import astropy; print(astropy.__version__)" 2>&1 ||
      echo "Not installed"
17  python setup.py build_ext --inplace 2>&1 | tail -20
18  python reproduce_bug.py
19  open astropy/nddata/mixins/ndarithmetic.py 520
20  goto 500
21  edit 515:527
      [patch body omitted]
    end_of_edit
22  sed -n '1,40p' <workspace>/astropy/nddata/mixins/ndarithmetic.py
23  find $(dirname -- <workspace>/astropy/nddata/mixins/ndarithmetic.py)
      -maxdepth 2 -type d -print | head -n 40
24  python reproduce_bug.py
25  python -m pytest astropy/nddata/mixins/tests/test_arithmetic.py -v
      2>&1 | tail -40
\end{trajdetail}

\noindent\textcolor{trajdetailframe}{\textbf{Ordinary seal (position 22); group seal
(position 23).}}
In the verifier-visible projection, their records contain the command and normal
environment output: the first returns the opening lines of the edited source file, and
the second returns nearby directories.

The final excerpt comes from the tail of the same trajectory.  It retains the two
read-only records, cleanup, submission, and the final visible read in their released
order.

\begin{trajdetail}{(d) Tail of the Released Action Stream}
37  find $(dirname -- "$(pwd)/reproduce_bug.py") -maxdepth 1
      -name $(basename -- "$(pwd)/reproduce_bug.py") -print | head -n 40
38  find $(dirname -- "$(pwd)/reproduce_bug.py") -maxdepth 1
      -name $(basename -- "$(pwd)/reproduce_bug.py") -print | head -n 40
39  rm reproduce_bug.py
40  submit
41  find . -maxdepth 2 -name '*.py' -print | head -n 40
\end{trajdetail}

\noindent\textcolor{trajdetailframe}{\textbf{Ordinary seal (position 37); group seal
(position 38); terminal seal (position 41).}}
These callouts explain the otherwise ordinary-looking read commands and are not inputs
to verification.

Together, the excerpts show the verifier-visible action stream.  Verification derives
its inputs from released action text and order together with the associated public
execution status defined in Appendix~\ref{app:protocol-details}.  Ownership marking
changes one existing read, while integrity seals appear as additional read-only actions
interleaved with the agent's work.  The equations, replay rules, and adapter contracts are defined once in
Section~\ref{sec:trajmark} and Appendices~\ref{app:protocol-details}--\ref{app:framework-details};
they are not repeated here.

\clearpage
\section{Disaggregated Ownership Robustness}
\label{app:ownership-results}

Figure~\ref{fig:verified-owner-id-recovery-method-comparison} pools the three coding
agents within each LLM column to keep the main comparison compact.  The four figures
below retain every agent--LLM condition.  Each is one row of the main figure expanded to
a three-by-three grid, with agents as rows, LLMs as columns, and methods as curves.

\subsection{Random deletion}
\begin{figure}[H]
  \centering
  \includegraphics[width=\textwidth]{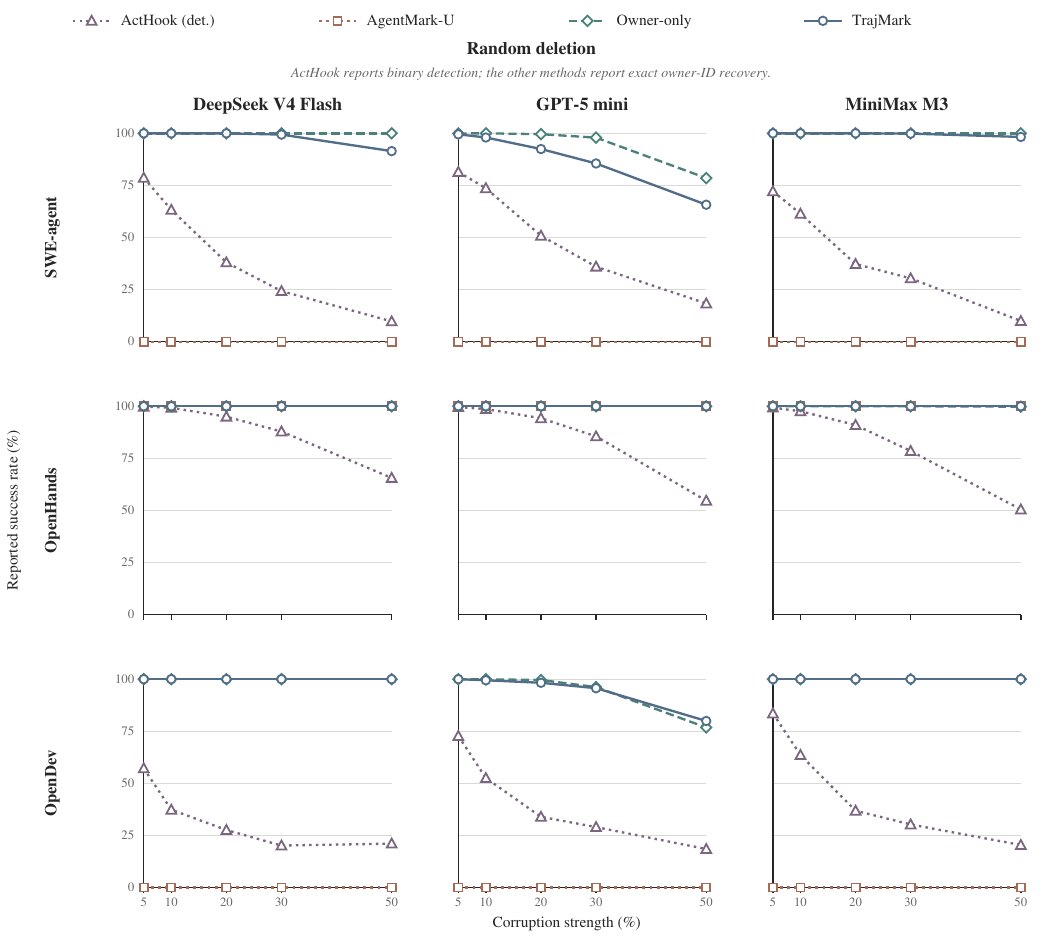}
  \Description{Nine line charts show random-deletion watermark outcomes. Rows are the
  three coding agents, columns are the three LLMs, and every panel compares ActHook-style,
  AgentMark-U, Owner-only, and TrajMark over five corruption strengths.}
  \caption{Disaggregated random-deletion results corresponding to the first row of
  Figure~\ref{fig:verified-owner-id-recovery-method-comparison}.  Each point contains
  1,500 corrupted \(B=50\) batches: 500 deterministic trials for each of three seeds.
  ActHook-style reports binary hook detection; the three ID-bearing methods require
  exact recovery of \texttt{0x3f}.}
  \label{fig:owner-outcomes-delete-full}
\end{figure}

Deletion removes visible evidence.  The separate panels show that the pooled trend is
not caused by a single agent: AgentMark-U loses indexed evidence broadly, whereas the
filename-indexed methods degrade mainly in the lowest-margin SWE-agent/GPT-5 mini and
OpenDev/GPT-5 mini conditions.
\FloatBarrier

\subsection{Random insertion}
\begin{figure}[H]
  \centering
  \includegraphics[width=\textwidth]{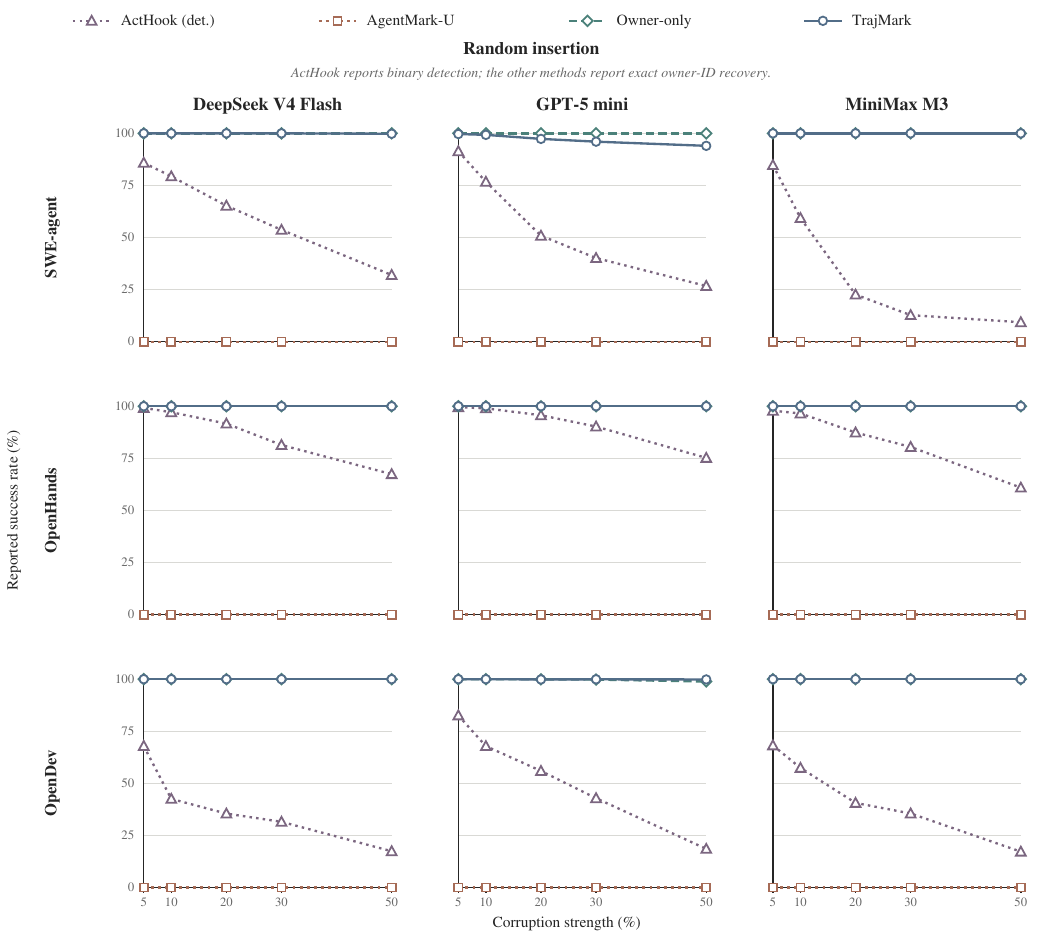}
  \Description{Nine line charts show random-insertion watermark outcomes. Rows are the
  three coding agents, columns are the three LLMs, and every panel compares the four
  evaluated watermark methods.}
  \caption{Disaggregated random-insertion results corresponding to the second row of
  Figure~\ref{fig:verified-owner-id-recovery-method-comparison}.  Each point contains
  1,500 corrupted \(B=50\) batches under the same seed and success definitions as
  Figure~\ref{fig:owner-outcomes-delete-full}.}
  \label{fig:owner-outcomes-insert-full}
\end{figure}

Unrelated insertions normally add no valid filename equation, so Owner-only and
\tool{} retain the evidence already present.  AgentMark-U instead depends on global
action positions and therefore fails under the induced index shift in every panel.
\FloatBarrier

\subsection{Random replacement}
\begin{figure}[H]
  \centering
  \includegraphics[width=\textwidth]{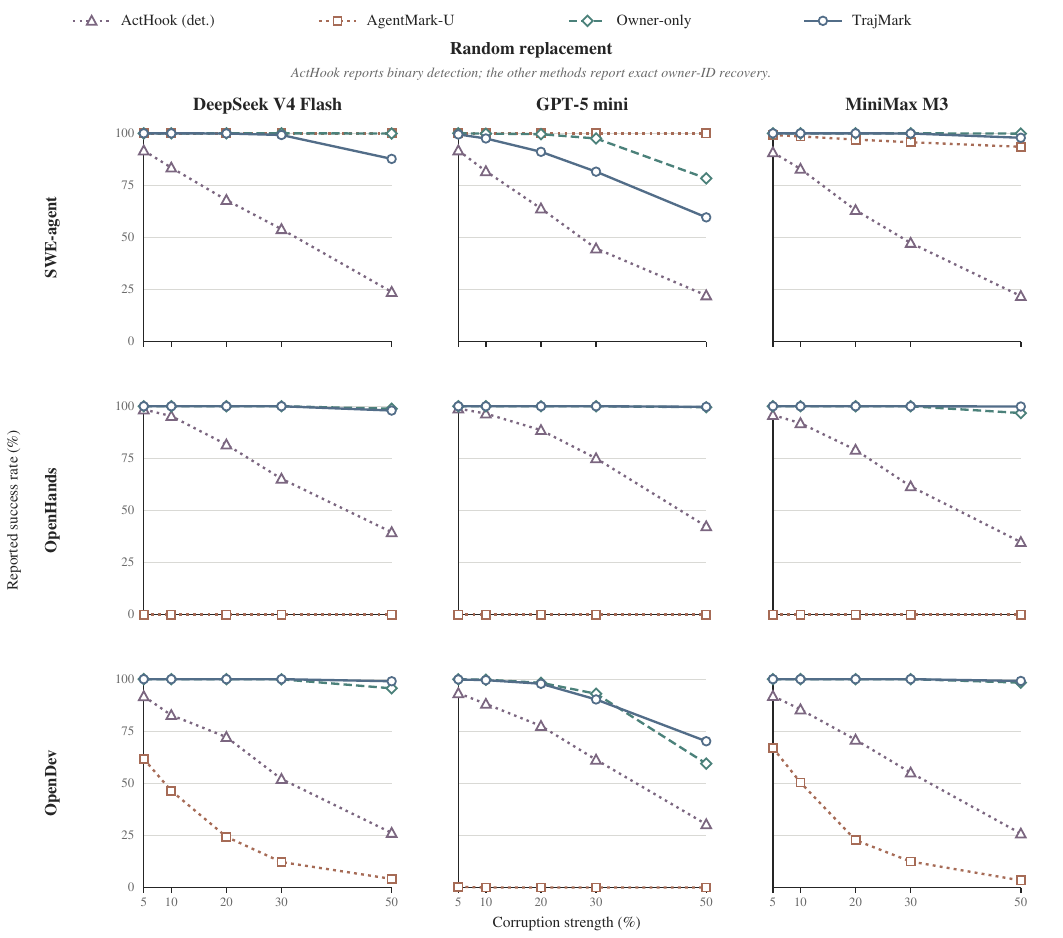}
  \Description{Nine line charts show random-replacement watermark outcomes. Rows are
  agents, columns are LLMs, and every panel compares the four evaluated methods over
  five corruption strengths.}
  \caption{Disaggregated random-replacement results corresponding to the third row of
  Figure~\ref{fig:verified-owner-id-recovery-method-comparison}.  Each point contains
  1,500 corrupted \(B=50\) batches.  Replacement may both remove an existing equation
  and introduce a conflicting visible choice.}
  \label{fig:owner-outcomes-replace-full}
\end{figure}

The full grid exposes the model-dependent evidence margin hidden by pooling: the largest
owner-recovery loss occurs in the GPT-5 mini cells, while the remaining filename-indexed
conditions stay near their clean decisions through moderate corruption.
\FloatBarrier

\subsection{Exact-budget white-box carrier deletion}
\begin{figure}[H]
  \centering
  \includegraphics[width=\textwidth]{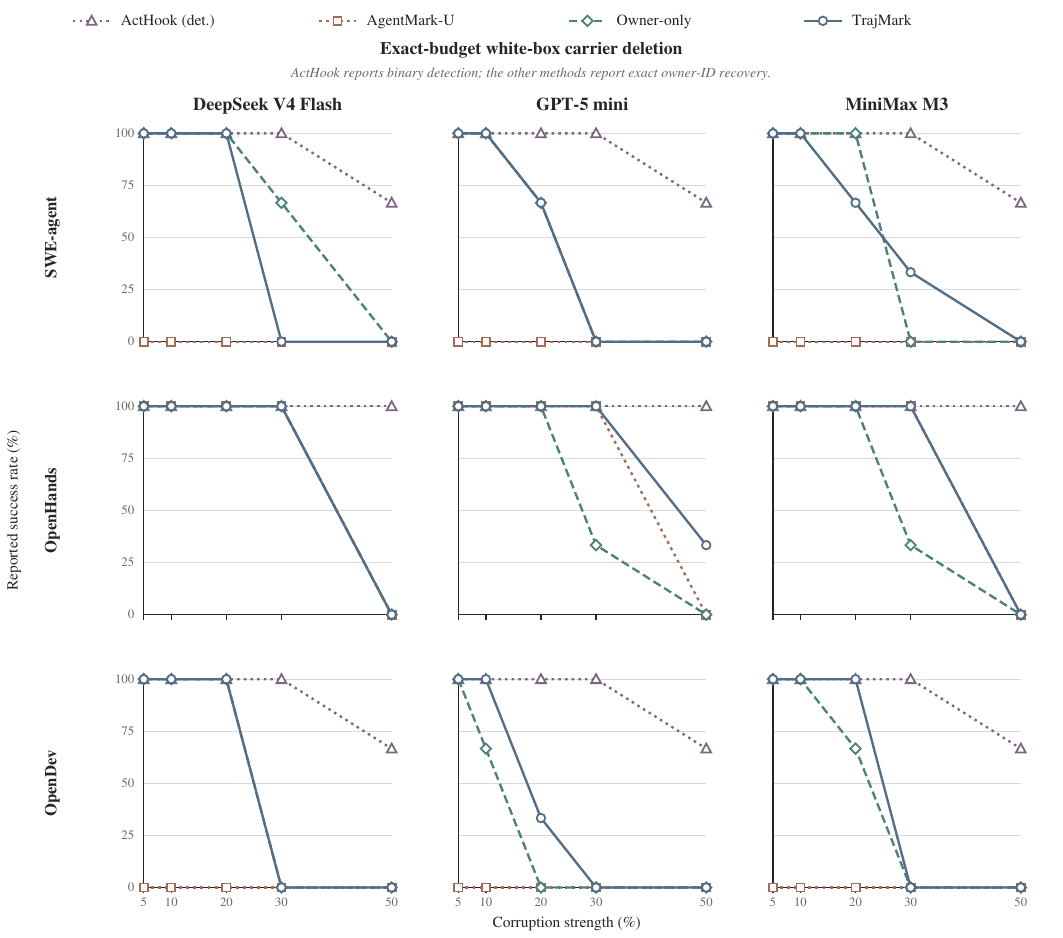}
  \Description{Nine line charts show exact-budget white-box deletion of each method's
  known owner carriers. Rows are agents, columns are LLMs, and every panel compares the
  four methods.}
  \caption{Disaggregated exact-budget white-box carrier deletion corresponding to the
  fourth row of Figure~\ref{fig:verified-owner-id-recovery-method-comparison}.  Each
  point contains the three deterministic seed batches for its agent--LLM--method cell.
  The attacker deletes exactly \(k=\lfloor\delta N+0.5\rfloor\) known owner-carrier
%   occurrences.  For an ID-bearing method it exhaustively searches the 63 nonzero
%   six-bit null directions for a minimum-cost rank-breaking cut; for ActHook-style it
%   removes keyed hooks.  Q12 seals are excluded.}
  occurrences from the visible trajectories and reruns evidence extraction and
  native decoding. Each result uses the strongest of up to 63 deterministic
  candidate deletion sets. Clean Q12 seal positions are excluded.}
  \label{fig:owner-outcomes-whitebox-full}
\end{figure}

% Unlike random corruption, the white-box attack spends every deletion on owner evidence.
% The per-cell curves show where each batch first loses enough independent equations for
% rank-six exact recovery; they are a carrier-aware boundary rather than a random-action
% robustness estimate.
For each nonzero \(v\in\mathbb{F}_2^6\), the attacker orders clean owner slots by
\(a^\top v=1\) first, then by increasing occurrence count and slot name.
Occurrences within a slot are ordered by task identifier and visible position;
the first \(k\) form a candidate deletion set. ActHook-style uses 63 deterministic
hash-based orderings of its known hooks. Duplicate deletion sets are evaluated
once. After physical deletion and native decoding, a failed recovery or hook
detection takes precedence; candidate index breaks ties. The curves therefore
measure post-deletion decoding, including changes to position indices and
first-occurrence carrier selection.
\FloatBarrier

\end{document}